\newtheorem{theorem}{Theorem}[section]
\newtheorem{lemma}[theorem]{Lemma}
\newtheorem{definition}[theorem]{Definition}
\newtheorem{corollary}[theorem]{Corollary}
\newtheorem{remark}[theorem]{Remark}
\newcommand\numberthis{\addtocounter{equation}{1}\tag{\theequation}}
\newcommand{\bd}[1]{\mathbf{#1}}
\newcommand{\R}{\mathbb{R}}
\newcommand{\Jn}{\bd{J}^\bd{n}}
\DeclareMathOperator{\vect}{vec}
\DeclareMathOperator{\supp}{supp}
\DeclareMathOperator{\argmin}{argmin}
\newcommand{\vertiii}[1]{{\left\vert\kern-0.25ex\left\vert\kern-0.25ex\left\vert #1 
		\right\vert\kern-0.25ex\right\vert\kern-0.25ex\right\vert}}
\title{Johnson-Lindenstrauss Embeddings with Kronecker Structure}
\newcommand{\footremember}[2]{
	\footnote{#2}
	\newcounter{#1}
	\setcounter{#1}{\value{footnote}}
}
\author{
	Stefan Bamberger\footremember{tum}{stefan.bamberger@tum.de, Department of Mathematics, Technical University of Munich}
	\and Felix Krahmer\footremember{tum2}{felix.krahmer@tum.de, Department of Mathematics, Technical University of Munich}
	\and Rachel Ward\footremember{uta}{rward@math.utexas.edu, Department of Mathematics, University of Texas at Austin}
}
\begin{document}

	\maketitle
	
	\begin{abstract}
		We prove the Johnson-Lindenstrauss property for matrices $\Phi D_\xi$ where $\Phi$ has the restricted isometry property and $D_\xi$ is a diagonal matrix containing the entries of a Kronecker product $\xi = \xi^{(1)} \otimes \dots \otimes \xi^{(d)}$ of $d$ independent Rademacher vectors.
		Such embeddings have been proposed in recent works for a number of applications concerning compression of tensor structured data, including the oblivious sketching procedure by Ahle et al.~for approximate tensor computations.
		For preserving the norms of $p$ points simultaneously, our result requires $\Phi$   to have the restricted isometry property for sparsity $C(d) (\log p)^d$. In the case of subsampled Hadamard matrices, this can improve the dependence of the embedding dimension on $p$ to $(\log p)^d$ while the best previously known result required $(\log p)^{d + 1}$. That is, for the case of $d=2$ at the core of the oblivious sketching procedure by Ahle et al., the scaling improves from cubic to quadratic. We provide a counterexample to prove that the scaling established in our result is optimal under mild assumptions.  
	\end{abstract}

	
	\section{Introduction}
	
	Johnson-Lindenstrauss (JL) embeddings, first discovered in \cite{jloriginal} and subsequently re-introduced in distributional form  \cite{FM, linial1995geometry, arriaga2006algorithmic, achlioptas2001database,  Dasgupta}, provide a random embedding of finitely many points into a lower-dimensional vector space while preserving the structure of these points, i.e.~their pairwise Euclidean distances. This property has been proven to be useful for reducing the complexity of algorithms in many fields, such as numerical linear algebra or machine learning.
	
	In particular, the technique of \emph{sketching} --  for which \cite{sketching_book} provides a detailed overview -- uses dimension reduction transforms such as JL embeddings to reduce the complexity of problems in numerical linear algebra. For example, instead of solving the classical linear regression problem $\min_{x} \|A x - b\|_2^2$, one can apply a Johnson-Lindenstrauss embedding $\Phi$ to $b$ and the columns of $A$ which leads to a smaller-dimensional problem $\min_{x} \|\Phi A x - \Phi b\|_2^2$ which can often  be solved more efficiently.  The Johnson-Lindenstrauss assumption is a simple sufficient condition under which the solution of the reduced problem is guaranteed to yield a good approximation to the original problem \cite{sarlos2006improved}. 
	
	In response to the driving application of improving algorithmic complexity of sketched linear algebra  problems at massive scale,  a line of research on \emph{fast} Johnson-Lindenstrauss embeddings emerged, concerning the construction and analysis of random matrices $\Phi$ with the Johnson-Lindenstrauss property and which also have structure allowing for fast matrix-vector multliplication.  This analysis was initiated with the fast JL transform introduced in \cite{fastjl}, in the form of a randomly row-subsampled discrete Hadamard matrix with randomized column signs.  This construction was later improved and refined in  \cite{ jlcirc, ailib}, and ultimately sharpened to the best-known embedding power in \cite{riptojl} by establishing a near-equivalence between the  Johnson-Lindenstrauss embedding property and a deterministic restricted isometry property \cite{riptojl}. Recently, this line of work found new energy following the work \cite{BBK18} which proposed the use of a row-subsampled discrete Hadamard matrix with column signs randomized according to a Kronecker-structured Rademacher vector, and conjectured that such an embedding satisfies the Johnson-Lindenstrauss property.  The Kronecker structure allows for even faster matrix-vector multiplication when applied to data points with Kronecker structure themselves, as arise naturally when dealing with multidimensional data arrays  (see, for example, applications to kernel methods with polynomial sketching \cite{ahle2020oblivious}, and solving least squares problems with tensor structure \cite{kronecker_jl, iwen2019lower}).  
	Indeed, suppose we want to embed a data point $x = x^{(1)} \otimes \dots \otimes x^{(d)} \in \R^{n^d} = \R^N$ which is a Kronecker product of $d$ data vectors, each of dimension $n$.  If the embedding matrix $\Phi \in \R^{m \times N}$ itself has Kronecker structure $\Phi = \Phi^{(1)} \otimes \dots \otimes \Phi^{(d)}$ where the dimensions of the factors of $\Phi$ correspond to the factor dimensions of $x$, then the matrix-vector multiplication $\Phi x$ can be factored as $\Phi x = (\Phi^{(1)} x^{(1)}) \otimes \dots \otimes (\Phi^{(d)} x^{(d)})$, and \emph{can be computed factor by factor, without constructing $x$ explicitly}.  Because the Kronecker product of discrete Hadamard matrices is itself a discrete Hadamard matrix, embedding matrices in the form of discrete Hadamard matrices with Kronecker-structured random column signs fall within this framework, and it is natural to study the embedding power of such transforms.   In this paper, we improve, simplify, and generalize the current embedding results for the Kronecker Johnson-Lindenstrauss embedding \cite{kronecker_jl, ahle2020oblivious, malik2020guarantees, iwen2019lower} by generalizing an approach from \cite{riptojl} on near-equivalence between Johnson-Lindenstrauss property and the restricted isometry property to JL embeddings with Kronecker structure to higher-degree tensor embeddings.

	\subsection{Background and prior work}
	Recall the distributional version of the JL Lemma: for any $\epsilon > 0$ and $\eta < 1/2$ and positive integer $N$, there exists a distribution over $\mathbb{R}^{m \times N}$ such that for a fixed unit-length vector $x \in \mathbb{R}^N$ and for a random matrix $\Phi$ drawn from this distribution with $m = O(\epsilon^{-2} \log(1/\eta))$,
	\begin{equation}
		\label{JL:dist}
		\mathbb{P}( \left| \| \Phi x \|_2^2 - 1 \right| > \epsilon) < \eta.
	\end{equation}
	The dependence $m = O(\epsilon^{-2} \log(1/\eta))$, as achieved by (properly normalized) random matrices with independent and identically distributed subgaussian entries  \cite{Dasgupta},  is tight, as shown recently in \cite{larsen2017optimality} improving on a previous (nearly-tight) lower bound \cite{alon}.  
	
	For a given $\Phi: \mathbb{R}^N \rightarrow \mathbb{R}^m$ generated as such, computing the matrix-vector product $x \rightarrow \Phi x$ has time complexity $O( m N )$.  The fast Johnson-Lindenstrauss as introduced in \cite{fastjl} and improved in \cite{ailib, riptojl}, is constructed by randomly flipping the column signs of a random subset of $m$ rows from the $N \times N$ Discrete Fourier (or Discrete Hadamard) Transform.  Exploiting the FFT algorithm, the fast JLT computes a matrix-vector product in time $O(N \log(N))$. The trade-off for this time savings is that the fast JLT has reduced embedding power $m = O(\epsilon^{-2} \log(1/\eta)\log^3(N)).$
	
	More recently, the Kronecker fast JL transform (KFJLT)  was proposed in \cite{BBK18}, to further improve the algorithmic complexity of the fast JL embedding in applications to Kronecker-structured data. 
	
	Such a construction has found applications as a key ingredient of  the oblivious sketching procedure \cite{ahle2020oblivious}, a multiscale construction for dimension reduction appplicable for subspace embeddings and approximate matrix multiplication. A central idea of this construction is the repeated application of the  Kronecker FJLT of order $d = 2$.

	The KFJLT of order $d$ acts on a Kronecker-structured vector $x = x^{(1)} \otimes \dots \otimes x^{(d)} \in \R^{n_1 \dots n_d} = \R^N$ as follows:  For fixed diagonal matrices with i.i.d. Rademacher random variables  $D_{(1)}, \dots, D_{(d)}$ of dimensions $n_1, \dots, n_d$ respectively, and for a random subset $\Omega \subset [N]$ of size $|\Omega| = m$:
	\begin{enumerate}
		\item Randomly flip signs of the entries in each vector factor according to $x^{(k)} \rightarrow D_{(k)} x^{(k)} =: z^{(k)}$;
		\item Compute the DFTs of each factor $y^{(k)} = H_{n_k} z^{(k)}$, where $H_{n}$ is the $n \times n$ DFT matrix (normalized to be a unitary transform).  
		\item Compress $y = y^{(1)} \otimes \dots \otimes y^{(d)} \in \mathbb{C}^N$ to  $y_{\Omega} \in \mathbb{C}^m$, where $y_{\Omega}$ consists of the entries in $y$ restricted to the subset $S$
		\item Rescale $y_{\Omega}$ by $\sqrt{N/m}$.
	\end{enumerate}
	The Kronecker JL transform extends to a well-defined linear map for any input $x \in \mathbb{R}^N$, taking the form of a matrix which can be expressed as the product of three matrix types:
	\begin{equation}
		\label{kfjlt}
		\sqrt{\frac{N}{m}} \cdot P_{\Omega} \cdot H \cdot D_{\xi} \in \mathbb{R}^{m \times N}
	\end{equation}
	where $D_{\xi}$ is the $N \times N$ diagonal matrix with diagonal vector $\xi = \xi_1 \otimes \xi_2 \otimes \dots \otimes \xi_d$ the Kronecker product of $n_1, \dots, n_d$-dimensional Rademacher vectors,  $H = H_{n_1} \otimes \dots \otimes H_{n_d}$ is the Kronecker product of orthonormal DFTs (or, more generally, of bounded orthogonal matrices, including DFTs, Hadamard, etc .... ), and $P_{\Omega}: \mathbb{R}^N \rightarrow \mathbb{R}^m$ denotes the projection matrix onto the coordinate subset $\Omega$.  Our results hold for more general constructions of Kronecker products of matrix factors satisfying the restricted isometry property, after randomizing their column signs. 
	
	In the special case $d=1$, the Kronecker FJLT reduces to the standard FJLT as considered in \cite{riptojl}.  However, when the input vector has Kronecker structure so that the mapping can be applied separately to the matrix-vector factors, the complexity of computing a KJLT transform matrix-vector product improves to $O( n_1 \log(n_1) + \dots n_d \log(n_d)  + m d)$.  The price that is paid is that the embedding power (that is, the minimal scaling of the embedding dimension in $m$ necessary for the distributional JL \eqref{JL:dist}) is weakened by the reduced randomness in $\xi$. For a numerical demonstration of the suboptimal scaling, we refer the reader to \cite{kronecker_jl}. A general theoretical lower bound was, to our knowledge, not available before this paper; lower bounds for related but somewhat different constructions were shown in \cite{ahle2020oblivious}.
	
	At the same, a number of works have investigated sufficient conditions on the embedding dimension to ensure that the map given by \eqref{kfjlt} satisfies the distributional JL property \eqref{JL:dist}.
	The papers \cite{ahle2020oblivious, malik2020guarantees}
	show that for \eqref{kfjlt} based on the Hadamard transform, a sufficient condition is given by 
	$m = C_d \cdot \frac{1}{\epsilon^2}\log(1/\eta)^{d+1}$, up to logarithmic factors in $\log(1/\eta), 1/\epsilon$, and $N$. While the analysis in \cite{malik2020guarantees} is restricted to vectors with a Kronecker structure, the generalization of \cite{ahle2020oblivious} applies to arbitrary vectors.

	
	On the other hand, the paper \cite{kronecker_jl} used the near-equivalence between JL embedding and restricted isometry property from \cite{riptojl} to provide the sufficient condition $m = C_d \frac{1}{\epsilon^2} (\log(1/\eta))^{2d-1}$ for any subsampled bounded orthonormal transform, thus including but not limited to constructions based on Hadamard transform.  
	
	To put these two results into perspective, we remind the reader that the tensor degree $d$ is typically small -- recall that the oblivious sketching procedure of \cite{ahle2020oblivious} only uses the case $d=2$, where the two conditions basically agree. Hence also for our results, we will pay special attention to optimizing the dependence for small values of $d$.
	
	\subsection{Contributions of this work} 
	
	In this work, we improve the existing bounds on the embedding dimension for the general  Kronecker FJLT  to $m = C_{d} \frac{1}{\epsilon^2} (\log(1/\eta))^{d},$ up to logarithmic factors in $\log(1/\eta),\,1/\epsilon$, and in $N$, improving the results in  \cite{ahle2020oblivious}  by a factor of $\log(1/\eta)$. In particular, for the case of $d=2$ at the core of the oblivious sketching procedure \cite{ahle2020oblivious}, our results improve the scaling of the embedding dimension in $\log(\tfrac{1}{\eta})$ from cubic to quadratic.
	
	We additionally prove that this embedding result is optimal in the $\eta$ dependence by providing a lower bound of  $m = \Theta( (\log(1/\eta))^{d})$  in Section \ref{sec:lower_bounds}. 
	We achieve the optimal bounds by generalizing the near-equivalence between the JL property and the restricted isometry property of \cite{riptojl} to higher-order tensors, in a sharper way than what was shown in \cite{kronecker_jl}, by carefully using a higher-dimensional analog of the Hanson-Wright inequality for random tensors.   
	
	We state our main results in Section~\ref{sec:main_results}. Then we summarize the required existing tools and describe the main idea behind our proof in Section~\ref{sec:prelim_proof_overview} without covering technical details. The full proof including all technical steps is then given in Section~\ref{sec:proof_general}. Then in Section~\ref{sec:lower_bounds}, we show the aforementioned optimality of our result and then conclude by discussing the implications of our work in Section~\ref{sec:discussion}.
	
	\subsection{Related work}
	
	
	
	Tensor Johnson-Lindenstrauss constructions have become a recent topic of study, even beyond the concrete construction of \eqref{kfjlt}.  
	
	Tensor JL embeddings based on sparse matrix structure have been studied in the context of vectors with Kronecker structure, based on the count sketch technique \cite{charikar2002finding}, which has been extended to the tensorized version known as tensor sketch in \cite{pham2013fast}. Applications to problems including subspace embeddings and approximate matrix multiplication are presented in \cite{sparse_tensor_embedding}. However, these methods have a worse dependence on the failure probability compared to Kronecker FJLT.

	The paper \cite{iwen2019lower} derived fast tensor embeddings for subspaces.  
	The paper \cite{sun2021tensor} proposed tensor random projections as matrices whose rows are i.i.d.~Kronecker products of independent Gaussian vectors, and proved embedding properties for such constructions for Kronecker products of order $d=2$.  The paper \cite{ahle2020oblivious} extended the analysis beyond $d = 2$, and  \cite{chen2020nearly} further refined and extended these results in the context of sketching constrained least squares problems.

	
	\subsection{Notation}
	
	$Id_N \in \R^{N \times N}$ is the identity matrix. We denote $H \in \mathbb{R}^{N \times N}$ for the normalized Hadamard matrix where $N$ is a power of $2$. This is obtained recursively as $H_{\log_2 N}$ by $H_0 = (1)$ and $H_{k + 1} = \frac{1}{\sqrt{2}}\begin{pmatrix}H_k & H_k \\ H_k & -H_k \end{pmatrix}$ such that $H$ is orthogonal, i.e., $H^T H = Id_N$.
	
	For an $\R$-valued random variable $X$, we define $\|X\|_{L_p} := (\mathbb{E}|X|^p)^\frac{1}{p}$. We define the subgaussian norm $\|X\|_{\psi_2} := \sup_{p \geq 1} \|X\|_{L_p} / \sqrt{p}$. For a random vector $Y \in \R^N$, we define the subgaussian norm $\|Y\|_{\psi_2} := \sup_{v \in \R^N, \|v\|_2 = 1} \| \langle v, Y \rangle \|_{L_p}$. We call $Y$ isotropic if $\mathbb{E} Y Y^T = Id_N$.
	
	A random vector with independent entries which are $\pm 1$ with probability $\frac{1}{2}$ each, is called a Rademacher vector.
	
	We say that a matrix $\Phi \in \mathbb{R}^{m \times N}$ satisfies the $(s, \delta)$-restricted isometry property (RIP) if $(1 - \delta) \|x\|_2^2 \leq \|\Phi x\|_2^2 \leq (1 + \delta) \|x\|_2^2$ holds for all $s$-sparse $x \in \R^N$, i.e., all $x$ with at most $s$ non-zero entries.
	
	For a subset $S \subset [d]$, we denote $S^c := [d] \backslash S$.
	
	To formulate the main proof, we will need some additional notation for handling higher order arrays which will be introduced in Section \ref{sec:array_indices}.

	\section{Main result} \label{sec:main_results}
	
	\begin{definition} \label{def:dist_jl}
		For $\epsilon, \eta > 0$, a random matrix $A \in \R^{m \times N}$ satisfies the $(\epsilon, \eta)$ distributional Johnson-Lindenstrauss property if for all $x \in \R^N$ with $\|x\|_2 = 1$,
		\[
		\mathbb{P}\left( \left| \|A x\|_2^2 - 1 \right| > \epsilon \|x\|_2^2 \right) \leq \eta.
		\]
	\end{definition}
	
	\begin{remark}
		If $A \in \R^{m \times N}$ has the $(\epsilon, \frac{\tilde{\eta}}{p (p - 1)})$ distributional Johnson-Linden\-strauss property, then for any set $E \subset \R^N$ with $|E| = p$ elements, by a union bound we obtain
		\begin{align*}
			& \mathbb{P}\left( \exists x, y \in E: \left| \|A x - A y\|_2^2 - \|x - y\|_2^2 \right| > \epsilon \|x - y\|_2^2 \right) \\
			& \leq \sum_{\substack{x, y \in E \\ x \neq y}} \mathbb{P}\left( \left| \|A \frac{x - y}{\|x - y\|_2}\|_2^2 - 1 \right| > \epsilon  \right) \leq 
			|E| (|E| - 1) \cdot \frac{\tilde{\eta}}{p (p - 1)} = \tilde{\eta}.
		\end{align*}
		
		So with a probability of at least $1 - \tilde{\eta}$, it holds that
		\[
		\forall x, y \in E: \left| \|A x - A y\|_2^2 - \|x - y\|_2^2 \right| \leq \epsilon \|x - y\|_2^2.
		\]
		Then $A$ preserves all pairwise distances in the set $E$ up to a factor of $1 \pm \epsilon$.
	\end{remark}
	
	\begin{theorem} \label{thm:main}
		For $d \geq 1$, let $n_1, \dots, n_d$ be dimensions such that $N = n_1 \dots n_d$. Let $0 < \epsilon, \eta < 1$ and $\Phi \in \R^{m \times N}$ be a matrix satisfying the $(2 s^d, \delta)$-RIP for $s \geq \log\frac{1}{\eta} + 2$ and $\delta \leq C(d) \epsilon$ where $C(d)$ is a constant that only depends on $d$.
		
		Let $\xi^{(1)} \in \{\pm 1\}^{n_1}, \dots, \xi^{(d)} \in \{\pm 1\}^{n_d}$ be independent Rademacher vectors and $\xi := \xi^{(1)} \otimes \dots \otimes \xi^{(d)} \in \R^N$. Define $A := \Phi D_{\xi} \in \R^{m \times N}$ where $D_\xi$ is a diagonal matrix with the entries of $\xi$ on its diagonal.
		
		Then $A$ satisfies the $(\epsilon, \eta)$ distributional Johnson-Lindenstrauss property.
	\end{theorem}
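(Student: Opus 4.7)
The plan is to extend the Krahmer-Ward argument from \cite{riptojl} (which establishes the $d = 1$ case of near-equivalence between the RIP and the JL property) to the $d$-fold Kronecker setting, replacing the classical Hanson-Wright inequality with a higher-order concentration bound for Rademacher chaos of degree $2d$. First, I would set $B := D_x \Phi^T \Phi D_x \in \R^{N \times N}$, so that $\|\Phi D_\xi x\|_2^2 = \xi^T B \xi$. Because $\Phi$ has the $(2 s^d, \delta)$-RIP, applying the RIP to standard basis vectors yields $|(\Phi^T \Phi)_{ii} - 1| \leq \delta$ for every $i$, hence $\mathbb{E}[\xi^T B \xi] = \sum_i x_i^2 (\Phi^T \Phi)_{ii}$ differs from $\|x\|_2^2 = 1$ by at most $\delta \leq C(d) \epsilon$. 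It therefore suffices to control the centered chaos $\xi^T B \xi - \mathbb{E}[\xi^T B \xi]$ with probability at least $1 - \eta$.

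The second step is to decompose the deviation according to the pattern of coincidences between the multi-indices $i = (i_1, \dots, i_d)$ and $j = (j_1, \dots, j_d)$. Expanding $\prod_{k} \xi^{(k)}_{i_k} \xi^{(k)}_{j_k}$ and grouping terms by the set $S \subseteq [d]$ of coordinates with $i_k \neq j_k$, one gets
\[
\xi^T B \xi - \mathbb{E}[\xi^T B \xi] = \sum_{\emptyset \neq S \subseteq [d]} Z_S,
\]
where each $Z_S$ is a mean-zero homogeneous Rademacher chaos of degree $2|S|$ in the independent vectors $\{\xi^{(k)}\}_{k \in S}$, with coefficient tensor obtained from $B$ by identifying $i_k = j_k$ for $k \notin S$ and removing the diagonal in the $k \in S$ directions. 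A higher-order Hanson-Wright-type inequality for Rademacher chaos then bounds each $|Z_S|$ by a mixture of subgaussian and subexponential tails whose scales are governed by the operator norms of the coefficient $2|S|$-tensor under every partition of its modes into a ``row'' block and a ``column'' block (the generalization of the $\|\cdot\|_F/\|\cdot\|_{\mathrm{op}}$ pair appearing in the classical Hanson-Wright bound).

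The main obstacle is the third step: controlling these partition-reshape norms in terms of $\delta$ and $s$ using only the RIP hypothesis. The idea, following \cite{riptojl}, is a multilinear dyadic peeling of $x$: split its entries into dyadic magnitude levels, and within each level keep at most $s$ entries per mode, so that the restricted vector is supported on a block of multilinear size at most $s^d$. The $(2 s^d, \delta)$-RIP then controls the operator norm of the relevant submatrix of $\Phi^T \Phi - Id$, and summing the dyadic contributions yields reshape-norm bounds of the form $C(d)\,\delta / s^{\alpha(S)}$ for appropriate exponents $\alpha(S)$. The choice $s \geq \log(1/\eta) + 2$ is precisely what is needed to absorb the $\log(1/\eta)$ factors that appear in the $2|S|$-degree chaos tail, so that after a union bound over the $2^d - 1$ nonempty subsets $S$ (a $d$-dependent constant) and a suitably small $C(d)$, the total deviation is at most $\epsilon/2$ with probability at least $1 - \eta$. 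The delicate bookkeeping lies in matching the exponents $\alpha(S)$ coming from peeling to the correct tail scales in the tensor Hanson-Wright bound, and in executing the peeling simultaneously across all $d$ tensor directions while still fitting inside the $2 s^d$ sparsity budget afforded by the RIP hypothesis.
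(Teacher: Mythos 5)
Your high-level strategy coincides with the paper's: write $\|\Phi D_\xi x\|_2^2$ as a Rademacher chaos of order $2d$, bound the expectation by $\delta$ via the RIP applied to basis vectors, invoke a Latala-type moment/tail bound governed by the partition norms of the coefficient tensor, and use the RIP together with a magnitude-based multilinear splitting of $x$ to show those norms are at most $C(d)\delta/s^{\kappa/2}$, so that $s \geq \log\frac{1}{\eta}+2$ absorbs the tail. Your coincidence-pattern decomposition into the $Z_S$ is essentially what the paper's decoupling step (Theorems \ref{thm:decoupling_main} and \ref{thm:decoupling_2}) produces, so that organizational difference is minor; note, however, that each $Z_S$ is still an undecoupled chaos in the vectors $\{\xi^{(k)}\}_{k\in S}$, so a decoupling argument (or an undecoupled version of the tail bound) is still required before Theorem \ref{thm:subgaussian_decoupled} applies, and the contraction over the coincident modes must be handled uniformly in $x$ — the paper does this by slicing $x$ into pieces indexed by the coincident partial index and using that their squared norms sum to $\|x\|_2^2$.

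The genuine gap is your third step, which you correctly identify as the main obstacle but only gesture at: essentially all of the paper's difficulty lives there. Concretely: (i) the splitting is not a per-mode truncation keeping ``$s$ entries per mode'' at each dyadic level; for every subset $S \subseteq [d]$ one must keep the top $s^{|S|}$ entries \emph{jointly} over the $S$-modes for each fixed value of the complementary index, and then assign each entry only to the piece of maximal $|S|$ containing it — this maximality is what yields the max-versus-average inequalities (Lemmas \ref{lem:non_0_entries}--\ref{lem:max_sum_inequality_2}) that produce the gains $s^{-|T|}$. (ii) A bound $C(d)\delta/s^{\alpha}$ with $\alpha \geq \kappa/2$ for an \emph{arbitrary} partition $\bar I_1,\dots,\bar I_\kappa$ of the free modes does not follow from a single row/column reshape as in the classical Hanson--Wright setting; the paper first coarsens the partition into three blocks according to whether each $\bar I_l$ meets only the first $d$ modes, only the last $d$, or both, proves the RIP-based estimate for the coarsened three-block norm (Lemma \ref{lem:inner_pro_bound}, which itself requires a secondary blocking of the non-conditioned modes into groups of size $s^{|\bar I|}$ and $s^{|\bar J \cap [d]|}$ so that the active submatrix of $\Phi^*\Phi - Id_N$ is $s^d \times s^d$), and then needs the combinatorial inequality $\tfrac14|\bar J| + \tfrac12(|\bar I|+|\bar I'|) \geq \kappa/2$ (Lemma \ref{lem:partition_counting}) to recover the exponent $\kappa/2$. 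Your sketch does not determine the exponents $\alpha(S)$ nor show they are large enough; without that, the $p^{\kappa/2}$ terms in the moment bound cannot be matched against $s^{\kappa/2}$ and the argument does not close.
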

	
	Using the result by Haviv and Regev about the RIP \cite{fourierrip}, we obtain the following result for such embeddings using subsampled bounded orthonormal matrices as $\Phi$ such as a subsampled Kronecker product of $d$ Hadamard matrices.

	\begin{corollary} \label{cor:jl_hadamard}
		Let $n_1, \dots, n_d, N, \epsilon, \eta, \xi$ be as in Theorem \ref{thm:main}, $\nu \in (0, 1)$, $\Phi = \sqrt\frac{N}{m} P_\Omega F \in \R^{m \times N}$ where $P_\Omega \in \R^{m \times N}$ represents uniform independent subsampling of rows with replacement and $F \in \R^{N \times N}$ is a unitary matrix with entries bounded by $\frac{D}{\sqrt{N}}$ in absolute value. 
		
		If $N \geq \frac{1}{(\nu)^{C_1 d \log \log(\frac{1}{\nu})}}$ and
		\[
		m \geq C(d) D^2 \epsilon^{-2} \left( \log\frac{1}{\eta} \right)^d \left( \log\frac{C(d)}{\epsilon} \right)^2 \log N \left( \log\frac{C(d) \log\frac{1}{\eta}}{ \epsilon} \right)^2,
		\]
		then with probability $\geq 1 - \nu$ (with respect to $P_\Omega$), we obtain a matrix $\Phi$ such that $\Phi D_\xi$ satisfies the $(\epsilon, \eta)$ distributional Johnson-Lindenstrauss property (with respect to the probability in $\xi$).
		
		$C_1$ is an absolute constant and $C(d)$ only depends on $d$.
	\end{corollary}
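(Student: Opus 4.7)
The plan is to combine Theorem~\ref{thm:main} with the Haviv--Regev RIP bound for subsampled bounded orthonormal matrices, treating the two sources of randomness (the subsampling pattern in $P_\Omega$ and the Rademacher signs in $\xi$) independently by conditioning.

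First, I would pick $s := \lceil \log(1/\eta) + 2 \rceil$ and set the RIP distortion target to $\delta := C(d)\,\epsilon$, where $C(d)$ is the constant from Theorem~\ref{thm:main}. The sparsity level required by Theorem~\ref{thm:main} is then $2s^d$, which scales as $(\log(1/\eta))^d$ up to constants depending on $d$. The Haviv--Regev theorem from \cite{fourierrip} states that, provided $N$ is at least a polynomially-decaying-in-$\nu$ threshold, a matrix of the form $\sqrt{N/m}\,P_\Omega F$ with $F$ a $D/\sqrt{N}$-bounded unitary matrix satisfies the $(2s^d,\delta)$-RIP with probability at least $1 - \nu$ as soon as
\[
m \;\gtrsim\; D^2\,\delta^{-2}\, (2s^d) \, \bigl(\log(2 s^d)\bigr)^2 \log N\, \bigl(\log(2 s^d/\delta)\bigr)^2.
\]
Substituting $s \asymp \log(1/\eta)$ and $\delta \asymp \epsilon/C(d)$ into this inequality and collecting the polylogarithmic factors produces exactly the bound stated in the corollary: the dominant term is $D^2\epsilon^{-2}(\log(1/\eta))^d$, the factor $(\log(2 s^d))^2$ becomes $(\log\log(1/\eta))^2$ (absorbed into the constant $C(d)$ up to a factor of $d$), the $\log N$ factor appears verbatim, and the final $(\log(2s^d/\delta))^2$ factor becomes $(\log(C(d)\log(1/\eta)/\epsilon))^2$. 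The condition on $N$ from Haviv--Regev is exactly the hypothesis $N \geq (\nu)^{-C_1 d \log\log(1/\nu)}$ stated in the corollary, after accounting for the sparsity scaling in $d$.

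Next, I would condition on the event $\mathcal{E}$ (of probability at least $1 - \nu$ over the subsampling $P_\Omega$) that $\Phi$ has the $(2s^d,\delta)$-RIP. On $\mathcal{E}$, the realized matrix $\Phi$ satisfies exactly the deterministic hypothesis of Theorem~\ref{thm:main} with the chosen values of $s$ and $\delta$. Applying Theorem~\ref{thm:main} to this realization (with respect to the independent Rademacher randomness $\xi$) yields that $\Phi D_\xi$ satisfies the $(\epsilon,\eta)$ distributional JL property.

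The main thing to get right is the bookkeeping of the polylogarithmic factors when passing from the RIP bound (stated in terms of the sparsity $s^d$ and the distortion $\delta$) to the JL bound (stated in terms of $\log(1/\eta)$ and $\epsilon$); nothing here is substantively difficult, but one must verify that $(\log(2s^d))^2 = (d\log(2s))^2$ can be absorbed into $C(d)$ and that the mixed $(\log(s^d/\delta))^2$ term indeed reduces to $(\log(C(d)\log(1/\eta)/\epsilon))^2$. A minor subtlety is that Theorem~\ref{thm:main} is stated for real matrices, so one should note that $F$ unitary in the corollary is used in the real-linear sense, with $D^2$ absorbing constants coming from the standard reduction; this is the same convention used in the prior work cited.
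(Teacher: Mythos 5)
Your proposal coincides with the paper's (implicit) proof: the corollary is presented as a direct consequence of the Haviv--Regev RIP theorem for subsampled bounded orthonormal matrices combined with Theorem~\ref{thm:main}, using exactly the conditioning-on-$P_\Omega$ argument you describe (choose $s \asymp \log\frac{1}{\eta}$, $\delta \asymp C(d)\epsilon$, get the $(2s^d,\delta)$-RIP with probability $1-\nu$, then apply Theorem~\ref{thm:main} on that event). The only quibble is bookkeeping: a factor such as $(\log(2s^d))^2 \asymp (d\log\log\frac{1}{\eta})^2$ depends on $\eta$ and therefore cannot be ``absorbed into the constant $C(d)$'' as you write, though it is dominated by the $\left(\log\frac{C(d)\log\frac{1}{\eta}}{\epsilon}\right)^2$ factor already present in the stated bound, so the conclusion is unaffected.
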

	
	\begin{remark}
		For norm preservation of $p$ points simultaneously through a union bound, an $(\epsilon, \frac{c}{p})$ distributional Johnson-Lindenstrauss property is required for a constant $c \in (0, 1)$. The RIP is a property that holds uniformly for all sparse vectors such that in Corollary \ref{cor:jl_hadamard}, no union bound over the probability in $P_\Omega$ is required and $\nu$ can be chosen to be constant and especially independent of $p$.
		
		So even though the lower bound on $N$ in Corollary \ref{cor:jl_hadamard} implies $\log N \gtrsim \log\frac{1}{\nu}$ in the formula for the lower bond on $m$, the dependence of $m$ on $p$ will only be $m \gtrsim \left( \log p \right)^{d}$.
	\end{remark}

	\section{Preliminaries and proof overview} \label{sec:prelim_proof_overview}
	
	In this section, we provide the required probabilistic tools and give an overview of the proof.
	
	\cite{riptojl} provides a proof for the $d = 1$ case based on a union bound and the Hanson-Wright inequality. This method is generalized to higher $d$ in \cite{kronecker_jl} by successively applying the Hanson-Wright inequality.
	
	The Hanson-Wright inequality has also been generalized to chaos of higher order in \cite{chaos_prob_ub} for a Gaussian chaos. To state this bound, we introduce the following notation. Let $\bd{B} \in \mathbb{R}^{n_1 \times \dots \times n_d}$ and denote $S(d, \kappa)$ for the set of all partitions of $[d]$ into $\kappa$ nonempty disjoint sets. The entries in $\bd{B}$ are indexed by indices in $\bd{J} := [n_1] \times \dots \times [n_d]$. For an index $\bd{i} \in \bd{J}$ and $S \subset [d]$, define $\bd{i}_S := (\bd{i}_j)_{j \in S}$. For $(I_1, \dots, I_\kappa) \in S(n, \kappa)$, define
	\begin{equation}
		\|\bd{B}\|_{I_1, \dots, I_\kappa} := \sup \left\{ \sum_{\bd{i} \in \bd{J}} {B_{\bd{i}} \alpha_{\bd{i}_{I_1}}^{(1)} \dots \alpha_{\bd{i}_{I_\kappa}}^{(d)} } \,|\, \|\alpha^{(1)}\|_2 = \dots = \|\alpha^{(d)}\|_2 = 1  \right\}.
		\label{eq:partition_norm}
	\end{equation}
	
	The result considers the case in which $n_1 = n_2 = \dots = n_d$. However, this is not an essential restriction since in the case of varying dimensions, $B$ can be extended by $0$ entries to an array with $\max_{k \in [d]} n_k$ dimensions along every axis. Then the following moment bounds are given in Theorem 1 of \cite{latala_gauss_chaos}.
	
	\begin{theorem}[Theorem 1 in \cite{latala_gauss_chaos}] \label{thm:chaos_moments}
		Let $\bd{B} \in \R^{n_1 \times \dots \times n_d}$ and let $g^{(1)} \in \R^{n_1}, \dots, \allowbreak g^{(d)} \in \R^{n_d}$ be independent standard normal vectors. Define
		\[
		X := \sum_{i_1 \in [n_1], \dots, i_d \in [n_d]}{B_{i_1, \dots, i_d} g_{i_1}^{(1)} \dots g_{i_d}^{(d)} }.
		\]
		Then for any $p \geq 2$,
		\[
		\|X\|_{L_p} \leq C(d) \sum_{\kappa = 1}^d p^{\kappa/2} \sum_{(I_1, \dots, I_\kappa) \in S(d, \kappa)} \|\bd{B}\|_{I_1, \dots, I_k}.
		\]
		where $C(d)$ is a constant that only depends on $d$.
	\end{theorem}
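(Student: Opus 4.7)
The plan is to prove Theorem \ref{thm:chaos_moments} by induction on the order $d$ of the chaos. The base case $d=1$ is immediate: $X=\sum_{i_1}B_{i_1}g^{(1)}_{i_1}$ is centered Gaussian with variance $\|\bd{B}\|_2^2=\|\bd{B}\|_{\{1\}}^2$, so $\|X\|_{L_p}\leq C\sqrt{p}\,\|\bd{B}\|_{\{1\}}$, which matches the right-hand side since $S(1,1)=\{(\{1\})\}$ and only the $\kappa=1$ term survives.

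For the inductive step, I would condition on $g^{(d)}$ and view $X$ as a $(d-1)$-chaos in the remaining Gaussian vectors with random coefficient tensor $\tilde{\bd{B}}_{\bd{i}_{[d-1]}}:=\sum_{i_d}B_{\bd{i}}g^{(d)}_{i_d}$. The inductive hypothesis, applied conditionally, gives for each partition $(I_1,\dots,I_\kappa)$ of $\{1,\dots,d-1\}$ a contribution controlled by $\|\tilde{\bd{B}}\|_{I_1,\dots,I_\kappa}$. By Fubini and the triangle inequality in $L_p$, it then suffices to bound $\bigl\|\,\|\tilde{\bd{B}}\|_{I_1,\dots,I_\kappa}\,\bigr\|_{L_p(g^{(d)})}$ by partition norms of $\bd{B}$ itself. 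The key observation is that
\[
\|\tilde{\bd{B}}\|_{I_1,\dots,I_\kappa}
=\sup_{\substack{\alpha^{(1)},\dots,\alpha^{(\kappa)}\\\|\alpha^{(j)}\|_2=1}}\sum_{i_d}g^{(d)}_{i_d}\,Z_{\alpha}(i_d),
\qquad
Z_{\alpha}(i_d):=\sum_{\bd{i}_{[d-1]}}B_{\bd{i}}\prod_{j=1}^{\kappa}\alpha^{(j)}_{\bd{i}_{I_j}},
\]
is the supremum of a centered Gaussian process indexed by $\alpha$, whose variance at each $\alpha$ is $\sum_{i_d}Z_{\alpha}(i_d)^2$.

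Borell--Tsirelson--Ibragimov--Sudakov concentration for Gaussian suprema then yields
\[
\bigl\|\,\|\tilde{\bd{B}}\|_{I_1,\dots,I_\kappa}\,\bigr\|_{L_p}
\lesssim\mathbb{E}\sup_{\alpha}\sum_{i_d}g^{(d)}_{i_d}Z_{\alpha}(i_d)\;+\;\sqrt{p}\,\sup_{\alpha}\Bigl(\sum_{i_d}Z_{\alpha}(i_d)^2\Bigr)^{1/2}.
\]
The deviation term evaluates exactly to $\sqrt{p}\,\|\bd{B}\|_{I_1,\dots,I_\kappa,\{d\}}$, a partition of $[d]$ with $\kappa+1$ blocks in which $\{d\}$ appears as a singleton; combined with the $p^{\kappa/2}$ from the inductive step it produces the desired $p^{(\kappa+1)/2}$ scaling. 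The expected-supremum term, by contrast, must be absorbed into contributions corresponding to merging $\{d\}$ into an existing block $I_j$, giving terms of the form $\|\bd{B}\|_{I_1,\dots,I_{j-1},I_j\cup\{d\},I_{j+1},\dots,I_\kappa}$; with the same $p^{\kappa/2}$ prefactor these are exactly the partitions of $[d]$ with $\kappa$ blocks. Summing over all partitions of $[d-1]$ and all insertions of $\{d\}$ reproduces the sum over $S(d,\kappa)$ in the statement, and since the number of partitions of $[d]$ is the finite Bell number $B_d$, the combinatorial constants roll up into a $C(d)$ depending only on $d$.

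The main obstacle is precisely the sharp bound on $\mathbb{E}\sup_{\alpha}\sum_{i_d}g^{(d)}_{i_d}Z_{\alpha}(i_d)$ in terms of partition norms in which $\{d\}$ has been merged into an existing block. A plain Dudley entropy integral is too crude, and the required control appears to demand Talagrand-style generic chaining / majorizing measures on the product of unit spheres, exploiting the multilinear structure of $Z_{\alpha}$ in $\alpha^{(1)},\dots,\alpha^{(\kappa)}$ to decompose the index set factor by factor. This chaining argument is the technical heart of Latała's paper; once it is in hand, the inductive bookkeeping over partitions and the aggregation of constants is routine.
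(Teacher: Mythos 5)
First, a point of comparison: the paper does not prove this statement at all --- it is imported verbatim as Theorem 1 of Lata{\l}a \cite{latala_gauss_chaos} --- so your attempt can only be measured against Lata{\l}a's original argument, not against anything in the paper.

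Your inductive skeleton (condition on $g^{(d)}$, apply the inductive hypothesis to the $(d-1)$-chaos with random coefficient tensor $\tilde{\bd{B}}$, split $\bigl\|\,\|\tilde{\bd{B}}\|_{I_1,\dots,I_\kappa}\,\bigr\|_{L_p}$ via Borell--TIS into expected supremum plus $\sqrt{p}$ times the maximal standard deviation) is sound as far as it goes, and your identification of the deviation term with $\|\bd{B}\|_{I_1,\dots,I_\kappa,\{d\}}$ is correct. The gap is the expected-supremum term, and it is not merely a deferred technicality: the inequality you propose to establish there,
\[
\mathbb{E}\sup_{\alpha}\sum_{i_d}g^{(d)}_{i_d}Z_\alpha(i_d)\;\leq\;C(\kappa)\sum_{j=1}^{\kappa}\|\bd{B}\|_{I_1,\dots,I_j\cup\{d\},\dots,I_\kappa},
\]
is false. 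Take $d=3$, $n_1=n_2=n_3=n$, $B_{i_1i_2i_3}=\mathbbm{1}_{i_1=i_2=i_3}$, and the partition $I_1=\{1\}$, $I_2=\{2\}$ of $[2]$. The left-hand side is $\mathbb{E}\sup_{\|\alpha\|_2=\|\beta\|_2=1}\sum_i\alpha_i\beta_i g^{(3)}_i=\mathbb{E}\max_i|g^{(3)}_i|\asymp\sqrt{\log n}$, while $\|\bd{B}\|_{\{1,3\},\{2\}}=\|\bd{B}\|_{\{1\},\{2,3\}}=1$ by Cauchy--Schwarz, so no constant independent of $n$ can work. In this example the $\sqrt{\log n}$ contribution is ultimately absorbed by the coarser term $\sqrt{p}\,\|\bd{B}\|_{\{1,2,3\}}=\sqrt{pn}$ of the final bound, but that term carries the prefactor $p^{1/2}$, not the $p^{\kappa/2}=p$ that your bookkeeping assigns to this branch of the induction; so the clean ``multiply by $p^{\kappa/2}$ and match $\kappa$-block partitions of $[d]$'' accounting cannot close. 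This is exactly why Lata{\l}a's upper bound is not obtained by an induction of this shape: his argument controls the relevant expected suprema by mixtures of partition norms across \emph{different} block counts (via an entropy/chaining estimate on products of balls adapted to a decomposition of $\bd{B}$) and redistributes the powers of $p$ accordingly. In short, you have correctly located the hard step but misstated what must be proved there, and the induction as designed would fail.
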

	
	Note that in the case $d = 2$, only the two norms $\|\cdot\|_{\{1, 2\}}$ and $\|\cdot\|_{\{1\}, \{2\}}$ are involved which are equal to the Frobenius and spectral norm of a matrix, respectively. Then the bound in Theorem \ref{thm:chaos_moments} corresponds to the one in the classical Hanson-Wright inequality.
	
	In our setting, we need to control analogous expressions with Rademacher variables instead of Gaussian ones. However, the above result can be generalized to subgaussian random vectors, for example by repeated application of Lemma 3.9 in \cite{hanson_wright_tensors} which compares the subgaussian chaos to the Gaussian one. Together with Theorem \ref{thm:chaos_moments} this yields the following statement.
	
	\begin{theorem} \label{thm:subgaussian_decoupled}
		Let $\bd{n} \in \mathbb{N}^d$, $\bd{B} \in \mathbb{R}^{\bd{n}}$, $p \geq 2$.
		
		Let $S(\kappa, d)$ denote the set of partitions of $[d]$ into $\kappa$ nonempty disjoint subsets.  Define
		\begin{align*}
			m_p(\bd{B}) := \sum_{\kappa = 1}^d p^{\kappa / 2} \sum_{(I_1, \dots, I_\kappa) \in S(k, d)} \|\bd{B}\|_{I_1, \dots, I_\kappa}.
		\end{align*}
		
		Consider independent, mean $0$, isotropic vectors $X^{(1)} \in \R^{n_1}, \dots, X^{(d)} \in \R^{n_d}$ with subgaussian norm bounded by $L \geq 1$. Then
		\begin{align*}
			\left\| \sum_{i_1 \in [n_1], \dots, i_d \in [n_d]}{B_{i_1, \dots, i_d} X_{i_1}^{(1)} \dots X_{i_d}^{(d)} } \right\|_{L_p} \leq C(d) L^{d} m_p(\bd{B}),
		\end{align*}
		where $C(d) > 0$ is a constant that only depends on $d$.
	\end{theorem}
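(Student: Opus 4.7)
The plan is to reduce to the Gaussian case covered by Theorem~\ref{thm:chaos_moments} by replacing the subgaussian factors $X^{(1)},\dots,X^{(d)}$ one at a time with independent standard Gaussian vectors $g^{(1)},\dots,g^{(d)}$, losing a multiplicative constant of order $L$ at each swap. Since the swaps are performed $d$ times, the total loss is $C(d) L^d$, which matches the advertised bound.

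Concretely, for $0 \leq j \leq d$ define the hybrid chaos
\[
Z_j := \sum_{i_1,\dots,i_d} B_{i_1,\dots,i_d}\, g^{(1)}_{i_1}\cdots g^{(j)}_{i_j}\, X^{(j+1)}_{i_{j+1}}\cdots X^{(d)}_{i_d},
\]
so that $Z_0$ is the target subgaussian chaos and $Z_d$ is a pure Gaussian chaos. To pass from $Z_j$ to $Z_{j+1}$, I would condition on all factors other than the $(j+1)$st. After conditioning, the random variable depends linearly on $X^{(j+1)}$ through the coefficient array
\[
c_{i_{j+1}} := \sum_{i_k,\,k\neq j+1} B_{i_1,\dots,i_d}\, g^{(1)}_{i_1}\cdots g^{(j)}_{i_j}\, X^{(j+2)}_{i_{j+2}}\cdots X^{(d)}_{i_d},
\]
which is a deterministic function of the conditioning variables. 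Lemma~3.9 from \cite{hanson_wright_tensors} then produces a pointwise (in the conditioning variables) comparison of the conditional $L_p$ norms, $\bigl\|\sum_{i_{j+1}} c_{i_{j+1}} X^{(j+1)}_{i_{j+1}}\bigr\|_{L_p} \leq C L \bigl\|\sum_{i_{j+1}} c_{i_{j+1}} g^{(j+1)}_{i_{j+1}}\bigr\|_{L_p}$, because $X^{(j+1)}$ is a mean-zero isotropic subgaussian vector with $\|X^{(j+1)}\|_{\psi_2} \leq L$ and $g^{(j+1)}$ is standard Gaussian. Raising to the $p$th power, taking the outer expectation and using Fubini (all factors are independent) gives $\|Z_j\|_{L_p} \leq C L \|Z_{j+1}\|_{L_p}$.

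Iterating this bound for $j = 0,\dots,d-1$ yields $\|Z_0\|_{L_p} \leq (CL)^d \|Z_d\|_{L_p}$, and Theorem~\ref{thm:chaos_moments} applied to $Z_d$ gives $\|Z_d\|_{L_p} \leq C(d)\, m_p(\bd{B})$. Combining these produces the desired inequality with constant $C(d) L^d$ (absorbing absolute constants into $C(d)$). If the dimensions $n_k$ differ, I would extend $\bd{B}$ by zeros along each axis to a common dimension $\max_k n_k$ before invoking Theorem~\ref{thm:chaos_moments}; zero-padding does not change any of the partition norms $\|\bd{B}\|_{I_1,\dots,I_\kappa}$, so $m_p(\bd{B})$ is preserved.

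The main delicate point is verifying that the comparison lemma applies cleanly under the conditioning: one must check that conditional on the other factors, the remaining object really is a first-order chaos in an isotropic mean-zero subgaussian vector whose subgaussian norm is still bounded by $L$, and that the resulting comparison constant is independent of the conditioning. Once this is in place the rest is a clean telescoping. A secondary (mild) concern is bookkeeping the constants so that the final prefactor depends only on $d$ and not on $N$ or the partition structure; this is automatic because each step contributes a universal factor $CL$ and Theorem~\ref{thm:chaos_moments} already absorbs the combinatorics of partitions into its own $C(d)$.
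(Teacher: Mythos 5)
Your proposal is correct and follows essentially the same route as the paper, which obtains Theorem~\ref{thm:subgaussian_decoupled} by ``repeated application of Lemma 3.9 in \cite{hanson_wright_tensors}'' together with Theorem~\ref{thm:chaos_moments} --- i.e., exactly the one-factor-at-a-time Gaussian replacement via conditioning that you describe, each swap costing a universal multiple of $L$. The conditional comparison you flag as the delicate point does go through: conditioned on the other factors the object is a linear form $\sum_i c_i X^{(j+1)}_i$ with $\|{\cdot}\|_{L_p}\le C L\sqrt{p}\,\|c\|_2$, while the Gaussian counterpart has $\|{\cdot}\|_{L_p}\ge c\sqrt{p}\,\|c\|_2$, so the constant is independent of the conditioning and the telescoping is valid.
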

	
	Note that in the scenario considered in this work, we will consider such chaos expressions in which there are $d$ pairs of equal vectors which is not exactly the same as in Theorem \ref{thm:subgaussian_decoupled}. However, it is possible to reduce the problem in this work to the setting of Theorem \ref{thm:subgaussian_decoupled} using a decoupling method. Such a decoupling method has been studied in the same work \cite{hanson_wright_tensors} and it will be discussed in Section \ref{sec:decoupling} together with its application in this work.
	
	Similarly to what is done in \cite{latala_gauss_chaos}, also the moment bound in Theorem \ref{thm:subgaussian_decoupled} can be translated into a tail bound. The following auxiliary result states a general relation between moment and tail bounds.
	
	\begin{lemma}[Lemma 3.8 in \cite{hanson_wright_tensors}] \label{lem:moment_tail_bound}
		Let $T$ be a finite set and $X$ an $\R$ valued random variable such that for all $p \geq p_0 \geq 0$,
		\[
		\|X\|_{L_p} \leq \sum_{k = 1}^{d} \min_{l \in T} p^{e_{k, l}} \gamma_{k, l}
		\]
		for values $\gamma_{k, l} > 0$.
		
		Then for all $t > 0$,
		\[
		\mathbb{P}(|X| > t) \leq e^{p_0} \exp\left( - \min_{k \in [d]} \max_{l \in T} \left( \frac{t}{e d \gamma_{k, l}} \right)^\frac{1}{e_{k, l}} \right).
		\]
	\end{lemma}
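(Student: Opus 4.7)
The plan is a standard moment-to-tail conversion via Markov's inequality in $L_p$: for any $p \geq p_0$,
\[
\mathbb{P}(|X| > t) \leq t^{-p}\,\mathbb{E}|X|^p = \left(\frac{\|X\|_{L_p}}{t}\right)^{p},
\]
so if we can choose $p$ such that $\|X\|_{L_p} \leq t/e$, we obtain $\mathbb{P}(|X|>t) \leq e^{-p}$, and we want this $p$ to coincide with the exponent appearing on the right-hand side of the lemma.

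Concretely, I would set
\[
p^\star := \min_{k \in [d]} \max_{l \in T} \left( \frac{t}{e\,d\,\gamma_{k,l}} \right)^{1/e_{k,l}},
\]
which is exactly the exponent in the claim, and then split into two cases.

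If $p^\star \geq p_0$, the moment hypothesis applies at $p = p^\star$. Unpacking the definition of $p^\star$, for every $k \in [d]$ there exists $l_k \in T$ with $p^\star \leq (t/(e d \gamma_{k, l_k}))^{1/e_{k, l_k}}$, equivalently $(p^\star)^{e_{k,l_k}} \gamma_{k,l_k} \leq t/(ed)$. Hence $\min_{l \in T} (p^\star)^{e_{k,l}} \gamma_{k,l} \leq t/(ed)$ for each of the $d$ summands in the hypothesis, and summing over $k$ gives $\|X\|_{L_{p^\star}} \leq t/e$. Markov then yields $\mathbb{P}(|X|>t) \leq e^{-p^\star}$, which is even slightly stronger than the lemma (the prefactor $e^{p_0}$ is not needed in this case). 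If $p^\star < p_0$, the claimed right-hand side equals $e^{p_0 - p^\star} > 1$, so the inequality holds trivially from $\mathbb{P}(|X|>t) \leq 1$. Combining the two cases completes the proof.

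I do not expect a genuine obstacle; the statement is essentially a bookkeeping translation of a polynomial-in-$p$ moment estimate into a stretched-exponential tail estimate. The only point that requires a little care is the asymmetric role of $k$ and $l$ in $p^\star$: because the hypothesis \emph{sums} over $k$, the worst $k$ dictates the admissible choice of $p$ (hence the outer $\min_k$), while for each fixed $k$ we may freely pick the most favorable $l \in T$ (hence the inner $\max_l$), and both optimizations must simultaneously drive the corresponding summand below $t/(ed)$.
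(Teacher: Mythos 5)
Your proof is correct and follows essentially the same route as the argument in the paper (and in the cited source): apply Markov's inequality at the specific moment order $p^\star$ equal to the exponent in the conclusion, verify via the choice of $l_k$ achieving the inner maximum that each of the $d$ summands is at most $t/(ed)$ so that $\|X\|_{L_{p^\star}} \leq t/e$, and dispose of the case $p^\star < p_0$ trivially via the prefactor $e^{p_0}$. No gaps; the only implicit assumption, that the exponents $e_{k,l}$ are positive so that $p \mapsto p^{e_{k,l}}$ is increasing, is also implicit in the statement itself.
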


	We make use of the restricted isometry through the following lemma which is used in the proof of Lemma \ref{lem:inner_pro_bound}. For a more general overview of the restricted isometry property and similar tools, see Chapter 6 in \cite{foucart_rauhut}.
	\begin{lemma} \label{lem:rip_spec_norm}
		Let $\Phi \in \R^{m \times N}$ have the $(2s, \delta)$-RIP. Then for any $S, T \subset [N]$ of size $|S| = |T| = s$, the submatrix $B = (\Phi^* \Phi - Id_N)_{S, T}$ satisfies $\|B\|_{2 \rightarrow 2} \leq \delta$.
	\end{lemma}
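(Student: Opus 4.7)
The plan is to reduce the operator norm to a bilinear form and then apply the polarization identity so that the RIP hypothesis, which only controls quadratic forms on sparse vectors, can be brought in. Concretely, I would write
\[
\|B\|_{2\to 2} = \sup_{\|u\|_2 = \|v\|_2 = 1} \langle B u, v \rangle = \sup \bigl\langle (\Phi^*\Phi - Id_N)\tilde u, \tilde v \bigr\rangle,
\]
where $\tilde u \in \R^N$ is the zero extension of $u$ to the coordinate set $T$ and $\tilde v$ is the zero extension of $v$ to $S$. Both $\tilde u$ and $\tilde v$ are then $s$-sparse (with $\|\tilde u\|_2 = \|\tilde v\|_2 = 1$), and
\[
\bigl\langle (\Phi^*\Phi - Id_N)\tilde u, \tilde v \bigr\rangle = \langle \Phi \tilde u, \Phi \tilde v\rangle - \langle \tilde u, \tilde v \rangle.
\]

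Next I would apply the polarization identity
\[
\langle \Phi \tilde u, \Phi \tilde v\rangle - \langle \tilde u, \tilde v \rangle = \tfrac{1}{4}\Bigl[\bigl(\|\Phi(\tilde u+\tilde v)\|_2^2 - \|\tilde u+\tilde v\|_2^2\bigr) - \bigl(\|\Phi(\tilde u-\tilde v)\|_2^2 - \|\tilde u-\tilde v\|_2^2\bigr)\Bigr].
\]
The crucial observation is that $\tilde u \pm \tilde v$ are supported on $S \cup T$, which has cardinality at most $2s$, so both vectors are $2s$-sparse. The $(2s,\delta)$-RIP then yields
\[
\bigl|\|\Phi w\|_2^2 - \|w\|_2^2\bigr| \leq \delta \|w\|_2^2 \qquad \text{for } w \in \{\tilde u+\tilde v,\, \tilde u-\tilde v\}.
\]

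Combining these with the parallelogram identity $\|\tilde u+\tilde v\|_2^2 + \|\tilde u-\tilde v\|_2^2 = 2(\|\tilde u\|_2^2 + \|\tilde v\|_2^2) = 4$ gives
\[
\bigl|\langle B u, v\rangle\bigr| \leq \tfrac{\delta}{4}\bigl(\|\tilde u+\tilde v\|_2^2 + \|\tilde u-\tilde v\|_2^2\bigr) = \delta,
\]
and taking the supremum over unit $u, v$ yields $\|B\|_{2\to 2} \leq \delta$. There is no serious obstacle here: the proof is a standard polarization argument and the only point requiring care is verifying that $\tilde u \pm \tilde v$ is $2s$-sparse (which follows from $|S\cup T|\leq |S|+|T|=2s$, regardless of whether $S$ and $T$ overlap), so that the RIP hypothesis applies to both polarization terms.
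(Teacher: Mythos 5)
Your proof is correct and follows essentially the same route as the paper's: restrict to unit vectors supported on $S$ and $T$, apply the polarization identity, invoke the $(2s,\delta)$-RIP on the $2s$-sparse vectors $\tilde u \pm \tilde v$, and finish with the parallelogram law. No discrepancies worth noting.
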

	\begin{proof}
		Let $x, y \in \R^N$ such that $\supp(x) = S$, $\supp(y) = T$ and $\|x\|_2 = \|y\|_2 = 1$. Then by the polarization identity and the RIP
		\begin{align*}
			|x_S^* B y_T| & = |x^* \Phi^* \Phi y - x^* y| \\
			& =
			\frac{1}{4} \left| \|\Phi(x + y)\|_2^2 - \|\Phi(x - y)\|_2^2 - \|x + y\|_2^2 + \|x - y\|_2^2 \right| \\
			& \leq \frac{\delta}{4} \left( \|x + y\|_2^2 + \|x - y\|_2^2 \right)
			= \frac{\delta}{4}\left( 2 \|x\|_2^2 + 2 \|y\|_2^2 \right) = \delta.
		\end{align*}
	\end{proof}
	
	\subsection{Proof overview} \label{sec:proof_outline}
	
	In this section we will give an overview and intuition for the proof of the main result, Theorem \ref{thm:main}. The complete technical proof can be found in Section \ref{sec:proof_general}. We will also point out the analogous steps in the order $1$ case as in \cite{riptojl}.
	
	Since $A$ is a linear map, we can assume $\|x\|_2 = 1$ and it is sufficient to prove $\left| \|A x\|_2^2 - 1 \right| \leq \epsilon$ with probability $\geq 1 - \eta$ for each such $x$ to show the distributional Johnson-Lindenstrauss property.
	
	The key to the proof is relating the norm $\|A x\|_2^2$ to a Rademacher chaos. In the case $d = 1$, we obtain 
	\begin{align*}
		\|A x\|_2^2 = \sum_{j = 1}^m \left( \sum_{k = 1}^N \Phi_{j, k} \xi_k x_k  \right)^2 = \sum_{j = 1}^m \sum_{k, l = 1}^N \Phi_{j, k} \Phi_{j, l} \xi_k \xi_l x_k x_l = \sum_{k, l = 1}^N \xi_k \xi_l B_{k, l}
	\end{align*}
	where $B_{k, l} := \sum_{j = 1}^m  \Phi_{j, k} \Phi_{j, l}  x_k x_l$ ($B \in \R^{N \times N}$). This is a Rademacher chaos of order $2$ and can be controlled with the Hanson-Wright inequality which is done in \cite{riptojl}.
	
	In the general case, let $L: [n_1] \times \dots \times [n_d] \rightarrow [N]$ be the function that maps a tuple of indices of $\xi^{(1)}, \dots, \xi^{(d)}$ to the corresponding index in $\xi^{(1)} \otimes \dots \otimes \xi^{(d)}$. Then
	\begin{align}
		& \|A x\|_2^2 = \sum_{j = 1}^m \left( \sum_{\bd{k} \in [n_1] \times \dots \times [n_d]} \Phi_{j, L(\bd{k})} \xi^{(1)}_{\bd{k}_1} \dots \xi^{(d)}_{\bd{k}_d} x_{L(\bd{k})}  \right)^2 \nonumber \\
		= & \sum_{\substack{\bd{k}, \bd{k}' \in [n_1] \times \dots \times [n_d]}} \xi^{(1)}_{\bd{k}_1} \dots \xi^{(d)}_{\bd{k}_d} \xi^{(1)}_{\bd{k}_1'} \dots \xi^{(d)}_{\bd{k}_d'} x_{L(\bd{k})} x_{L(\bd{k}')} \sum_{j = 1}^m \Phi_{j, L(\bd{k})} \Phi_{j, L(\bd{k}')} \nonumber \\
		= & \sum_{\substack{\bd{k}, \bd{k}' \in [n_1] \times \dots \times [n_d]}} \xi^{(1)}_{\bd{k}_1} \dots \xi^{(d)}_{\bd{k}_d} \xi^{(1)}_{\bd{k}_1'} \dots \xi^{(d)}_{\bd{k}_d'} B_{\bd{k}, \bd{k}'} \label{eq:order2d_chaos}
	\end{align}
	for a corresponding array $\bd{B} \in \R^{n_1 \times \dots \times n_d \times n_1 \times \dots \times n_d}$. This is a Rademacher chaos of order $2 d$ and with some adaptions, it can be controlled with Theorem \ref{thm:subgaussian_decoupled}.
	
	The difference between (\ref{eq:order2d_chaos}) and the expression controlled in Theorem \ref{thm:subgaussian_decoupled} is that in the former one, each $\xi^{(j)}$ appears twice, while in the latter one, all the involved random vectors have to be independent. This can be overcome using a decoupling technique. For the order $2$ chaos, the classical decoupling lemma ensures that instead of $\sum_{k, l = 1}^N \xi_k \xi_l B_{k, l}$, controlling the decoupled chaos $\sum_{k, l = 1}^N \xi_k \xi_l' B_{k, l}$ is sufficient where $\xi'$ is an independent copy of $\xi$. This step only deteriorates the final result by a constant factor. An analogous result can be shown for the general order $2 d$ chaos as in (\ref{eq:order2d_chaos}) such that we can replace the second appearance of every Rademacher vector by an independent copy and then Theorem \ref{thm:subgaussian_decoupled} can be applied. The detailed statements of this step will be given in Section \ref{sec:decoupling}.
	
	A crucial step to control $\|A x\|_2^2$ in \cite{riptojl} for $d = 1$ is the separation of the largest entries of $x$ (in absolute value), i.e., $x = x_{(1)} + x_{(\flat)}$ where $x_{(1)}$ contains the largest $s \sim \log(p)$ entries in absolute value and $x_{(\flat)}$ the remaining ones. Then
	\begin{align}
		\|A x\|_2^2 & = \langle A(x_{(1)} + x_{(\flat)}), A(x_{(1)} + x_{(\flat)}) \rangle \nonumber \\
		& = \|A x_{(1)}\|_2^2 + 2 \langle Ax_{(1)}, A x_{(\flat)} \rangle + \|A x_{(\flat)}\|_2^2.
		\label{eq:inner_products_order1}
	\end{align}
	The first term on the right hand side, $\|A x_{(1)}\|_2^2$ lies within $(1 \pm \epsilon) \|x\|_2^2$ by the RIP, for the second term we obtain
	\[
	| \langle Ax_{(1)}, A x_{(\flat)} \rangle | = \left| \sum_{\substack{k \in \supp(x_{(1)}) \\ l \in \supp(x_{(\flat)})}} \xi_k \xi_l B_{k, l} \right| =
	\left| \sum_{l \in \supp(x_{(\flat)}) } \xi_{l} a_l \right|
	\]
	for $a_l := \sum_{k \in \supp(x_{(1)})} \xi_k B_{k, l}$. Note that $(\xi_l)_{l \in \supp(x_{(\flat)})}$ and $(\xi_k)_{k \in \supp(x_{(1)})}$ are independent. By conditioning on $(\xi_k)_{k \in \supp(x_{(1)})}$ (and thus $(a_l)_{l \in \supp(x_{(\flat)})}$), the expression $\sum_{l \in \supp(x_{(\flat)}) } \xi_{l} a_l$ becomes a Rademacher chaos of order $1$ and can be controlled with Hoeffding's inequality which corresponds to the general Rademacher chaos concentration inequality for order $1$.
	
	The third term in (\ref{eq:inner_products_order1}) then can be controlled using the Hanson-Wright inequality while the corresponding bound in \cite{riptojl} requires the largest $s$ entries to be considered separately.
	
	Thus, the analysis of the first two terms in (\ref{eq:inner_products_order1}) uses specific properties of Rademacher vectors while the analysis of the third term could be done for a Gaussian vector in essentially the same way. This is indeed necessary since the Johnson-Lindenstrauss property of $P_\Omega H D_\xi$ would not hold if we replaced the vector $\xi$ with a Gaussian vector $g$: Considering the set $E = \{e_1, \dots, e_N\}$ of all canonical basis vectors, $\|P_\Omega H D_g e_k\|_2 = g_k$ for $1 \leq k \leq N$ and thus $\mathbb{E} \max_{k \in [N]} \|P_\Omega H D_g e_k\|_2 \sim \sqrt{\log(N)}$, i.e., for large $N$, we cannot expect all $\|P_\Omega H D_g e_k\|_2$ to be in the range of $1 \pm \epsilon$.
	
	For the case of general $d$, a data point is a multidimensional array $\bd{x} \in \R^{n_1 \times \dots \times n_d}$ and analogously to the separation of the largest $s$ entries in the previous case, we will separate the largest entries for every possible combination of the $d$ axes. Specifically, for any subset $S \subset [d]$, we define the array $\bd{x}^{(S)}$ in such a way that for each fixed value of $(\bd{i}_k)_{k \in S^c}$, we pick the $s^{|S|}$ values of $(\bd{i}_k)_{k \in S}$ with the largest $|x_{\bd{i}_1, \dots, \bd{i}_d}|$. For these indices $\bd{i}$, $x^{(S)}_{\bd{i}} = x_{\bd{i}}$ and otherwise $x^{(S)}_{\bd{i}} = 0$.
	
	For the simplest case $d = 1$, this leads to the following result:
	\begin{itemize}
		\item $\bd{x}^{(\emptyset)}$ is the complete array $\bd{x}$.
		\item $\bd{x}^{(\{1\})}$ contains the largest $s^1$ entries of the entire array.
	\end{itemize}
	
	In an additional step, we set more entries to $0$ in such a way that every non-zero entry of $\bd{x}$ is contained in $\bd{x}^{(S)}$ for at most (and thus precisely) one $S$, specifically such an $S$ of maximal cardinality is chosen. In the case of $d = 1$, this means that all the non-zero entries of $\bd{x}^{(\{1\})}$ are removed from $\bd{x}^{(\emptyset)}$ (set to $0$) and this exactly corresponds to the split of $x = x_{(1)} + x_{(\flat)}$ as described above.
	
	In the case $d = 2$, $\bd{x}$ can be regarded as a matrix with rows and columns and one obtains the following arrays (before removing more entries as described above):
	\begin{itemize}
		\item $\bd{x}^{(\emptyset)}$ is the complete array $\bd{x}$.
		\item $\bd{x}^{(\{1\})}$ contains the largest $s$ entries of every column.
		\item $\bd{x}^{(\{2\})}$ contains the largest $s$ entries of every row.
		\item $\bd{x}^{(\{1, 2\})}$ contains the largest $s^2$ entries of the entire array.
	\end{itemize}
	
	And this principle can be generalized to arbitrary orders $d$. With the aforementioned modification such that every entry is just contained in one $\bd{x}^{(S)}$, it holds that 
	\[
	\bd{x} = \sum_{S \subset [d]} \bd{x}^{(S)}.
	\]
	
	With $\vect: \R^{n_1 \times \dots \times n_d} \rightarrow \R^N$ being the function that maps a multidimensional array to a vector, the norm of interest is
	\[
	\|A \vect(\bd{x})\|_2^2 = \langle A \vect(\bd{x}), A \vect(\bd{x}) \rangle =
	\sum_{S, T \subset [d]} \langle A \vect(\bd{x}^{(S)}), A \vect(\bd{x}^{(T)}) \rangle.
	\]
	The approach is to control $\langle A \vect(\bd{x}^{(S)}), A \vect(\bd{x}^{(T)}) \rangle$ by conditioning on the values of the Rademacher vectors associated to the axes in $S$ on the left hand side of the inner product and along the axes in $J$ on the right hand side. Then this becomes a Rademacher chaos of order $2d - |S| - |T|$ and this can be controlled using Theorem \ref{thm:subgaussian_decoupled}. By adding up the $L_p$ norms for the deviations of all these inner products from their expectation over $S$ and $T$, we obtain an upper bound for the $L_p$ norm of the deviation of the entire norm $\|A \vect(\bd{x})\|_2^2$ as desired.
	
	\begin{figure}[h]
		\vspace{-0.5cm}
		\begin{center}
			\def\dsx{0.45}
			\def\dsy{0.35}
			\begin{tikzpicture}
				\draw (-1.25, 0) -- (5, 0);
				
				\foreach \y in {-1, 1}
				\foreach \x in {1,2,...,10}
				\node at (\x * \dsx, \y * \dsy) [circle, fill, inner sep=1.5pt]{};
				
				\draw plot [smooth cycle] coordinates {(0.8 * \dsx, 0.5 * \dsy) (2.2 * \dsx, 0.5 * \dsy) (2.2 * \dsx, 1.5 * \dsy) (0.8 * \dsx, 1.5 * \dsy)};
				\draw plot [smooth cycle] coordinates {(0.8 * \dsx, -0.5 * \dsy) (3.2 * \dsx, -0.5 * \dsy) (3.2 * \dsx, -1.5 * \dsy) (0.8 * \dsx, -1.5 * \dsy)};
				
				\draw[gray, dashed] plot [smooth cycle] coordinates {(2.8 * \dsx, 0.7 * \dsy) (4.2 * \dsx, 0.7 * \dsy) (4.2 * \dsx, 1.3 * \dsy) (2.8 * \dsx, 1.3 * \dsy)};
				\draw[gray, dashed] plot [smooth cycle] coordinates {(4.8 * \dsx, 0.7 * \dsy) (7.2 * \dsx, 0.7 * \dsy) (7.2 * \dsx, 1.3 * \dsy) (4.8 * \dsx, 1.3 * \dsy)};
				\draw[gray, dashed] plot [smooth cycle] coordinates {(3.8 * \dsx, -0.7 * \dsy) (4.2 * \dsx, -0.7 * \dsy) (4.2 * \dsx, -1.3 * \dsy) (3.8 * \dsx, -1.3 * \dsy)};
				\draw[gray, dashed] plot [smooth cycle] coordinates {(4.8 * \dsx, -0.7 * \dsy) (6.2 * \dsx, -0.7 * \dsy) (6.2 * \dsx, -1.3 * \dsy) (4.8 * \dsx, -1.3 * \dsy)};
				
				\draw[gray, dashed] plot [smooth cycle] coordinates {(7.8 * \dsx, -0.7 * \dsy) (7.8 * \dsx, 1.3 * \dsy) (8.2 * \dsx, 1.3 * \dsy) (8.2 * \dsx, -0.7 * \dsy) (9.2 * \dsx, -0.7 * \dsy) (9.2 * \dsx, -1.3 * \dsy) (6.8 * \dsx, -1.3 * \dsy) (6.8 * \dsx, -0.7 * \dsy)};
				\draw[gray, dashed] plot [smooth cycle] coordinates {(8.8 * \dsx, 0.7 * \dsy) (8.8 * \dsx, 1.3 * \dsy) (10.2 * \dsx, 1.3 * \dsy) (10.2 * \dsx, -1.3 * \dsy) (9.8 * \dsx, -1.3 * \dsy) (9.8 * \dsx, 0.7 * \dsy)};

				\draw plot [smooth cycle] coordinates {(2.9 * \dsx, 0.5 * \dsy) (2.9 * \dsx, 1.5 * \dsy) (7.1 * \dsx, 1.5 * \dsy) (7.1 * \dsx, 0.5 * \dsy)};
				\draw plot [smooth cycle] coordinates {(3.9 * \dsx, -0.5 * \dsy) (3.9 * \dsx, -1.5 * \dsy) (6.2 * \dsx, -1.5 * \dsy) (6.2 * \dsx, -0.5 * \dsy)};
				\draw plot [smooth cycle] coordinates {(7.8 * \dsx, 1.5 * \dsy) (10.2 * \dsx, 1.5 * \dsy) (10.2 * \dsx, -1.5 * \dsy) (6.8 * \dsx, -1.5 * \dsy) (6.8 * \dsx, -0.5 * \dsy) (7.5 * \dsx, -0.5 * \dsy)};
				
				\node at (-1.5 * \dsx, 1 * \dsy) {$[d]:$};
				\node at (-1.5 * \dsx, -1 * \dsy) {$[2 d] \backslash d:$};
				
				\node at (1.5 * \dsx, 2.5 * \dsy) {$S$};
				\node at (2 * \dsx, -2.5 * \dsy) {$T$};
				\node at (5 * \dsx, 2.5 * \dsy) {$\bar{I}$};
				\node at (5 * \dsx, -2.5 * \dsy) {$\bar{I}'$};
				\node at (11 * \dsx, 1 * \dsy) {$\bar{J}$};
			\end{tikzpicture}
		\end{center}
		\vspace{-0.5cm}
		\label{fig:partition}
		\caption{A sketch showing an example of the sets $\bar{I},\, \bar{I}',\, \bar{J}$: The axes (black dots) are divided into $[d]$ and $[2d] \backslash [d]$ (horizontal line) depending on whether they arise on the left or right hand side of the inner product. The partition $\bar{I}_1, \dots, \bar{I}_\kappa$ of $[2 d] \backslash (S \cup T + d)$ is shown by the gray dashed lines. The partition sets which only intersect one of $[d]$ or $[2 d] \backslash [d]$ are joined to $\bar{I}$ and $\bar{I}'$ respectively. The remaining partition sets intersect both sides and are joined to $\bar{J}$.}
	\end{figure}
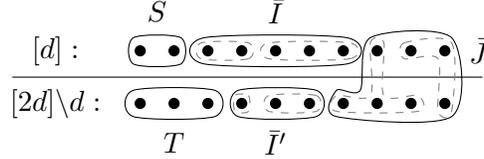
	
	So for any $S, T \subset [d]$, we need to control the $\| \cdot \|_{\bar{I}_1, \dots, \bar{I}_\kappa}$ norm of the array corresponding to the chaos of order $2d - |S| - |T|$ where $\bar{I}_1, \dots, \bar{I}_\kappa$ is an arbitrary partition of the indices $(S^c) \cup (T^c + d)$. Now among the sets $\bar{I}_1, \dots, \bar{I}_\kappa$, we distinguish three types: Those which ones are contained in $[d]$, the ones contained in $[2 d] \backslash [d]$, and those which intersect both these sets. For each type, we form the union of the corresponding sets and obtain the sets $\bar{I}$, $\bar{I}'$ and $\bar{J}$ respectively (see Figure~\ref{fig:partition} for an example). By the definition of the norms in (\ref{eq:partition_norm}), such a norm cannot become smaller by joining some of the partition sets and thus
	\[
	\| \cdot \|_{\bar{I}_1, \dots, \bar{I}_\kappa} \leq \| \cdot \|_{\bar{I}, \bar{I}', \bar{J}}
	\]
	and it is sufficient to find a suitable upper bound on the norm on the right hand side.
	
	So we can rearrange $\bd{x}^{(S)}$ as an array $(x_{j, k, l})_{j, k, l}$ of order $3$ according to the partition of $[d]$ into $S$, $\bar{I}$ and the remaining axes. In an analogous way, $\bd{x}^{(T)}$ can be rearranged to $(y_{j', k', l'})_{j', k', l'}$ according to $T$, $\bar{I}' - d$ and the remaining axes. Now rearrange $\Phi^* \Phi$ to $(B_{j, k, l, j', k', l'})_{j, k, l, j', k', l'}$ according to the arrangement of $\bd{x}^{(S)}$ and $\bd{x}^{(T)}$ to three axes each. We also rearrange $\bigotimes_{r \in S} \xi^{(r)}$ as $(\xi_j)_{j}$ and define $(\xi'_{j'})_{j'}$ analogously. Then the $\| \cdot \|_{\bar{I}, \bar{I}', \bar{J}}$ norm of interest is
	\[
	\sup_{\|\alpha\|_2 = \|\alpha'\|_2 = \|\beta\|_2 = 1} \sum_{\substack{j, k, l \\ j', k', l'}} B_{j, k, l, j', k', l'} \xi_j \xi_{j'}' x_{j, k, l} y_{j', k', l'} \alpha_k \alpha_{k'}' \beta_{j, j'}.
	\]
	
	Noting that for any $z \in \R^n$, $\sup_{\alpha \in \R^n, \|\alpha\|_2 = 1} \sum_{k = 1}^n z_k \alpha_k = \left(\sum_{k = 1}^n z_k^2\right)^{1/2}$, the above expression becomes
	\[
	\sup_{\|\alpha\|_2 = \|\alpha'\|_2 = 1} \left[ \sum_{j, j'} \left( \sum_{k, l, k', l'} B_{j, k, l, j', k', l'} \xi_j \xi_{j'}' x_{j, k, l} y_{j', k', l'} \alpha_k \alpha_{k'}' \right)^2 \right]^{1/2}.
	\]
	
	This expression can be controlled using Lemma \ref{lem:inner_pro_bound}. The vectors $u^{(j, k)}$ of that lemma are vectorized versions of the order $3$ array $\bd{x}$ in which the non-zero entries are restricted to one slice (one particular value of $(j, k)$), multiplied with the corresponding $\xi_k$. The $v^{(j', k')}$ are obtained in an analogous way. In result, this lemma can bound the respective $\| \cdot \|_{\bar{I}, \bar{I}', \bar{J}}$ norm by $C \cdot \frac{\delta} {s^{\frac{|\bar{J}|}{4} + \frac{|\bar{I}|}{2} + \frac{|\bar{I}'|}{2}}}$. The combinatorial Lemma \ref{lem:partition_counting} now ensures that in any case, $\frac{|\bar{J}|}{4} + \frac{|\bar{I}|}{2} + \frac{|\bar{I}'|}{2} \geq \frac{\kappa}{2}$ where $\kappa$ is the number of sets in the original partition. So altogether, we can bound the $\| \cdot \|_{\bar{I}_1, \dots, \bar{I}_\kappa}$ norm by $C \cdot \frac{\delta}{s^\frac{\kappa}{2}}$.
	
	Finally, by Theorem \ref{thm:chaos_moments}, we can use this to bound the $L_p$ moments of the deviations of all inner products $\langle A \vect(\bd{x}^{(S)}), A \vect(\bd{x}^{(T)}) \rangle$ and obtain
	\[
	\left\| \| A \vect(\bd{x}) \|_2^2 - \|\bd{x}\|_2^2 \right\|_{L_p} \leq C(d) \sum_{k = 1}^{2 d} p^\frac{k}{2} \frac{\delta}{s^\frac{k}{2}}.
	\]
	
	This can be converted to a concentration inequality using Lemma \ref{lem:moment_tail_bound}, implying the distributional Johnson-Lindenstrauss property.
	
	\begin{remark} \label{rem:oblivious_sketching}
		In \cite{ahle2020oblivious}, Lemma 4.11 (TensorSRHT) provides a Johnson-Lindenstrauss result which is similar to our Corollary \ref{cor:jl_hadamard} restricted to Hadamard matrices. The proof of this lemma can be found in the extended version \cite{ahle2020obliviousArxiv}. Their proof uses general moment bounds for sums of independent mean $0$ variables to control the probability in the subsampling $P_\Omega$ while conditioning on the random sign vector $\xi$. In contrast, our approach conditions on the RIP of $P_\Omega H$ and then shows the Johnson-Lindenstrauss property by controlling the probability in $\xi$. This gives an advantage for the case that the Johnson-Lindenstrauss property is shown for $p$ vectors simultaneously. For our approach, once $P_\Omega H$ has the RIP, this holds for all $s$-sparse vectors uniformly. Then we only need to show the Johnson-Lindenstrauss property by a union bound with respect to the probability in $\xi$ but not with respect to $P_\Omega$. The advantage of this is that in this case the dependence of the embedding dimension in \cite{ahle2020oblivious} is $(\log p)^{d+1}$ (up to smaller logarithmic factors) while our result only requires $(\log p)^d$ which the example in Section \ref{sec:lower_bounds} proves to be optimal.
		
		On the other hand, our approach makes controlling the probability in $\xi$ more intricate. In \cite{ahle2020oblivious}, Lemma 4.9 provides a result similar to the one by Latala \cite{latala_gauss_chaos} with a better dependence on $d$ but all $\|\cdot\|_{I_1, \dots, I_\kappa}$ bounded by the Frobenius norm. This suffices to control $\xi^T (D_{H_j} x)$ sufficiently for arbitrary $x \in \R^N$ where $H_j$ is the $j$-th row of the Hadamard matrix. The latter is required in \cite{ahle2020oblivious}. In our case, we need to control $\xi^T D_{x} \Phi^T \Phi D_{x} \xi$ for which we make use of the RIP of $\Phi$ and control all the $\|\cdot\|_{I_1, \dots, I_\kappa}$ norms separately. We will discuss more aspects of the relation of our work to \cite{ahle2020oblivious} in Section \ref{sec:discussion}.

	\end{remark}

	\section{Main proofs} \label{sec:proof_general}

	\subsection{Lemma for RIP matrices}

	In this subsection, we prove the following technical lemma mentioned in the previous section. It makes use of the restricted isometry property of the considered matrix and will be of central importance in the proof of Theorem~\ref{thm:main} in Subsection~\ref{sec:main_prof_subsec}. However, in this form it is independent of the underlying higher-order array structures of the signals.
	
	\begin{lemma} \label{lem:inner_pro_bound}
		Let $n_1,\,n_2,\,n_3,\,n_1',\,n_2',\,n_3'$ and $s_1,\,s_2,\,s_3,\,s_1',\,s_2',\,s_3'$ be natural numbers such that $N = n_1 n_2 n_3 = n_1' n_2' n_3'$ and $s = s_1 s_2 s_3 = s_1' s_2' s_3'$ and $\Phi \in \R^{m \times N}$ has the $(2 s, \delta)$-RIP, $B := \Phi^* \Phi - Id_{N}$. Consider vectors $u^{(j, k)}, v^{(j', k')} \in \R^{n_1 n_2 n_3}$ for $(j, k, j', k') \in [n_1] \times [n_2] \times [n_1'] \times [n_2']$ such that all $u^{(j, k)}$ are $s_3$-sparse with disjoint supports and all $v^{(j', k')}$ are $s_3'$-sparse with disjoint supports.
		
		Let $(b_2(1), \dots, b_2(R_2))$ be a partition of $[n_2]$ into sets of size $\leq s_2$ each where $R_2 = \lceil \frac{n_2}{s_2} \rceil$.
		For each $K \in [R_2]$, let $(b_1^{(K)}(1), \dots, b_1^{(K)}(R_1))$ be a partition of $[n_1]$ into sets of size $\leq s_1$ each where $R_1 = \lceil \frac{n_1}{s_1} \rceil$.
		
		
		Analogously, let $b_2'(1), \dots, b_2'(R_2')$ and $b_1^{(K')'}(1), \dots, b_1^{(K')'}(R_1')$ be partitions of $[n_2']$ into sets of size $\leq s_2'$ and $[n_1']$ into sets of size $\leq s_1'$, respectively.
		
		Also, for $j \in [n_1]$ and $K \in [R_2]$, define $u^{(j, (K))} := \sum_{k \in b_2(K)} u^{(j, k)}$ and define $v^{(j', (K'))}$ in the analogous way.
		
		Then
		\begin{align*}
			& \sum_{(j, j') \in [n_1] \times [n_1']} \left( \sum_{(k, k') \in [n_2] \times [n_2']} (u^{(j, k)})^* B v^{(j', k')} \right)^2 \\
			\leq & \delta^2 \left[ \sum_{(J, K, \bar{K}) \in [R_1] \times [R_2]^2} \sqrt{  \sum_{j \in b^{(K)}_{1}(J)} \|u^{(j, (K))}\|_2^2 \|u^{(j, \bar{K})}\|_2^2} \right] \\
			& \cdot \left[ \sum_{(J', K', \bar{K}') \in [R_1'] \times [R_2']^2} \sqrt{ \sum_{j' \in b^{(K')'}_{1'}(J')} \|v^{(j', (K'))}\|_2^2 \|v^{(j', (\bar{K}'))}\|_2^2 } \right].
		\end{align*}
	\end{lemma}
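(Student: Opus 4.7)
The plan is to reorganize the LHS as a six-fold indexed sum of trace expressions $\operatorname{tr}(N B M B^*)$, and to bound each summand by $\delta^2 \|N\|_F \|M\|_F$ via Lemma~\ref{lem:rip_spec_norm}.

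First I would use the partition $(b_2(K))_K$ to write $\sum_{k \in [n_2]} u^{(j,k)} = \sum_{K \in [R_2]} u^{(j,(K))}$, and analogously for the $v^{(j',k')}$. Expanding the square then gives
\[
\left(\sum_{(k,k')} (u^{(j,k)})^* B v^{(j',k')}\right)^{2} = \sum_{K, \bar K, K', \bar K'} \langle u^{(j,(K))}, B v^{(j',(K'))}\rangle\,\langle u^{(j,(\bar K))}, B v^{(j',(\bar K'))}\rangle.
\]
Splitting the outer sums over $j \in [n_1]$ via $(b_1^{(K)}(J))_J$ and over $j' \in [n_1']$ via $(b_1^{(K')'}(J'))_{J'}$ (using the partitions indexed by $K$ and $K'$ respectively, to match the RHS of the lemma), the LHS becomes a six-fold sum over $(J, K, \bar K, J', K', \bar K')$ of the block-quantity
\[
T := \sum_{\substack{j \in b_1^{(K)}(J)\\ j' \in b_1^{(K')'}(J')}} \langle u^{(j,(K))}, B v^{(j',(K'))}\rangle\,\langle u^{(j,(\bar K))}, B v^{(j',(\bar K'))}\rangle.
\]

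For each block I would introduce the matrices
\[
N := \sum_{j \in b_1^{(K)}(J)} u^{(j,(\bar K))}(u^{(j,(K))})^*, \qquad M := \sum_{j' \in b_1^{(K')'}(J')} v^{(j',(K'))}(v^{(j',(\bar K'))})^*,
\]
and check by cyclicity of the trace that $T = \operatorname{tr}(N B M B^*)$. Because the supports of the $u^{(j,k)}$ are pairwise disjoint across $(j,k)$, the inner products $\langle u^{(j,(K))}, u^{(\tilde j,(K))}\rangle$ and $\langle u^{(j,(\bar K))}, u^{(\tilde j,(\bar K))}\rangle$ vanish whenever $j \ne \tilde j$. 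A direct computation of $\operatorname{tr}(N^* N)$ thus collapses to the diagonal and yields $\|N\|_F^2 = \sum_{j \in b_1^{(K)}(J)}\|u^{(j,(K))}\|_2^2 \,\|u^{(j,(\bar K))}\|_2^2$; the same reasoning gives the analogous identity for $\|M\|_F^2$ in the $v$-variables.

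Next I would use the sparsity structure to localize $B$. The row support of $N$ lies in $\bigcup_{j \in b_1^{(K)}(J)} \supp(u^{(j,(\bar K))})$, which by disjointness has cardinality at most $|b_1^{(K)}(J)|\cdot s_2 \cdot s_3 \leq s_1 s_2 s_3 = s$; the column support of $N$ and both supports of $M$ satisfy the same bound. Restricting $B$ to these coordinate subsets yields submatrices $\tilde B_1, \tilde B_2$ of operator norm at most $\delta$ by Lemma~\ref{lem:rip_spec_norm}, and $\operatorname{tr}(N B M B^*) = \operatorname{tr}(N\tilde B_1 M \tilde B_2) = \operatorname{tr}((\tilde B_2 N \tilde B_1)\,M)$. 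The Frobenius inequality $|\operatorname{tr}(XY)|\le \|X\|_F \|Y\|_F$ combined with the submultiplicativity $\|A Z C\|_F \leq \|A\|_{2\to 2}\,\|Z\|_F\,\|C\|_{2\to 2}$ then delivers $|T| \leq \delta^2 \|N\|_F \|M\|_F$, which is exactly the desired block-level bound.

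Summing this block-level inequality over $(J, K, \bar K, J', K', \bar K')$ factorizes as a product of a $u$-sum and a $v$-sum, yielding the RHS of the lemma. The main obstacle is arranging the matrix factorization $T = \operatorname{tr}(N B M B^*)$ so that the disjoint-supports hypothesis simultaneously (i) kills the cross terms in $\|N\|_F^2$, producing the \emph{mixed} quantity $\|u^{(j,(K))}\|_2^2 \|u^{(j,(\bar K))}\|_2^2$ under the square root rather than the looser product $\bigl(\sum_j \|u^{(j,(K))}\|_2^2\bigr)\bigl(\sum_j \|u^{(j,(\bar K))}\|_2^2\bigr)$, and (ii) confines the row and column supports of $N$ and $M$ to sets of size $\leq s$ so that Lemma~\ref{lem:rip_spec_norm} applies. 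Once the bookkeeping is set up, the remainder is a routine application of Frobenius Cauchy--Schwarz.
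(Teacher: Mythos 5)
Your proposal is correct and follows essentially the same route as the paper's proof: expand the square, regroup into blocks indexed by $(J,K,\bar K,J',K',\bar K')$, write each block as a Frobenius inner product (your trace $\operatorname{tr}(NBMB^*)$ is the paper's $\langle \sum_j u^{(j,(K))}(u^{(j,\bar K)})^*,\, B(\sum_{j'} v^{(j',(K'))}(v^{(j',(\bar K'))})^*)B^*\rangle_F$), use disjoint supports to collapse $\|N\|_F^2$ to the diagonal, and use the $\le s$ row/column supports together with Lemma~\ref{lem:rip_spec_norm} to bound the restricted operator norm of $B$ by $\delta$ before applying Frobenius Cauchy--Schwarz. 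The only difference is presentational (trace cyclicity and two-step submultiplicativity versus the paper's one-line inequality $\langle A, BCB^*\rangle_F \le \|A\|_F\|B\|_{2\to 2}^2\|C\|_F$).
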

	
	\begin{proof}
		
		We can bound the desired expression by rearranging terms and blockwise summation.
		
		\begin{align*}
			& \sum_{(j, j') \in [n_1] \times [n_1']} \left( \sum_{(k, k') \in [n_2] \times [n_2']} (u^{(j, k)})^* B v^{(j', k')} \right)^2 \\
			= & \sum_{(j, j') \in [n_1] \times [n_1']} \sum_{(k, \bar{k}, k', \bar{k}') \in [n_2]^2 \times [n_2']^2} (u^{(j, k)})^* B v^{(j', k')} (v^{(j', \bar{k}')})^* B^* u^{(j, \bar{k})} \\
			= & \sum_{\substack{(J, J') \in [R_1] \times [R_1'] \\ (K, \bar{K}) \in [R_2]^2 \\ (K', \bar{K}') \in [R_2']^2 \\ j \in b_1^{(K)}(J) }}  (u^{(j, (K))} )^* B \left( \sum_{j' \in b^{(K')}_{1'}(J')} v^{(j', (K'))} (v^{(j', (\bar{K}'))})^* \right) B^* u^{(j, \bar{K})} \\
			= & \sum_{\substack{J \in [R_1] \\ J' \in [R_1'] \\ (K, \bar{K}) \in [R_2]^2 \\ (K', \bar{K}') \in [R_2']^2}} \left\langle  \sum_{\substack{j \in \\ b_{1}^{(K)}(J)}} u^{(j, (K))} (u^{(j, \bar{K})})^*, B \left( \sum_{\substack{j' \in \\ b_{1'}^{(K')}(J')}} v^{(j', (K'))} (v^{(j', (\bar{K}'))})^* \right) B^* \right\rangle_F.
		\end{align*}
		
		Note that every $u^{(j, k)}$ is $s_3$-sparse and $|b_2(K)| \leq s_2$, $|b_1^{(K)}(J)| \leq s_1$. Thus, the number of nonzero rows and the number of nonzero columns of the matrix $\sum_{j \in b_{1}^{(K)}(J)} u^{(j, (K))} (u^{(j, \bar{K})})^*$ can be at most $s = s_1 s_2 s_3$ each. The same holds for $\sum_{j' \in b_{1'}^{(K')'}(J')} v^{(j', (K'))} (v^{(j', (\bar{K}'))})^*$. So we can restrict $B$ to a submatrix of $s$ rows and $s$ columns which has an operator norm $\leq \delta$ by the RIP assumption (Lemma \ref{lem:rip_spec_norm}). Using that $\langle A, B C B^* \rangle_F \leq \|A\|_F \|B C B^*\|_F \leq \|A\|_F \|B\|_{2 \rightarrow 2}^2 \|C\|_F$, we can bound the expression by
		
		\begin{align*}
			& \delta^2 \sum_{\substack{J \in [R_1] \\ J' \in [R_1'] \\ (K, \bar{K}) \in [R_2]^2 \\ (K', \bar{K}') \in [R_2']^2}} \left\|  \sum_{j \in b^{(K)}_{1}(J)} u^{(j, (K))} (u^{(j, \bar{K})})^* \right\|_F \left\| \sum_{j' \in b_{1'}^{(K')}(J')} v^{(j', (K'))} (v^{(j', (\bar{K}'))})^* \right\|_F \\
			\leq & \delta^2\sum_{\substack{(J, J') \in [R_1] \times [R_1'] \\ (K, \bar{K}) \in [R_2]^2 \\ (K', \bar{K}') \in [R_2']^2}} \sqrt{  \sum_{j \in b_{1}^{(K)}(J)} \|u^{(j, (K))}\|_2^2 \|u^{(j, \bar{K})}\|_2^2 \sum_{j' \in b_{1'}^{(K')}(J')} \|v^{(j', (K'))}\|_2^2 \|v^{(j', (\bar{K}'))}\|_2^2 } \\
			= & \delta^2 \left[ \sum_{\substack{J \in [R_1] \\ (K, \bar{K}) \in [R_2]^2}} \sqrt{  \sum_{j \in b^{(K)}_{1}(J)} \|u^{(j, (K))}\|_2^2 \|u^{(j, \bar{K})}\|_2^2} \right] \\
			& \cdot \left[ \sum_{\substack{J' \in [R_1'] \\ (K', \bar{K}') \in [R_2']^2}} \sqrt{ \sum_{j' \in b^{(K')}_{1'}(J')} \|v^{(j', (K'))}\|_2^2 \|v^{(j', (\bar{K}'))}\|_2^2 } \right],
		\end{align*}
		where in the first step we used that the $u^{(j, (K))} (u^{(j, \bar{K})})^*$ have disjoint supports.
		
	\end{proof}

	\subsection{Notation of array indices} \label{sec:array_indices}
	
	In this subsection, we introduce the notation used to handle higher order arrays. In most parts, this is adapted from \cite{hanson_wright_tensors}, see Section 1.4 there.
	
	Consider a vector of dimensions $\bd{n} = (n_1, n_2, \dots, n_d)$ and $I \subset [d]$. We call a function $\bd{i}: I \rightarrow \mathbb{N}$ a partial index of order $d$ on $I$ if for all $l \in I$, $\bd{i}_l := \bd{i}(l) \in [n_l]$. Assume there is exactly one such function if $I = \emptyset$. If $I = [d]$, then $\bd{i}$ is called an index of order $d$. We denote the set of all partial indices of order $d$ on $I$ as $\Jn(I)$ and $\Jn := \Jn([d])$ is the set of all indices of order $d$. $\Jn$ can be identified with $[n_1] \times \dots \times [n_d]$. We also write $\bd{n}^{\times 2} = (n_1, \dots, n_d, n_1, \dots, n_d)$.
	
	A function $\bd{x}: \Jn \rightarrow \R$ is called an array of order $d$. Because of the aforementioned identification, we also write $\bd{x} \in \R^{n_1 \times \dots \times n_d} =: \R^{\bd{n}}$. Similarly $\bar{\bd{x}}: \Jn(I) \rightarrow \R$ is called a partial array and we denote $\bar{\bd{x}} \in \R^{\bd{n}}(I)$. For $I = [d]$, this is just a regular array. We denote
	\[
	\|\bar{\bd{x}}\|_2 := \left[ \sum_{\bd{i} \in \Jn(I)} \bar{x}_{\bd{i}}^2 \right]^\frac{1}{2}.
	\]
	
	For partial Indices $\bd{i}$ on $I$ and $\bd{j}$ on $J$, denote $\bd{i} \dot{\times} \bd{j}$ for the partial index on $I \cup J$ such that for $l \in I \cup J$,
	\[
	(\bd{i} \dot{\times} \bd{j})_l = \begin{cases}
		\bd{i}_l & \text{if } l \in I \\
		\bd{j}_l & \text{if } l \in J
	\end{cases}
	\]
	
	Similarly, for $I \subset [2 d]$,  $J \subset [d]$ such that $I \cap (J + d) = \emptyset$, $\bd{i} \in \bd{J}^{\bd{n}^{\times 2}}(I)$ and $\bd{j} \in \bd{J}^{\bd{n}}(J)$, we define the index $\bd{i} \dot+ \bd{j} \in \bd{J}^{\bd{n}^{\times 2}}(I \cup (J + d))$ such that $\bd{j}$ is shifted by $d$ in the sense that
	\[
	(\bd{i} \dot+ \bd{j})_{l} = \begin{cases}
		\bd{i}_l & \text{if } l \in I \\
		\bd{j}_{l - d} & \text{if } l \in J + d.
	\end{cases}
	\]
	
	The following function establishes a relation between array indices and indices of the rearrangement of the array as a vector.
	
	\begin{definition}
		For a dimension vector $\bd{n} = (n_1, n_2, \dots, n_d)$, a subset $I = \{j_1, \dots, j_{|I|}\} \subset [d]$ for $j_1 < \dots < j_{|I|}$ and $N := \prod_{l \in I} n_l$, define the function $L^\bd{n}_I: \Jn(I) \rightarrow [N]$ by
		\[
		L^\bd{n}_I(\bd{j}) = \sum_{l = 1}^{|I|} (\bd{i}_{j_l} - 1) \prod_{l' = 1}^{l - 1} n_{j_{l'}} + 1.
		\]
		which defines a bijection. Its inverse is called $\hat{L}^\bd{n}_I: [N] \rightarrow \Jn(I)$.
	\end{definition}
	
	\begin{definition}
		For an array $\bd{a} \in \R^{\bd{n}}$, the vectorization $\vect(\bd{a}) \in \R^N$ is defined such that for all $\bd{j} \in \Jn$, $(\vect(\bd{a}))_{L^\bd{n}(\bd{j})} = \bd{a}_{\bd{j}}$.
	\end{definition}
	
	\begin{definition}
		Let $I \subset J \subset [d]$. For a partial index $\bd{j} \in \Jn(J)$, define the restriction $\bd{j}_I \in \Jn(I)$ such that for all $l \in I$, $(\bd{j}_I)_l = \bd{j}_{l}$.
	\end{definition}
	
	Usually, we denote vectors and matrices in regular letters and arrays of general order in bold letters. In some cases, we rearrange the entries of vectors or matrices to higher order arrays. In these cases, we use matching letters, e.g., $A$ for the matrix and $\bd{A}$ for its rearrangement as a higher order array.

	\subsection{Decoupling} \label{sec:decoupling}
	
	Similar to the well-know decompling lemma for chaoses of order 2 (see e.g.~Theorem 6.1.1 in \cite{vershynin_hdp}), a main ingredient for the proof of our main result will be a decoupling theorem for our higher-order case. Such a decoupling theorem for exactly the setup considered in this work has been studied in \cite{hanson_wright_tensors} where Theorem 2.5 together with the subsequent Remark 2.6 about its simplification for the Rademacher case yields the following result.
	
	\begin{theorem}[Theorem 2.5, Remark 2.6 in \cite{hanson_wright_tensors}] \label{thm:decoupling_main}
		Let $\bd{n} = (n_1, \dots, n_d) \in \mathbb{N}^d$, $\bd{A} \in \mathbb{R}^{\bd{n}^{\times 2}}$, $\xi^{(1)} \in \{\pm 1\}^{n_1}, \dots, \xi^{(d)} \in \{\pm 1\}^{n_d}$ independent Rademacher vectors and $\bar{\xi}^{(1)}, \dots, \bar{\xi}^{(d)}$ corresponding independent copies. Then
		\begin{align*}
			& \left\|\sum_{\bd{i}, \bd{i}' \in \Jn} A_{\bd{i} \dot+ \bd{i}'} \prod_{l \in [d]} \xi^{(l)}_{\bd{i}_l} \xi^{(l)}_{\bd{i}'_l} - \mathbb{E} \sum_{\bd{i}, \bd{i}' \in \Jn} A_{\bd{i} \dot+ \bd{i}'} \prod_{l \in [d]} \xi^{(l)}_{\bd{i}_l} \xi^{(l)}_{\bd{i}'_l} \right\|_{L_p} \\
			& \leq
			\sum_{\substack{I \subset [d]: \\ I \neq [d]}}
			4^{d - |I|} \left\| \sum_{\substack{\bd{i} \in \Jn(I) \\ \bd{k}, \bd{k}' \in \Jn(I^c)}} A_{(\bd{i} \dot\times \bd{k}) \dot+ (\bd{i} \dot\times \bd{k}')} \prod_{l \in I^c} \xi^{(l)}_{\bd{k}_l} \bar{\xi}^{(l)}_{\bd{k}'_l} \right\|_{L_p}
		\end{align*}
	\end{theorem}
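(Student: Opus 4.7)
The plan is to reduce the fully-coupled $(2d)$-linear Rademacher chaos on the left-hand side to a sum of decoupled chaoses indexed by the subsets $I \subsetneq [d]$ of coordinates that remain ``diagonal,'' by iterating the classical second-order Rademacher decoupling inequality one sign vector at a time.

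The first step is to isolate the diagonal contributions using the elementary identity $\xi^{(l)}_i \xi^{(l)}_j = \mathbf{1}\{i=j\} + Z^{(l)}_{i,j}$, where $Z^{(l)}_{i,j} := \xi^{(l)}_i \xi^{(l)}_j - \mathbf{1}\{i=j\}$ is centered. Expanding the product $\prod_{l \in [d]} \xi^{(l)}_{\bd{i}_l} \xi^{(l)}_{\bd{i}'_l}$ into a sum over subsets $I \subset [d]$ (with the factor $\mathbf{1}\{\bd{i}_l = \bd{i}'_l\}$ contributed on $I$ and the factor $Z^{(l)}_{\bd{i}_l, \bd{i}'_l}$ contributed on $I^c$), the deterministic summand corresponding to $I = [d]$ is precisely $\mathbb{E}X$. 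After subtracting it and applying the triangle inequality in $L_p$, the problem reduces to bounding each centered summand $Y_I$ indexed by $I \subsetneq [d]$ by $4^{d-|I|}$ times the corresponding decoupled chaos on the right-hand side.

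For fixed $I$, I would then decouple the $|I^c|$ remaining sign vectors one at a time. Conditioning on all sign vectors other than $\xi^{(l)}$ for some $l \in I^c$ reduces $Y_I$ to a quadratic Rademacher chaos in $\xi^{(l)}$ alone whose diagonal vanishes (because $Z^{(l)}_{i,i} = 0$); the classical order-two decoupling inequality for Rademacher chaos then replaces this by an independent bilinear expression in $\xi^{(l)}$ and a fresh copy $\bar{\xi}^{(l)}$ at the cost of a universal factor $4$. Iterating over the coordinates of $I^c$, combined with Fubini and the Jensen bound $\|\mathbb{E}_\bullet[\,\cdot\,]\|_{L_p} \leq \|\cdot\|_{L_p}$ to peel off the conditional expectations, produces the cumulative constant $4^{d-|I|}$ and leaves exactly the decoupled expression displayed on the right-hand side.

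The principal technical obstacle is ensuring that the iterated conditional decoupling composes cleanly: at each stage one must verify that the already-decoupled sign vectors $\bar{\xi}^{(l')}$ introduced in earlier rounds do not obstruct the decoupling step applied to the current $\xi^{(l)}$, and that no additional diagonal cross-terms sneak back in. The cleanest route I see is through the Bernoulli selector formulation, introducing independent Bernoulli$(1/2)$ variables $\delta^{(l)}_i$ and exploiting the identity $\mathbf{1}\{i \neq j\} = 4\,\mathbb{E}[\delta^{(l)}_i(1-\delta^{(l)}_j)]$, so that on the event $\{\delta^{(l)}_i = 1,\, \delta^{(l)}_j = 0\}$ the surviving factors $\xi^{(l)}_{\bd{k}_l}$ and $\xi^{(l)}_{\bd{k}'_l}$ depend on disjoint coordinate blocks and can therefore be freely symmetrized against the independent copies $\bar{\xi}^{(l)}$. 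Applied uniformly across $l \in I^c$, this formulation makes the induction on $|I^c|$ transparent and keeps the multiplicative constants under control, yielding exactly the $4^{d-|I|}$ factor in the statement.
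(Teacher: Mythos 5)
The paper does not prove this statement at all: it is imported verbatim as Theorem 2.5 together with Remark 2.6 of the cited reference on Hanson--Wright inequalities for tensors, so there is no in-paper argument to compare against. Your reconstruction is nevertheless sound and follows the standard route for such results. The expansion $\xi^{(l)}_i\xi^{(l)}_j=\mathbf{1}\{i=j\}+Z^{(l)}_{i,j}$ correctly produces the sum over $I\subset[d]$, the $I=[d]$ term is indeed $\mathbb{E}X$ (since $\mathbb{E}[\xi^{(l)}_i\xi^{(l)}_j]=\mathbf{1}\{i=j\}$ and the factors are independent), and each remaining $Y_I$ is a chaos in which every coordinate $l\in I^c$ carries the off-diagonal restriction $\bd{k}_l\neq\bd{k}'_l$ while the coordinates in $I$ are glued. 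Iterated conditional decoupling then works exactly as you describe: conditioning on all sign vectors except one $\xi^{(l)}$, $l\in I^c$, the expression is a quadratic Rademacher chaos with vanishing diagonal, so the classical convex-function decoupling inequality (constant $4$, decoupled side summed over all index pairs) applies conditionally; Fubini integrates out the conditioning, the previously introduced copies $\bar\xi^{(l')}$ are simply frozen coefficients at each stage, and the product of the per-coordinate constants gives $4^{d-|I|}$. The Bernoulli-selector identity $\mathbf{1}\{i\neq j\}=4\,\mathbb{E}[\delta_i(1-\delta_j)]$ you invoke is the standard engine behind that one-step inequality. What your write-up leaves implicit, and what a full proof would need to spell out, is the precise conditional form of the one-step decoupling (stated for arbitrary convex $F$ so that it survives the outer expectation) and the bookkeeping confirming that after each step the remaining coordinates still carry their off-diagonal indicators; neither point presents any real difficulty.
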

	
	In the proof of our main result, we will consider such higher order chaos expressions for arrays $\bd{A} \in \R^{\bd{n}^{\times 2}}$ such that
	\[
	A_{\bd{i} \dot+ \bd{i}'} = B_{\bd{i} \dot+ \bd{i}'} x_{\bd{i}} x_{\bd{i}'}
	\]
	for all $\bd{i}, \bd{i}' \in \Jn$, where $\bd{B} \in \R^{\bd{n}^{\times 2}}$ and $\bd{x} \in \R^{\bd{n}}$.
	Specifically, we need to obtain norm bounds as in Theorem \ref{thm:decoupling_main} that hold for one particular choice of $\bd{B}$ but all $\bd{x} \in \R^{\bd{n}}$. In this situation, the following theorem refines the statement of Theorem \ref{thm:decoupling_main}.
	
	\begin{theorem} \label{thm:decoupling_2}
		Let $\bd{B} \in \mathbb{R}^{\bd{n}^{\times 2}}$ and $\xi^{(1)} \in \{\pm 1\}^{n_1}, \dots, \xi^{(d)} \in \{\pm 1\}^{n_d}$ be independent Rademacher vectors. Define $\bm{\xi} \in \R^{\bd{n}}$ by
		\[
		\xi_{\bd{i}} = \prod_{l = 1}^d \xi^{(l)}_{\bd{i}_l}.
		\]
		and let $\bar{\bm{\xi}}$ be an independent copy of $\bm{\xi}$. Let $p \geq 1$.
		
		Assume that for all $\bd{x} \in \R^{\bd{n}}$, $\|\bd{x}\|_2 = 1$, it holds that
		\[
		\left\| \sum_{\bd{i}, \bd{i}' \in \Jn} B_{\bd{i} \dot+ \bd{i}'} x_{\bd{i}} x_{\bd{i}'} \xi_{\bd{i}} \bar{\xi}_{\bd{i}'} \right\|_{L_p} \leq \gamma_p.
		\]
		
		Then also for all $\bd{x} \in \R^{\bd{n}}$, $\|\bd{x}\|_2 = 1$, it holds that
		\[
		\left\| \sum_{\bd{i}, \bd{i}' \in \Jn} B_{\bd{i} \dot+ \bd{i}'} x_{\bd{i}} x_{\bd{i}'} \xi_{\bd{i}} \xi_{\bd{i}'} -
		\mathbb{E} \sum_{\bd{i}, \bd{i}' \in \Jn} B_{\bd{i} \dot+ \bd{i}'} x_{\bd{i}} x_{\bd{i}'} \xi_{\bd{i}} \xi_{\bd{i}'} \right\|_{L_p} \leq 5^d \gamma_p.
		\]
	\end{theorem}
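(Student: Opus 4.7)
The plan is to apply Theorem \ref{thm:decoupling_main} to the array $\bd{A} \in \R^{\bd{n}^{\times 2}}$ defined by $A_{\bd{i} \dot+ \bd{i}'} = B_{\bd{i} \dot+ \bd{i}'} x_{\bd{i}} x_{\bd{i}'}$, and then bound each of the $2^d - 1$ partial decoupling terms that appear on the right-hand side by $\gamma_p$. Since $\sum_{I \subsetneq [d]} 4^{d - |I|} = 5^d - 1 \leq 5^d$, this will immediately yield the claim.

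Fix $I \subsetneq [d]$ and consider the partial term
\[
T_I \;:=\; \sum_{\substack{\bd{a} \in \Jn(I) \\ \bd{k}, \bd{k}' \in \Jn(I^c)}} B_{(\bd{a} \dot\times \bd{k}) \dot+ (\bd{a} \dot\times \bd{k}')} \, x_{\bd{a} \dot\times \bd{k}} \, x_{\bd{a} \dot\times \bd{k}'} \prod_{l \in I^c} \xi^{(l)}_{\bd{k}_l} \bar{\xi}^{(l)}_{\bd{k}'_l}
\]
that arises from Theorem \ref{thm:decoupling_main}. The main step is to reduce the bound on $T_I$ to the hypothesis via a \emph{slicing} argument. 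For each $\bd{a} \in \Jn(I)$, let $\bd{x}^{(\bd{a})} \in \R^{\bd{n}}$ denote the restriction of $\bd{x}$ to the slice $\{\bd{j} : \bd{j}_I = \bd{a}\}$ (with zeros elsewhere), so that $\sum_{\bd{a}} \|\bd{x}^{(\bd{a})}\|_2^2 = \|\bd{x}\|_2^2 = 1$. Writing $T_I = \sum_{\bd{a}} T_I^{\bd{a}}$ where $T_I^{\bd{a}}$ is the inner sum over $\bd{k}, \bd{k}'$, I would then compare $T_I^{\bd{a}}$ to the fully decoupled chaos $S(\bd{a})$ associated with the unit vector $\bd{x}^{(\bd{a})}/\|\bd{x}^{(\bd{a})}\|_2$.

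The key calculation exploits the Kronecker structure of $\bm{\xi}$: since $\xi_{\bd{a} \dot\times \bd{k}} = \prod_{l \in I}\xi^{(l)}_{\bd{a}_l} \prod_{l \in I^c}\xi^{(l)}_{\bd{k}_l}$ (and analogously for $\bar\xi$), the factors on the $I$-axes collect into a single deterministic (given the Rademachers) sign $\epsilon_{\bd{a}} = \prod_{l \in I} \xi^{(l)}_{\bd{a}_l} \bar\xi^{(l)}_{\bd{a}_l} \in \{\pm 1\}$. A direct expansion therefore gives
\[
S(\bd{a}) \;=\; \frac{\epsilon_{\bd{a}}}{\|\bd{x}^{(\bd{a})}\|_2^2}\, T_I^{\bd{a}},
\qquad\text{i.e.}\qquad
T_I^{\bd{a}} \;=\; \|\bd{x}^{(\bd{a})}\|_2^2 \, \epsilon_{\bd{a}} \, S(\bd{a}).
\]
Applying the hypothesis to $\bd{x}^{(\bd{a})}/\|\bd{x}^{(\bd{a})}\|_2$ yields $\|S(\bd{a})\|_{L_p} \leq \gamma_p$, and since $|\epsilon_{\bd{a}}| = 1$ we obtain $\|T_I^{\bd{a}}\|_{L_p} \leq \|\bd{x}^{(\bd{a})}\|_2^2 \, \gamma_p$.

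Putting everything together using the triangle inequality,
\[
\|T_I\|_{L_p} \;\leq\; \sum_{\bd{a} \in \Jn(I)} \|T_I^{\bd{a}}\|_{L_p} \;\leq\; \gamma_p \sum_{\bd{a}} \|\bd{x}^{(\bd{a})}\|_2^2 \;=\; \gamma_p,
\]
and summing Theorem \ref{thm:decoupling_main} over all $I \subsetneq [d]$ with weights $4^{d-|I|}$ gives the constant $5^d - 1 \leq 5^d$. The main subtlety (and the only place where the Kronecker structure of $\bm\xi$ really matters) is the reduction of $T_I^{\bd{a}}$ to a full decoupled chaos: without the factorization $\xi_{\bd{a} \dot\times \bd{k}} \bar\xi_{\bd{a} \dot\times \bd{k}'} = \epsilon_{\bd{a}} \prod_{l \in I^c} \xi^{(l)}_{\bd{k}_l} \bar\xi^{(l)}_{\bd{k}'_l}$, the $I$-axis contributions from $\bm\xi$ and $\bar{\bm\xi}$ would not collapse to a single sign and one could not apply the hypothesis slice by slice.
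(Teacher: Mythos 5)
Your proposal is correct and follows essentially the same route as the paper's proof: apply Theorem \ref{thm:decoupling_main}, slice $\bd{x}$ along the $I$-axes into the arrays $\bd{x}^{(\bd{a})}$, collapse the $I$-axis Rademacher factors into the unit-modulus sign $\epsilon_{\bd{a}}$, apply the hypothesis to each normalized slice, and sum the weights $4^{d-|I|}$ via the binomial theorem to get $5^d$. The only cosmetic difference is that the paper explicitly treats the degenerate case $\|\bd{x}^{(\bd{a})}\|_2 = 0$, which is trivial.
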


	\begin{proof}
		By Theorem \ref{thm:decoupling_main}, 
		\begin{equation}
			\left\| \sum_{\bd{i}, \bd{i}' \in \Jn} B_{\bd{i} \dot+ \bd{i}'} x_{\bd{i}} x_{\bd{i}'} \xi_{\bd{i}} \xi_{\bd{i}'} -
			\mathbb{E} \sum_{\bd{i}, \bd{i}' \in \Jn} B_{\bd{i} \dot+ \bd{i}'} x_{\bd{i}} x_{\bd{i}'} \xi_{\bd{i}} \xi_{\bd{i}'} \right\|_{L_p} \leq
			\sum_{\substack{I \subset [d] \\ I \neq [d]}} 4^{d - |I|} \|\bar{S}_I\|_{L_p}
			\label{eq:decoupling_2_first}
		\end{equation}
		
		for
		\[
		\bar{S}_I = \sum_{\bd{i} \in \Jn(I)} \sum_{ \bd{j}, \bd{j}' \in \Jn(I^c) } B_{(\bd{i} \dot\times \bd{j}) \dot+ (\bd{i} \dot\times \bd{j}')} x_{\bd{i} \dot\times \bd{j}} x_{\bd{i} \dot\times \bd{j}'} \prod_{l \in I^c} \xi^{(l)}_{\bd{j}_l} \bar{\xi}^{(l)}_{\bd{j}'_{l}}.
		\]
		
		Now fix $[d] \neq I \subset [d]$ and for each $\bd{i} \in \Jn(I)$, define
		\[
		x^{(\bd{i})}_{\bd{i}' \dot\times \bd{j}} = \begin{cases}
			x_{\bd{i} \dot\times \bd{j}} & \text{if } \bd{i}' = \bd{i} \\
			0 & \text{otherwise}.
		\end{cases}
		\]
		for $\bd{i}' \in \Jn(I)$, $\bd{j} \in \Jn(I^c)$.
		
		This gives us
		\begin{align*}
			\bar{S}_I = &  \sum_{\bar{\bd{i}} \in \Jn(I)} \sum_{\bd{i}, \bd{i}' \in \Jn(I)} \sum_{ \bd{j}, \bd{j}' \in \Jn(I^c) } B_{(\bd{i} \dot\times \bd{j}) \dot+ (\bd{i} \dot\times \bd{j}')} x^{(\bar{\bd{i}})}_{\bd{i} \dot\times \bd{j}} x^{(\bar{\bd{i}})}_{\bd{i}' \dot\times \bd{j}'} \prod_{l \in I^c} \xi^{(l)}_{\bd{j}_l} \bar{\xi}^{(l)}_{\bd{j}'_{l}} \\
			= & \sum_{\bar{\bd{i}} \in \Jn(I)} \sum_{\bd{i}, \bd{i}' \in \Jn} B_{\bd{i} \dot+ \bd{i}'} x^{(\bar{\bd{i}})}_{\bd{i}} x^{(\bar{\bd{i}})}_{\bd{i}'} \prod_{l \in I^c} \xi^{(l)}_{\bd{i}_l} \bar{\xi}^{(l)}_{\bd{i}'_{l}}
		\end{align*}
		
		Note that each summand is non-zero only if $\bd{i}_I = \bd{i}'_I = \bar{\bd{i}}$ and in that case
		\begin{align*}
			\prod_{l \in I^c} \xi^{(l)}_{\bd{i}_l} \bar{\xi}^{(l)}_{\bd{i}'_{l}} \cdot 1 = 
			\prod_{l \in I^c} \xi^{(l)}_{\bd{i}_l} \bar{\xi}^{(l)}_{\bd{i}'_{l}} \cdot \left( \prod_{l \in I} \xi^{(l)}_{\bar{\bd{i}}_l} \bar{\xi}^{(l)}_{\bar{\bd{i}}_{l}} \right)^2 =
			\xi_{\bd{i}} \bar{\xi_{\bd{i}'}} \cdot \prod_{l \in I} \xi^{(l)}_{\bar{\bd{i}}_l} \bar{\xi}^{(l)}_{\bar{\bd{i}}_{l}}.
		\end{align*}
		
		So this yields
		\begin{align*}
			\bar{S}_I = 
			\sum_{\bar{\bd{i}} \in \Jn(I)} \left( \prod_{l \in I} \xi^{(l)}_{\bar{\bd{i}}_l} \bar{\xi}^{(l)}_{\bar{\bd{i}}_l} \right) \sum_{\bd{i}, \bd{i}' \in \Jn} B_{\bd{i} \dot+ \bd{i}'} x^{(\bar{\bd{i}})}_{\bd{i}} x^{(\bar{\bd{i}})}_{\bd{i}'} \xi_{\bd{i}} \bar{\xi_{\bd{i}'}}.
		\end{align*}
		
		Because all the Rademacher variables are $\pm 1$, it holds that $\left| \prod_{l \in I} \xi^{(l)}_{\bar{\bd{i}}_l} \bar{\xi}^{(l)}_{\bar{\bd{i}}_l} \right| = 1$. Using this and the triangle inequality, we obtain
		\begin{align*}
			\|\bar{S}_I\|_{L_p} \leq & \sum_{\bar{\bd{i}} \in \Jn(I)} \left\| \left| \prod_{l \in I} \xi^{(l)}_{\bar{\bd{i}}_l} \bar{\xi}^{(l)}_{\bar{\bd{i}}_l}  \right| \left| \sum_{\bd{i}, \bd{i}' \in \Jn} B_{\bd{i} \dot+ \bd{i}'} x^{(\bar{\bd{i}})}_{\bd{i}} x^{(\bar{\bd{i}})}_{\bd{i}'} \xi_{\bd{i}} \xi_{\bd{i}'} \right| \right\|_{L_p} \\
			= & \sum_{\bar{\bd{i}} \in \Jn(I)} \left\| \sum_{\bd{i}, \bd{i}' \in \Jn} B_{\bd{i} \dot+ \bd{i}'} x^{(\bar{\bd{i}})}_{\bd{i}} x^{(\bar{\bd{i}})}_{\bd{i}'} \xi_{\bd{i}} \xi_{\bd{i}'} \right\|_{L_p} =:
			\sum_{\bar{\bd{i}} \in \Jn(I)} \|\bar{S}_{I, \bar{\bd{i}}} \|_{L_p}.
		\end{align*}
		
		If $\|\bd{x}^{(\bar{\bd{i}})}\|_2 = 0$, then $\|\bar{S}_{I, \bar{\bd{i}}} \|_{L_p} = 0 = \|\bd{x}^{(\bar{\bd{i}})}\|_2^2 \cdot \gamma_p$. Otherwise, the array with entries $\frac{x^{(\bar{\bd{i}})}_{\bd{i}}}{\|\bd{x}^{(\bar{\bd{i}})}\|_2}$ has a $\|\cdot\|_2$ norm of $1$ and by the assumption of the theorem
		\[
		\|\bar{S}_{I, \bar{\bd{i}}} \|_{L_p} \leq \|\bd{x}^{(\bar{\bd{i}})}\|_2^2  \gamma_p
		\]
		which then holds in all cases and this implies
		\[
		\|\bar{S}_I\|_{L_p} \leq \sum_{\bar{\bd{i}} \in \Jn(I)} \|\bd{x}^{(\bar{\bd{i}})}\|_2^2  \gamma_p = \|\bd{x}\|_2^2 \gamma_p = \gamma_p.
		\]
		
		Substituting into (\ref{eq:decoupling_2_first}) yields
		\begin{align*}
			\left\| \sum_{\bd{i}, \bd{i}' \in \Jn} B_{\bd{i} \dot+ \bd{i}'} x_{\bd{i}} x_{\bd{i}'} \xi_{\bd{i}} \xi_{\bd{i}'} -
			\mathbb{E} \sum_{\bd{i}, \bd{i}' \in \Jn} B_{\bd{i} \dot+ \bd{i}'} x_{\bd{i}} x_{\bd{i}'} \xi_{\bd{i}} \xi_{\bd{i}'} \right\|_{L_p} \leq &
			\sum_{\substack{I \subset [d] \\ I \neq [d]}} 4^{d - |I|} \gamma_p \\
			= & \sum_{k = 0}^{d - 1} \sum_{\substack{I \subset [d] \\ |I| = k}} 4^{d - k} \gamma_p.
		\end{align*}
		
		Noting that there are precisely ${d \choose k}$ sets $I \subset [d]$ with $|I| = k$, we can bound this by
		\[
		\gamma_p \sum_{k = 0}^{d - 1} {d \choose k} 4^{d - k} \leq \gamma_p \sum_{k = 0}^{d} {d \choose k} 4^{d - k} \cdot 1^k = 
		\gamma_p (4 + 1)^d = 5^d \gamma_p
		\]
		which completes the proof.
	\end{proof}

	\subsection{Proof of Theorem \ref{thm:main}} \label{sec:main_prof_subsec}
	
	Since every vector $x \in \R^N$ can be rearranged to an array in $\R^{\bd{n}}$, it is sufficient to prove
	\[
	\mathbb{P}\left( \left| \|A \vect(\bd{x})\|_2^2 - \|\bd{x}\|_2^2 \right| > \epsilon \right) \leq \eta
	\]
	for any $\bd{x} \in \R^{\bd{n}}$ with $\|\bd{x}\|_2 = 1$. So take an arbitrary such $\bd{x}$.
	
	\subsubsection{Splitting up $\bd{x}$ into $\bd{x}^{(S)}$}
	
	For every subset $S \subset [d]$, define the set $\bd{K}(S) \subset \Jn$ of indices in the following way: For each $\bd{j} \in \Jn(S^c)$, choose $s^{|S|}$ indices $\bd{i} \in \Jn(S)$ with the largest $|x_{\bd{i} \dot\times \bd{j}}|$ and $\bd{K}(S)$ is the set of all $\bd{i} \dot\times \bd{j}$ obtained in this way.
	
	Now for every index $\bd{i} \in \Jn$, choose $S(\bd{i})$ to be a set $S \subset [d]$ of largest cardinality such that $\bd{i} \in \bd{K}(S)$. Since $\bd{K}(\emptyset) = \Jn$, such an $S$ always exists.
	
	Then for any set $S \subset [d]$, define $\bd{x}^{(S)} \in \R^{\bd{n}}$ such that for each $\bd{i} \in \Jn$,
	\[
	x^{(S)}_{\bd{i}} := \begin{cases}
		x_{\bd{i}} & \text{if } S(\bd{i}) = S \\
		0 & \text{otherwise}.
	\end{cases}
	\]
	
	Since for every index $\bd{i} \in \Jn$, we chose exactly one $S(\bd{i})$,
	\[
	\bd{x} = \sum_{S \subset [d]} \bd{x}^{(S)}.
	\]
	
	A direct consequence from these definitions is the following lemma.
	
	\begin{lemma} \label{lem:non_0_entries}
		Let $S \subset [d]$. For any index $\bd{j} \in \Jn(S^c)$, there can be at most $2^{|S|}$ different indices $\bd{i} \in \Jn(S)$ such that
		\[
		x^{(S)}_{\bd{i} \dot\times \bd{j}} \neq 0.
		\]
	\end{lemma}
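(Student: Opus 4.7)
The plan is to derive the count by tracing through the definitions and layering on the exclusion constraints they impose. First I would observe that $x^{(S)}_{\bd{i} \dot\times \bd{j}} \neq 0$ is equivalent (by the definition of $\bd{x}^{(S)}$) to $S(\bd{i} \dot\times \bd{j}) = S$, which by the definition of $S(\cdot)$ requires both $\bd{i} \dot\times \bd{j} \in \bd{K}(S)$ and $\bd{i} \dot\times \bd{j} \notin \bd{K}(T)$ for every $T \supsetneq S$. The first condition, combined with the construction of $\bd{K}(S)$, already gives the weaker bound $s^{|S|}$: for the fixed $\bd{j} \in \Jn(S^c)$, exactly $s^{|S|}$ indices $\bd{i} \in \Jn(S)$ are picked as producing the top $s^{|S|}$ values of $|x_{\bd{i} \dot\times \bd{j}}|$.

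To sharpen this to the claimed $2^{|S|}$, the exclusions against larger $T$ must be brought in. The plan is to proceed by induction on $|S|$: pick an arbitrary $\ell \in S$ and condition on the remaining coordinates $\bd{i}_{S \setminus \{\ell\}}$ together with $\bd{j}$. I would then argue that among the possible values of $i_\ell$, only at most two can simultaneously satisfy $\bd{i} \dot\times \bd{j} \in \bd{K}(S)$ and avoid $\bd{K}(T)$ for every $T \supsetneq S$ containing $\ell$. Multiplying this per-coordinate factor of $2$ by the inductive bound $2^{|S|-1}$ applied to the reduced problem on $S \setminus \{\ell\}$ then yields the overall $2^{|S|}$.

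The main obstacle will be establishing this per-coordinate factor of $2$: the top-$s^{|T|}$ sets defining the $\bd{K}(T)$'s are determined by the \emph{global} ordering of $|x|$ on the relevant coarser slices, so showing that the combined effect of exclusions from the family $\{T : T \supsetneq S,\, \ell \in T\}$ forces only two consistent values of $i_\ell$ requires carefully tracking how rankings of $|x_{\cdot \dot\times \bd{j}}|$ on the slice $\bd{j}$ relate to rankings on the coarser slices $\bd{j}_{T^c}$ for $|T| = |S|+1, |S|+2, \dots$. Pinning down this nested comparison across $T$'s of increasing size, and verifying that the factors of $2$ truly decouple across the coordinates of $S$ without overcounting, is where the real combinatorial content of the lemma lies.
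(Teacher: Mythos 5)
Your first paragraph already \emph{is} the paper's entire proof, of the statement the paper actually intends. The argument there is exactly: $x^{(S)}_{\bd{i} \dot\times \bd{j}} \neq 0$ forces $S(\bd{i} \dot\times \bd{j}) = S$, hence $\bd{i} \dot\times \bd{j} \in \bd{K}(S)$, and by construction $\bd{K}(S)$ contains exactly $s^{|S|}$ indices $\bd{i} \in \Jn(S)$ for each fixed $\bd{j} \in \Jn(S^c)$. The bound $s^{|S|}$ is also what every downstream application uses: in the proof of Lemma \ref{lem:max_sum_inequality_2} the displayed inequality inserts the factor $s^{|S|}$, and later the sparsity of $\bar{x}^{(j,k)}$ is recorded as ``$\leq s^{|S|} = s_3$''. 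The ``$2^{|S|}$'' in the statement of Lemma \ref{lem:non_0_entries} is a typo for $s^{|S|}$, and the parenthetical in the proof overview (``at most (and thus precisely) one $S$'') confirms the intended reading. So the correct response to this lemma is to stop after your first paragraph.

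The remainder of your proposal --- sharpening the count to $2^{|S|}$ by layering in the exclusions $\bd{i} \dot\times \bd{j} \notin \bd{K}(T)$ for $T \supsetneq S$ --- cannot be completed, because the sharpened claim is false. Already for $d = 1$ and $S = \{1\}$ there is no $T \supsetneq S$ to exclude against, $\bd{j}$ is the empty index, and $\bd{x}^{(\{1\})}$ retains all $s$ of the largest-magnitude entries of $\bd{x}$ (generically all nonzero), which exceeds $2$ whenever $s > 2$; recall $s \geq \log\frac{1}{\eta} + 2$ is typically large. For $d \geq 2$ the exclusions are equally powerless: the sets $\bd{K}(T)$ with $|T| > |S|$ are determined by rankings over coarser slices, so one can arrange $\bd{x}$ so that every entry of the slice $\{\bd{i} \dot\times \bd{j} : \bd{i} \in \Jn(S)\}$ is dominated by entries on other slices (hence lies in no $\bd{K}(T)$ with $T \supsetneq S$) while the top $s^{|S|}$ entries within the slice still form the selection of $\bd{K}(S)$ there; all $s^{|S|}$ of them then survive into $\bd{x}^{(S)}$. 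The ``per-coordinate factor of $2$'' you identify as the main obstacle is therefore not merely hard to establish --- it does not hold, and no induction on $|S|$ can recover it.
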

	
	\begin{proof}
		Fix $\bd{j} \in \Jn(S^c)$.
		
		If $x^{(S)}_{\bd{i} \dot\times \bd{j}} \neq 0$ holds for $\bd{i} \in \Jn(S)$, then $S(\bd{i} \dot\times \bd{j}) = S$, implying that $\bd{i} \dot\times \bd{j} \in \bd{K}(S)$. By definition of $\bd{K}(S)$ however, this can only be the case for $s^{|S|}$ different indices $\bd{i} \in \Jn(S)$.
	\end{proof}

	\begin{lemma} \label{lem:max_sum_inequality_1}
		For any $S, T \subset [d]$ with $|S| < |T|$ and any $\bd{k} \in \Jn(T^c)$,
		\[
		\max_{\bd{j} \in \Jn(T)} |x^{(S)}_{\bd{j} \dot\times \bd{k}}|^2 \leq \frac{1}{s^{|T|}} \sum_{\bd{j} \in \Jn(T)} |x_{\bd{j} \dot\times \bd{k}}|^2.
		\]
	\end{lemma}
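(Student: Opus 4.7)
The plan is to exploit the maximality in the choice of $S(\bd{i})$: if an index $\bd{j}\dot\times\bd{k}$ has $x^{(S)}_{\bd{j}\dot\times\bd{k}}\neq 0$, then by definition $S(\bd{j}\dot\times\bd{k})=S$, and this was chosen to be of largest cardinality among sets $S'\subset[d]$ with $\bd{j}\dot\times\bd{k}\in\bd{K}(S')$. Since $|T|>|S|$, this forces $\bd{j}\dot\times\bd{k}\notin\bd{K}(T)$, for otherwise $T$ (a strictly larger set) would have been a valid candidate and $S$ could not have been chosen.

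First I would fix $\bd{k}\in\Jn(T^c)$ and let $\bd{j}^\star\in\Jn(T)$ be any index attaining the maximum on the left-hand side. If $x^{(S)}_{\bd{j}^\star\dot\times\bd{k}}=0$ the bound is trivial, so assume it is nonzero; in particular $x^{(S)}_{\bd{j}^\star\dot\times\bd{k}}=x_{\bd{j}^\star\dot\times\bd{k}}$. By the observation above, $\bd{j}^\star\dot\times\bd{k}\notin\bd{K}(T)$. Recall the construction of $\bd{K}(T)$: with $\bd{k}\in\Jn(T^c)$ fixed, the set $\bd{K}(T)$ contains the $s^{|T|}$ indices $\bd{j}\in\Jn(T)$ for which $|x_{\bd{j}\dot\times\bd{k}}|$ is largest. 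Hence $|x_{\bd{j}^\star\dot\times\bd{k}}|$ is not among the $s^{|T|}$ largest values in $\{|x_{\bd{j}\dot\times\bd{k}}|:\bd{j}\in\Jn(T)\}$, so there exist at least $s^{|T|}$ indices $\bd{j}\in\Jn(T)$ with $|x_{\bd{j}\dot\times\bd{k}}|\geq|x_{\bd{j}^\star\dot\times\bd{k}}|$.

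Summing the squares over those $s^{|T|}$ dominating indices and extending the sum to all of $\Jn(T)$ (nonnegativity) gives
\[
\sum_{\bd{j}\in\Jn(T)}|x_{\bd{j}\dot\times\bd{k}}|^2\;\geq\;s^{|T|}\,|x_{\bd{j}^\star\dot\times\bd{k}}|^2\;=\;s^{|T|}\,\max_{\bd{j}\in\Jn(T)}|x^{(S)}_{\bd{j}\dot\times\bd{k}}|^2,
\]
which after dividing by $s^{|T|}$ is exactly the claimed inequality. The step that requires the most care is the deduction $\bd{j}^\star\dot\times\bd{k}\notin\bd{K}(T)$: this relies on the tie-breaking convention that $S(\bd{i})$ be a set of maximal cardinality, and on the strict inequality $|S|<|T|$. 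Everything else is a direct unpacking of the definition of $\bd{K}(T)$ combined with the elementary fact that the maximum of a finite set of nonnegative numbers is bounded by $1/k$ times the sum whenever at least $k$ entries dominate it.
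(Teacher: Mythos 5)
Your proof is correct and follows essentially the same route as the paper's: nonvanishing of $x^{(S)}_{\bd{j}^\star\dot\times\bd{k}}$ forces $S(\bd{j}^\star\dot\times\bd{k})=S$ and hence, by maximality of cardinality and $|S|<|T|$, exclusion from $\bd{K}(T)$, which supplies $s^{|T|}$ dominating indices whose squared sum yields the bound. The only cosmetic difference is that the paper phrases the last step as a contradiction while you argue directly.
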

	
	\begin{proof}
		Let $S, T \subset [d]$, $|S| < |T|$ and $\bd{k} \in \Jn(T^c)$. Choose $\bd{j}_0 \in \Jn(T)$ such that 
		$|x^{(S)}_{\bd{j}_0 \dot\times \bd{k}}|$ is maximal.
		
		If $|x^{(S)}_{\bd{j}_0 \dot\times \bd{k}}| = 0$, then the claim is fulfilled. Otherwise we know that $S(\bd{j}_0 \dot\times \bd{k}) = S$. Especially, this implies that $\bd{j}_0 \dot\times \bd{k} \notin \bd{K}(T)$ since $|T| > |S|$. By the definition of $\bd{K}(T)$, there is a set $\bar{\bd{J}} \subset \Jn(T)$ of $s^{|T|}$ indices such that for all $\bd{j} \in \bar{\bd{J}}$,
		\[
		|x_{\bd{j} \dot\times \bd{k}}| \geq |x_{\bd{j}_0 \dot\times \bd{k}}|.
		\]
		
		Assuming that $|x^{(S)}_{\bd{j}_0 \dot\times \bd{k}}|^2 > \frac{1}{s^{|T|}} \sum_{\bd{j} \in \Jn(T)} |x_{\bd{j} \dot\times \bd{k}}|^2$ implies
		
		\begin{align*}
			\sum_{\bd{j} \in \Jn(T)} |x_{\bd{j} \dot\times \bd{k}}|^2 \geq & \sum_{\bd{j} \in \bar{\bd{J}}} |x_{\bd{j} \dot\times \bd{k}}|^2 \geq
			\sum_{\bd{j} \in \bar{\bd{J}}} |x_{\bd{j}_0 \dot\times \bd{k}}|^2 \\
			= & |\bar{\bd{J}}| |x_{\bd{j}_0 \dot\times \bd{k}}|^2 \geq
			s^{|T|} |x^{(S)}_{\bd{j}_0 \dot\times \bd{k}}|^2  \\
			> & s^{|T|} \frac{1}{s^{|T|}} \sum_{\bd{j} \in \Jn(T)} |x_{\bd{j} \dot\times \bd{k}}|^2 =
			\sum_{\bd{j} \in \Jn(T)} |x_{\bd{j} \dot\times \bd{k}}|^2.
		\end{align*}
		
		This is a contradiction which completes the proof.
	\end{proof}

	\begin{lemma} \label{lem:max_sum_inequality_2}
		Let $S, T \subset [d]$ and $S \cap T = \emptyset$. Then for any index $\bd{k} \in \Jn([d] \backslash (S \cup T))$,
		\[
		\max_{\bd{j} \in \Jn(T)} \sum_{\bd{i} \in \Jn(S)} |x^{(S)}_{\bd{i} \dot\times \bd{j} \dot\times \bd{k}}|^2 \leq
		\frac{1}{s^{|T|}} \sum_{\bd{j} \in \Jn(T)} \sum_{\bd{i} \in \Jn(S)} |x_{\bd{i} \dot\times \bd{j} \dot\times \bd{k}}|^2.
		\]
	\end{lemma}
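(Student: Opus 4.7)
My plan is to follow the same extremal-index / pigeonhole argument used in Lemma \ref{lem:max_sum_inequality_1}, but upgraded so that the role of the single top entry is replaced by the entire $S$-slab of non-zero entries of $\bd{x}^{(S)}$ sitting above $(\bd{j}_0, \bd{k})$. I will fix a maximizer $\bd{j}_0 \in \Jn(T)$ of the left-hand side, let $\bd{I}_0 := \{\bd{i} \in \Jn(S) : x^{(S)}_{\bd{i} \dot\times \bd{j}_0 \dot\times \bd{k}} \neq 0\}$, and note that Lemma \ref{lem:non_0_entries} (applied with the $S^c$-index $\bd{j}_0 \dot\times \bd{k}$) gives $|\bd{I}_0| \leq s^{|S|}$. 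The edge case $T = \emptyset$ is immediate since the claim then reduces to the entrywise inequality $|x^{(S)}| \leq |x|$, so I assume $T \neq \emptyset$.

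The decisive step will be a comparison with the larger selection $\bd{K}(S \cup T)$. For every $\bd{i} \in \bd{I}_0$, the fact that $x^{(S)}_{\bd{i} \dot\times \bd{j}_0 \dot\times \bd{k}} \neq 0$ forces $S(\bd{i} \dot\times \bd{j}_0 \dot\times \bd{k}) = S$, and since $|S \cup T| > |S|$, the maximal-cardinality property in the definition of $S(\cdot)$ rules out the containment $\bd{i} \dot\times \bd{j}_0 \dot\times \bd{k} \in \bd{K}(S \cup T)$. I then denote by $\bar{\bd{B}} \subset \Jn(S) \times \Jn(T)$ the $s^{|S|+|T|}$ pairs $(\bd{i}', \bd{j}')$ chosen in the construction of $\bd{K}(S \cup T)$ for the fixed $\bd{k}$, and set $m^* := \min_{(\bd{i}', \bd{j}') \in \bar{\bd{B}}} |x_{\bd{i}' \dot\times \bd{j}' \dot\times \bd{k}}|$. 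Since each $(\bd{i}, \bd{j}_0)$ with $\bd{i} \in \bd{I}_0$ lies outside $\bar{\bd{B}}$, the ``largest values'' clause in the definition of $\bd{K}(S \cup T)$ forces $|x_{\bd{i} \dot\times \bd{j}_0 \dot\times \bd{k}}| \leq m^*$ for every such $\bd{i}$.

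Combining the two estimates will conclude the proof. On one hand
\[
\sum_{\bd{i} \in \Jn(S)} |x^{(S)}_{\bd{i} \dot\times \bd{j}_0 \dot\times \bd{k}}|^2 = \sum_{\bd{i} \in \bd{I}_0} |x_{\bd{i} \dot\times \bd{j}_0 \dot\times \bd{k}}|^2 \leq s^{|S|} (m^*)^2,
\]
while restricting the right-hand side to the subset $\bar{\bd{B}}$ of indices gives
\[
\sum_{\bd{j} \in \Jn(T)} \sum_{\bd{i} \in \Jn(S)} |x_{\bd{i} \dot\times \bd{j} \dot\times \bd{k}}|^2 \geq |\bar{\bd{B}}| (m^*)^2 = s^{|S|+|T|} (m^*)^2,
\]
and dividing produces the required factor $s^{-|T|}$.

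The step I expect to require the most care is the passage from ``$(\bd{i}, \bd{j}_0) \notin \bar{\bd{B}}$'' to ``$|x_{\bd{i} \dot\times \bd{j}_0 \dot\times \bd{k}}| \leq m^*$''. Tie-breaking in the definition of $\bd{K}(S \cup T)$ means that $\bar{\bd{B}}$ is not canonically determined, but the argument only needs that every index outside any such selection has magnitude at most the smallest magnitude inside the selection, which is direct from the ``largest $|x_{\bd{i} \dot\times \bd{j}}|$'' clause. No other nontrivial obstacle appears, and the proof is essentially a combinatorial bookkeeping of how a set of cardinality $s^{|S|}$ compares against a set of cardinality $s^{|S|+|T|}$ with a common lower bound $m^*$ on magnitudes.
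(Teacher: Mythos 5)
Your proof is correct and follows essentially the same route as the paper: the two ingredients are the sparsity bound $|\bd{I}_0|\leq s^{|S|}$ from Lemma \ref{lem:non_0_entries} and a comparison of the excluded indices against the $s^{|S|+|T|}$ indices selected in $\bd{K}(S\cup T)$, which is exactly what the paper does by invoking Lemma \ref{lem:max_sum_inequality_1} with the pair $(S, S\cup T)$. The only difference is cosmetic: you inline the pigeonhole comparison via the minimum magnitude $m^*$ over the selected block instead of citing Lemma \ref{lem:max_sum_inequality_1} as a black box, and you (correctly) read the bound in Lemma \ref{lem:non_0_entries} as $s^{|S|}$, consistent with its proof.
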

	
	\begin{proof}
		If $T = \emptyset$, then $s^{|T|} = 1$ and $\Jn(T)$ has exactly one element such that the claim holds since $|x^{(S)}_{\bd{i} \dot\times \bd{j} \dot\times \bd{k}}| \leq |x_{\bd{i} \dot\times \bd{j} \dot\times \bd{k}}|$ for any indices $\bd{i} \in \Jn(S)$, $\bd{j} \in \Jn(T)$, $\bd{k} \in \Jn([d] \backslash(S \cup T))$. So we can assume that $T \neq \emptyset$.
		
		Then we can apply Lemma \ref{lem:max_sum_inequality_1} to the sets $S$ and $S \cup T$. Since $S$ and $T$ are disjoint and $T \neq \emptyset$, $|S \cup T| > |S|$. The lemma yields that for any $\bd{k} \in \Jn([d] \backslash (S \cup T))$,
		\[
		\max_{\bd{j} \in \Jn(S \cup T)} |x^{(S)}_{\bd{j} \dot\times \bd{k}}|^2 \leq \frac{1}{s^{|S \cup T|}} \sum_{\bd{j} \in \Jn(S \cup T)} |x_{\bd{j} \dot\times \bd{k}}|^2.
		\]
		We can rewrite this as
		\begin{equation}
			\max_{\bd{i} \in \Jn(S)} \max_{\bd{j} \in \Jn(T)} |x^{(S)}_{\bd{i} \dot\times \bd{j} \dot\times \bd{k}}|^2 \leq
			\frac{1}{s^{|S| + |T|}} \sum_{\bd{i} \in \Jn(S)} \sum_{\bd{j} \in \Jn(T)} |x_{\bd{i} \dot\times \bd{j} \dot\times \bd{k}}|^2. \label{eq:comp_max_max_sum}
		\end{equation}
		
		By Lemma \ref{lem:non_0_entries}, for fixed $\bd{j} \in \Jn(T)$ and $\bd{k} \in \Jn([d] \backslash (S \cup T))$, there are at most $2^{|S|}$ indices $\bd{i} \in \Jn(S)$ such that $x^{(S)}_{\bd{i} \dot\times \bd{j} \dot\times \bd{k}} \neq 0$. Thus we obtain
		\[
		\max_{\bd{j} \in \Jn(T)} \sum_{\bd{i} \in \Jn(S)} |x^{(S)}_{\bd{i} \dot\times \bd{j} \dot\times \bd{k}}|^2 \leq 
		\max_{\bd{j} \in \Jn(T)} s^{|S|} \max_{\bd{i} \in \Jn(S)} |x^{(S)}_{\bd{i} \dot\times \bd{j} \dot\times \bd{k}}|^2
		\]
		Combining this with (\ref{eq:comp_max_max_sum}) yields
		\begin{align*}
			\max_{\bd{j} \in \Jn(T)} \sum_{\bd{i} \in \Jn(S)} |x^{(S)}_{\bd{i} \dot\times \bd{j} \dot\times \bd{k}}|^2 & \leq 
			\frac{s^{|S|}}{s^{|S| + |T|}} \sum_{\bd{i} \in \Jn(S)} \sum_{\bd{j} \in \Jn(T)} |x_{\bd{i} \dot\times \bd{j} \dot\times \bd{k}}|^2 \\
			& = 
			\frac{1}{s^{|T|}} \sum_{\bd{i} \in \Jn(S)} \sum_{\bd{j} \in \Jn(T)} |x_{\bd{i} \dot\times \bd{j} \dot\times \bd{k}}|^2.
		\end{align*}
	\end{proof}

	Let $N = n_1 n_2 \dots n_d$ and assume that the matrix $\Phi \in \R^{m \times N}$ has the $(2 s^d, \delta)$-RIP. We regard the matrix $\Phi^* \Phi - Id_N \in \R^{N \times N}$ as an array $\bd{B}$ of order $2 d$ with dimensions $\bd{n}^{\times 2} = (n_1, \dots, n_d, n_1, \dots, n_d)$ such that for all $\bd{i}, \bd{i}' \in \Jn$,
	
	\[
	B_{\bd{i} \dot+ \bd{i}'} = \sum_{k = 1}^m \Phi_{k, L^\bd{n}(\bd{i})} \Phi_{k, L^\bd{n}(\bd{i}')} - \mathbbm{1}_{\bd{i} = \bd{i}'}.
	\]
	
	Then for any arrays $\bd{x}, \bd{y} \in \R^\bd{n}$,
	\[
	\langle \Phi \vect(\bd{x}), \Phi \vect(\bd{y}) \rangle - \langle \vect(\bd{x}), \vect(\bd{y}) \rangle = \sum_{\bd{i}, \bd{i}' \in \Jn} B_{\bd{i} \dot+ \bd{i}'} x_{\bd{i}} y_{\bd{i}'}
	\]
	
	Let $\boldsymbol{\xi} \in \R^\bd{n}$ be the Rademacher tensor of order $d$, i.e., for $\bd{j} \in \Jn$,
	\[
	\xi_{\bd{j}} = \xi^{(1)}_{\bd{j}_1} \dots \xi^{(d)}_{\bd{j}_d}
	\]
	where $\xi^{(1)}, \dots, \xi^{(d)}$ are the independent Rademacher vectors from the assumption of Theorem \ref{thm:main}.
	Let $\bar{\xi}^{(1)}, \dots, \bar{\xi}^{(d)}$ and $\bar{\boldsymbol{\xi}}$ be corresponding independent copies.
	
	Consider the norm deviation represented by the chaos
	\[
	\tilde{X} := \langle A \vect(\bd{x}), A \vect(\bd{x}) \rangle - \langle \vect(\bd{x}), \vect(\bd{x}) \rangle = 
	\sum_{\bd{i}, \bd{i}' \in \Jn} B_{\bd{i} \dot+ \bd{i}'} x_{\bd{i}} x_{\bd{i}'} \xi_{\bd{i}} \xi_{\bd{i}'}
	\]
	
	and the corresponding decoupled chaos
	\[
	X := \sum_{\bd{i}, \bd{i}' \in \Jn} B_{\bd{i} \dot+ \bd{i}'} x_{\bd{i}} x_{\bd{i}'} \xi_{\bd{i}} \bar{\xi}_{\bd{i}'}
	\]
	
	Our goal is to bound the moments of $|X|$ which will also lead to bounds on the moments of $|\tilde{X} - \mathbb{E} \tilde{X}| = \left| \|A \vect(\bd{x})\|_2^2 - \|\bd{x}\|_2^2 - \mathbb{E}\left[ \|A \vect(\bd{x})\|_2^2 - \|\bd{x}\|_2^2 \right] \right|$ by the application of the decoupling Theorem \ref{thm:decoupling_2}. After showing that $\mathbb{E}\left[ \|A \vect(\bd{x})\|_2^2 - \|\bd{x}\|_2^2 \right]$ is sufficiently small, the moment bounds for $|X|$ will, in turn, lead to the proof of Theorem \ref{thm:main}.

	Fix $S, T \subset [d]$. Define the sums
	\begin{align*}
		\tilde{X}^{(S, T)} & := \langle A \vect(\bd{x}^{(S)}), A \vect(\bd{x}^{(T)}) \rangle - \langle \vect(\bd{x}^{(S)}), \vect(\bd{x}^{(T)}) \rangle \\
		& = 
		\sum_{\bd{i}, \bd{i}' \in \Jn} B_{\bd{i} \dot+ \bd{i}'} \xi_{\bd{i}} \xi_{\bd{i}'} x^{(S)}_{\bd{i}} x^{(T)}_{\bd{i}'}
	\end{align*}
	and their decoupled counterparts
	\[
	X^{(S, T)} := \sum_{\bd{i}, \bd{i}' \in \Jn} B_{\bd{i} \dot+ \bd{i}'} \xi_{\bd{i}} \bar{\xi}_{\bd{i}'} x^{(S)}_{\bd{i}} x^{(T)}_{\bd{i}'}
	\]
	such that
	\begin{align*}
		X = \sum_{S, T \subset [d]} X^{(S, T)} \quad \text{and} \quad \tilde{X} = \sum_{S, T \subset [d]} \tilde{X}^{(S, T)}.
	\end{align*}
	
	We obtain
	\begin{align*}
		X^{(S, T)} := \sum_{\substack{\bd{j} \in \Jn(S^c), \bd{k} \in \Jn(S) \\ \bd{j}' \in \Jn(T^c), \bd{k}' \in \Jn(T)}} B_{(\bd{j} \dot\times \bd{k}) \dot+ (\bd{j}' \dot\times \bd{k}')} \xi_{\bd{j} \dot\times \bd{k}} \bar{\xi}_{\bd{j}' \dot\times \bd{k}'} x^{(S)}_{\bd{j} \dot\times \bd{k}} x^{(T)}_{\bd{j}' \dot\times \bd{k}'}
	\end{align*}
	
	Now for any set $\tilde{I} \subset [d]$, define the array $\boldsymbol{\xi}^{(\tilde{I})} \in \R^{\bd{n}}(\tilde{I})$ by
	\[
	\xi^{(\tilde{I})}_{\bd{j}} = \prod_{l \in \tilde{I}} \xi^{(l)}_{\bd{j}_l}
	\]
	for any $\bd{j} \in \Jn(\tilde{I})$ and analogously for $\bar{\boldsymbol{\xi}}$.
	
	With this, we obtain
	\begin{equation}
		X^{(S, T)} = \sum_{\substack{\bd{j} \in \Jn(S^c), \bd{k} \in \Jn(S) \\ \bd{j}' \in \Jn(T^c), \bd{k}' \in \Jn(T)}} B_{(\bd{j} \dot\times \bd{k}) \dot+ (\bd{j}' \dot\times \bd{k}')} \xi^{(S)}_{\bd{k}} \xi^{(S^c)}_{\bd{j}} \bar{\xi}^{(T)}_{\bd{k}'} \bar{\xi}^{(T^c)}_{\bd{j}'} x^{(S)}_{\bd{j} \dot\times \bd{k}} x^{(T)}_{\bd{j}' \dot\times \bd{k}'}. \label{eq:ST_chaos}
	\end{equation}
	
	Note that $\boldsymbol{\xi}^{(S)},\, \boldsymbol{\xi}^{(S^c)},\, \bar{\boldsymbol{\xi}}^{(T)},\, \bar{\boldsymbol{\xi}}^{(T^c)}$ are independent. Condition on $\boldsymbol{\xi}^{(S)}$ and $\bar{\boldsymbol{\xi}}^{(T)}$ and treat (\ref{eq:ST_chaos}) as a chaos of order $2d - |S| - |T|$ with corresponding index array $\bd{B}^{(S, T)}$ given by
	\[
	B^{(S, T)}_{\bd{j} \dot\times \bd{j}'} = \sum_{\bd{k} \in \Jn(S), \bd{k}' \in \Jn(T)} B_{(\bd{j} \dot\times \bd{k}) \dot+ (\bd{j}' \dot\times \bd{k}')} \xi^{(S)}_{\bd{k}} \bar{\xi}^{(T)}_{\bd{k}'}  x^{(S)}_{\bd{j} \dot\times \bd{k}} x^{(T)}_{\bd{j}' \dot\times \bd{k}'}
	\]
	for $\bd{j} \in \Jn(S^c)$, $\bd{j}' \in \Jn(T^c)$. Thus, we need to control the norms appearing in the chaos concentration inequality for $\bd{B}^{(S, T)}$.

	\subsubsection{Application of Lemma \ref{lem:inner_pro_bound}}
	
	Let $1 \leq \kappa \leq 2d - |S| - |T|$ and $\bar{I}_1, \dots, \bar{I}_\kappa$ be a partition of the set $[2d] \backslash (S \cup (T + d) )$. Let $\bm{\alpha}^{(l)} \in \R^{\bd{n}^{\times 2}}(\bar{I}_l)$ and $\|\bm{\alpha}^{(l)}\|_2 = 1$ for $1 \leq l \leq \kappa$. By the definition of the norm,
	\[
	\|\bd{B}^{(S, T)}\|_{\bar{I}_1, \dots, \bar{I}_\kappa} = \sup_{\bm{\alpha}^{(1)}, \dots, \bm{\alpha}^{(\kappa)}} \sum_{\bd{i} \in \bd{J}^{\bd{n}^{\times 2}}(\bar{I}_1 \cup \dots \cup \bar{I}_\kappa)} B^{(S, T)}_{\bd{i}} \alpha^{(1)}_{\bd{i}_{\bar{I}_1}} \dots{} \alpha^{(\kappa)}_{\bd{i}_{\bar{I}_\kappa}}
	\]
	where the supremum is formed over all possible choices of the aforementioned arrays $\bm{\alpha}^{(1)}, \dots,  \bm{\alpha}^{(\kappa)}$.
	
	The partition sets $\bar{I}_1, \dots, \bar{I}_\kappa$ can contain elements of $[d]$ and of $[2d] \backslash [d]$. As outlined in Section \ref{sec:proof_outline}, we separate the partition sets by whether they intersect only $[d]$, only $[2 d] \backslash [d]$ or both of them. In this sense, we define $\Lambda := \{l \in [\kappa]: \bar{I}_l \subset [d]\}$, $\Lambda' := \{l \in [\kappa]: \bar{I}_l \subset [2 d] \backslash [d]\}$ and $\Gamma := [\kappa] \backslash (\Lambda \cup \Lambda')$. Then we join the corresponding partition sets to $\bar{I} := \bigcup_{l \in \Lambda} \bar{I}_l$, $\bar{I}' := \bigcup_{l \in \Lambda'} \bar{I}_l$ and $\bar{J} := \bigcup_{l \in \Gamma} \bar{I}_l$. Then the three sets $\bar{I},\,\bar{I}',\,\bar{J}$ also form a partition of the set $[2 d] \backslash (S \cup (T + d))$.
	
	Based on this, we define arrays $\bm{\alpha} \in \R^{\bd{n}^{\times 2}}(\bar{I}),\,\bm{\alpha}' \in \R^{\bd{n}^{\times 2}}(\bar{I}'),\,\bm{\beta} \in \R^{\bd{n}^{\times 2}}(\bar{J})$ by
	
	\begin{itemize}
		\item For each $\bd{i} \in \bd{J}^{\bd{n}^{\times 2}}(\bar{I})$,
		\[
		\alpha_{\bd{i}} = \prod_{l \in \Lambda} \alpha^{(l)}_{\bd{i}_{\bar{I}_l}}.
		\]
		
		\item For each $\bd{i} \in \bd{J}^{\bd{n}^{\times 2}}(\bar{I}')$,
		\[
		\alpha'_{\bd{i}} = \prod_{l \in \Lambda'} \alpha^{(l)}_{\bd{i}_{\bar{I}_l}}.
		\]
		
		\item For each $\bd{i} \in \bd{J}^{\bd{n}^{\times 2}}(\bar{J})$,
		\[
		\beta_{\bd{i}} = \prod_{l \in \Gamma} \alpha^{(l)}_{\bd{i}_{\bar{I}_l}}.
		\]
	\end{itemize}
	
	Then we have
	\begin{align*}
		\sum_{\bd{i} \in \bd{J}^{\times 2}(\bar{I}_1 \cup \dots \cup \bar{I}_\kappa)} B^{(S, T)}_{\bd{i}} \alpha^{(1)}_{\bd{i}_{\bar{I}_1}} \dots{} \alpha^{(\kappa)}_{\bd{i}_{\bar{I}_\kappa}} =
		\sum_{\bd{i} \in \bd{J}^{\bd{n}^{\times 2}}(\bar{I}), \bd{i}' \in \bd{J}^{\bd{n}^{\times 2}}(\bar{I}'), \bd{j} \in \bd{J}^{\bd{n}^{\times 2}}(\bar{J}) } B^{(S, T)}_{\bd{i} \dot\times \bd{i}' \dot\times \bd{j}} \alpha_{\bd{i}} \alpha'_{\bd{i}'} \beta_{\bd{j}}.
	\end{align*}
	and
	\begin{align*}
		\|\bm{\alpha}\|_2^2 & = \sum_{\bd{i} \in \bd{J}^{\bd{n}^{\times 2}}(\bar{I})} \left( \prod_{l \in \Lambda} \alpha^{(l)}_{\bd{i}_{\bar{I}_l}} \right)^2 = 
		\sum_{\bd{i} \in \bd{J}^{\bd{n}^{\times 2}}(\bar{I})} \prod_{l \in \Lambda} (\alpha^{(l)}_{\bd{i}_{\bar{I}_l}})^2 \\
		& = \prod_{l \in \Lambda} \sum_{\bd{i}^{(l)} \in \bd{J}^{\bd{n}^{\times 2}}(\bar{I}_l)}  (\alpha^{(l)}_{\bd{i}^{(l)}})^2 = 
		\prod_{l \in \Lambda} \|\bm{\alpha}^{(l)}\|_2^2 = 1.
	\end{align*}
	In the same way, it follows that $\|\bm{\alpha}'\|_2 = \|\bm{\beta}\|_2 = 1$.
	
	Thus, we can bound
	\begin{align*}
		\|\bd{B}^{(S, T)}\|_{\bar{I}_1, \dots, \bar{I}_\kappa} \leq \sup_{\bm{\alpha}, \bm{\alpha}', \bm{\beta}} \sum_{\bd{i} \in \bd{J}^{\bd{n}^{\times 2}}(\bar{I}), \bd{i}' \in \bd{J}^{\bd{n}^{\times 2}}(\bar{I}'), \bd{j} \in \bd{J}^{\bd{n}^{\times 2}}(\bar{J})} B^{(S, T)}_{\bd{i} \dot\times \bd{i}' \dot\times \bd{j}} \alpha_{\bd{i}} \alpha'_{\bd{i}'} \beta_{\bd{j}},
	\end{align*}
	where the supremum is formed over all $\bm{\alpha} \in \R^{\bd{n}^{\times 2}}(\bar{I})$, $\bm{\alpha}' \in \R^{\bd{n}^{\times 2}}(\bar{I}')$, $\bm{\beta} \in \R^{\bd{n}^{\times 2}}(\bar{J})$ with $\|\bm{\alpha}\|_2 = \|\bm{\alpha}'\|_2 = \|\bm{\beta}\|_2$.
	
	With the dual characterization of the $\ell_2$ norm, we can eliminate the supremum over $\bm{\beta}$, obtaining
	\begin{align}
		& \|\bd{B}^{(S, T)}\|_{\bar{I}_1, \dots, \bar{I}_\kappa} \leq \sup_{\bm{\alpha}, \bm{\alpha}'} \left[ \sum_{\bd{j} \in \bd{J}^{\bd{n}^{\times 2}}(\bar{J})} \left( \sum_{\bd{i} \in \bd{J}^{\bd{n}^{\times 2}}(\bar{I}), \bd{i}' \in \bd{J}^{\bd{n}^{\times 2}}(\bar{I}')} B^{(S, T)}_{\bd{i} \dot\times \bd{i}' \dot\times \bd{j}} \alpha_{\bd{i}} \alpha'_{\bd{i}'} \right)^2 \right]^\frac{1}{2}
		\label{eq:k_norm_1} \\
		= & \sup_{\bm{\alpha}, \bm{\alpha}'} \left[ \sum_{\substack{\bd{j} \in \bd{J}^{\bd{n}}(\bar{J} \cap [d]), \\ \bd{j}' \in \bd{J}^{\bd{n}}((\bar{J} - d) \cap [d])}} \left( \sum_{\substack{\bd{i} \in \bd{J}^{\bd{n}}(\bar{I}) \\ \bd{i}' \in \bd{J}^{\bd{n}}(\bar{I}' - d) \\ \bd{k} \in \bd{J}^{\bd{n}}(S) \\ \bd{k}' \in \bd{J}^{\bd{n}}(T)}} B_{\substack{(\bd{i} \dot\times \bd{j} \dot\times \bd{k}) \\ \dot+ (\bd{i}' \dot\times \bd{j}' \dot\times \bd{k}')}} \xi^{(S)}_{\bd{k}} \bar{\xi}^{(T)}_{\bd{k}'}  x^{(S)}_{\bd{i} \dot\times \bd{j} \dot\times \bd{k}} x^{(T)}_{\bd{i}' \dot\times \bd{j}' \dot\times \bd{k}'}  \alpha_{\bd{i}} \alpha'_{\bd{i}'} \right)^2 \right]^\frac{1}{2}
		\nonumber
	\end{align}
	
	Now for each $\bd{j} \in \bd{J}^{\bd{n}}(\bar{J} \cap [d])$ and $\bd{i} \in \Jn(\bar{I})$, define $\bd{u}^{(\bd{j}, \bd{i})}, \bar{\bd{x}}^{(\bd{j}, \bd{i})}, \bd{x}^{(\bd{j}, \bd{i})} \in \R^\bd{n}$ by
	\begin{align*}
		x^{(\bd{j}, \bd{i})}_{\bar{\bd{i}} \dot\times \bar{\bd{j}} \dot\times \bd{k}} = 
		\begin{cases}
			x_{\bd{i} \dot\times \bd{j} \dot\times \bd{k}} & \text{if } \bar{\bd{j}} = \bd{j} \text{ and } \bar{\bd{i}} = \bd{i} \\
			0 & \text{otherwise}
		\end{cases} \\
		\bar{x}^{(\bd{j}, \bd{i})}_{\bar{\bd{i}} \dot\times \bar{\bd{j}} \dot\times \bd{k}} = 
		\begin{cases}
			\xi^{(S)}_{\bd{k}} x^{(S)}_{\bd{i} \dot\times \bd{j} \dot\times \bd{k}} & \text{if } \bar{\bd{j}} = \bd{j} \text{ and } \bar{\bd{i}} = \bd{i} \\
			0 & \text{otherwise}
		\end{cases} \\
		u^{(\bd{j}, \bd{i})}_{\bar{\bd{i}} \dot\times \bar{\bd{j}} \dot\times \bd{k}} = 
		\bar{x}^{(\bd{j}, \bd{i})}_{\bar{\bd{i}} \dot\times \bar{\bd{j}} \dot\times \bd{k}} \alpha_{\bd{i}}
	\end{align*}
	for any $\bar{\bd{i}} \in \Jn(\bar{I})$, $\bar{\bd{j}} \in \Jn(\bar{J} \cap [d])$, $\bd{k} \in \Jn(S)$.

	In the same way, define for each $\bd{j}' \in \bd{J}^{\bd{n}}((\bar{J} - d) \cap [d])$ and $\bd{i}' \in \Jn(\bar{I}' - d)$, $\bd{v}^{(\bd{j}', \bd{i}')}, \bar{\bd{y}}^{(\bd{j}', \bd{i}')}, \bd{y}^{(\bd{j}', \bd{i}')} \in \R^\bd{n}$ by
	\begin{align*}
		y^{(\bd{j}', \bd{i}')}_{\bar{\bd{i}}' \dot\times \bar{\bd{j}}' \dot\times \bd{k}'} = 
		\begin{cases}
			x_{\bd{i}' \dot\times \bd{j}' \dot\times \bd{k}'} & \text{if } \bar{\bd{j}}' = \bd{j}' \text{ and } \bar{\bd{i}}' = \bd{i}' \\
			0 & \text{otherwise}
		\end{cases} \\
		\bar{y}^{(\bd{j}', \bd{i}')}_{\bar{\bd{i}}' \dot\times \bar{\bd{j}}' \dot\times \bd{k}'} = 
		\begin{cases}
			\xi^{(T)}_{\bd{k}'} x^{(T)}_{\bd{i}' \dot\times \bd{j}' \dot\times \bd{k}'} & \text{if } \bar{\bd{j}}' = \bd{j}' \text{ and } \bar{\bd{i}}' = \bd{i}' \\
			0 & \text{otherwise}
		\end{cases} \\
		v^{(\bd{j}', \bd{i}')}_{\bar{\bd{i}}' \dot\times \bar{\bd{j}}' \dot\times \bd{k}'} = 
		\bar{y}^{(\bd{j}', \bd{i}')}_{\bar{\bd{i}}' \dot\times \bar{\bd{j}}' \dot\times \bd{k}'} \alpha'_{\bd{i}'}
	\end{align*}
	for any $\bar{\bd{i}}' \in \Jn(\bar{I}')$, $\bar{\bd{j}}' \in \Jn((\bar{J} - d) \cap [d])$, $\bd{k}' \in \Jn(T)$.
	
	Then for $\bd{j} \in \bd{J}^{\bd{n}}(\bar{J} \cap [d])$, $\bd{i} \in \Jn(\bar{I})$, $\bd{j}' \in \bd{J}^{\bd{n}}((\bar{J} - d) \cap [d])$, $\bd{i}' \in \Jn(\bar{I}' - d)$,
	\begin{align*}
		& \sum_{\bd{l} \in \Jn, \bd{l}' \in \Jn} B_{\bd{l} \dot+ \bd{l}} u^{(\bd{j}, \bd{i})}_{\bd{l}} v^{(\bd{j}', \bd{i}')}_{\bd{l}'} \\
		= & \sum_{\bd{k} \in \Jn(S), \bd{k}' \in \Jn(T)} B_{(\bd{i} \dot\times \bd{j} \dot\times \bd{k}) \dot+ (\bd{i}' \dot\times \bd{j}' \dot\times \bd{k}')} \xi^{(S)}_{\bd{k}} \bar{\xi}^{(T)}_{\bd{k}'}  x^{(S)}_{\bd{j} \dot\times \bd{k}} x^{(T)}_{\bd{j}' \dot\times \bd{k}'}  \alpha_{\bd{i}} \alpha'_{\bd{i}'}
	\end{align*}
	
	Substituting into (\ref{eq:k_norm_1}), we obtain
	\begin{align}
		\|\bd{B}^{(S, T)}\|_{\bar{I}_1, \dots, \bar{I}_\kappa} \leq
		\sup_{\bm{\alpha}, \bm{\alpha}'} \left[ \sum_{\substack{\bd{j} \in \bd{J}^{\bd{n}}(\bar{J} \cap [d]) \\ \bd{j}' \in \bd{J}^{\bd{n}}((\bar{J} - d) \cap [d])}} \left( \sum_{\substack{\bd{i} \in \Jn(\bar{I}) \\ \bd{i}' \in \Jn(\bar{I}' - d)}} \sum_{\bd{l} \in \Jn, \bd{l}' \in \Jn} B_{\bd{l} \dot+ \bd{l}} u^{(\bd{j}, \bd{i})}_{\bd{l}} v^{(\bd{j}', \bd{i}')}_{\bd{l}'} \right)^2 \right]^\frac{1}{2}
		\label{eq:k_norm_2}
	\end{align}
	
	We will define a vectorized version of $\bd{u}^{(\bd{j}, \bd{i})}$ and $\bd{v}^{(\bd{j}', \bd{i}')}$ and then apply Lemma \ref{lem:inner_pro_bound}.
	
	Now define $\bar{n}_1 := \prod_{l \in \bar{J} \cap [d]} n_l$, $\bar{n}_2 := \prod_{l \in \bar{I}} n_l$, $\bar{n}_3 := \prod_{l \in S} n_l$ and analogously $\bar{n}'_1 := \prod_{l \in (\bar{J} - d) \cap [d]} n_l$, $\bar{n}'_2 := \prod_{l \in (\bar{I}' - d)} n_l$, $\bar{n}'_3 := \prod_{l \in T} n_l$. Then $\bar{n}_1 \bar{n}_2 \bar{n}_3 = \bar{n}'_1 \bar{n}'_2 \bar{n}'_3 = N$.
	
	
	Now we rearrange the previously defined arrays as vectors. Note that functions $\hat{L}^{\bd{n}}_{I}$ for $I \subset [d]$ map integers to partial array indices (tuples).
	
	For $j \in [\bar{n}_1],\, k \in [\bar{n}_2]$, define the vectors
	\begin{align*}
		x^{(j, k)} := \vect(\bd{x}^{(\hat{L}^{\bd{n}}_{\bar{J} \cap [d]}(j), \hat{L}^{\bd{n}}_{\bar{I}}(k))}) \in \R^N \\
		\bar{x}^{(j, k)} := \vect(\bar{\bd{x}}^{(\hat{L}^{\bd{n}}_{\bar{J} \cap [d]}(j), \hat{L}^{\bd{n}}_{\bar{I}}(k))}) \in \R^N \\
		u^{(j, k)} := \vect(\bd{u}^{(\hat{L}^{\bd{n}}_{\bar{J} \cap [d]}(j), \hat{L}^{\bd{n}}_{\bar{I}}(k))}) \in \R^N
	\end{align*}
	and for $j' \in [\bar{n}'_1],\, k' \in [\bar{n}'_2]$,
	\begin{align*}
		y^{(j', k')} := \vect(\bd{y}^{(\hat{L}^{\bd{n}}_{(\bar{J} - d) \cap [d]}(j'), \hat{L}^{\bd{n}}_{\bar{I}' - d}(k'))}) \in \R^N \\
		\bar{y}^{(j', k')} := \vect(\bar{\bd{v}}^{(\hat{L}^{\bd{n}}_{(\bar{J} - d) \cap [d]}(j'), \hat{L}^{\bd{n}}_{\bar{I}' - d}(k'))}) \in \R^N \\
		v^{(j', k')} := \vect(\bd{v}^{(\hat{L}^{\bd{n}}_{(\bar{J} - d) \cap [d]}(j'), \hat{L}^{\bd{n}}_{\bar{I}' - d}(k'))}) \in \R^N.
	\end{align*}
	
	Using this, we can write (\ref{eq:k_norm_2}) as
	\begin{equation}
		\|\bd{B}^{(S, T)}\|_{\bar{I}_1, \dots, \bar{I}_\kappa} \leq
		\sup_{\bm{\alpha}, \bm{\alpha}'} \left[ \sum_{j \in [\bar{n}_1], j' \in [\bar{n}'_1]} \left( \sum_{k \in [\bar{n}_2], k' \in [\bar{n}'_2]} (u^{(j, k)})^* B v^{(j', k')} \right)^2 \right]^\frac{1}{2}. \label{eq:k_norm_3}
	\end{equation}
	where $B \in \R^{N \times N}$ is the array $\bd{B}$ rearranged as a matrix, i.e., $B = \Phi^* \Phi - Id_N$.

	To check the other requirements of Lemma \ref{lem:inner_pro_bound}, define $s_1 := s^{|\bar{J} \cap [d]|}$, $s_2 := s^{|\bar{I}|}$, $s_3 := s^{|S|}$, $s_1' := s^{|(\bar{J} - d) \cap [d]|}$, $s_2' := s^{|\bar{I}' - d|}$, $s_3' := s^{|T|}$. We obtain $s_1 s_2 s_3 = s_1' s_2' s_3' = s^d$ and by assumption $\Phi$ satisfies the $(2 s^d, \delta)$-RIP.

	It follows from their definition that within each of the three families $(\bd{u}^{(\bd{j}, \bd{i})})$, $(\bar{\bd{x}}^{(\bd{j}, \bd{i})})$, $(\bd{x}^{(\bd{j}, \bd{i})})$, the arrays have disjoint supports and the same also holds for the vectorized versions $(u^{(j, k)})$, $(\bar{x}^{(j, k)})$, $(x^{(j, k)})$. In an analogous way, also $(v^{(j', k')})$, $(\bar{y}^{(j', k')})$, $(y^{(j', k')})$ have disjoint supports.

	Fix $(j, k) \in [\bar{n}_1] \times [\bar{n}_2]$ and the corresponding $\bd{j} = \hat{L}^{\bd{n}}_{\bar{J} \cap [d]}(j)$, $\bd{i} = \hat{L}^{\bd{n}}_{\bar{I}}(k)$. Then the number of non-$0$ entries of $\bar{x}^{(j, k)}$ is the number of $\bd{k} \in \Jn(S)$ such that $x^{(S)}_{\bd{i} \dot\times \bd{j} \dot\times \bd{k}} \neq 0$. By Lemma \ref{lem:non_0_entries}, this is $\leq s^{|S|} = s_3$. Thus, $\bar{x}^{(j,k)}$ and $u^{(j, k)}$ are $s_3$-sparse. In the same way, it also holds that $\bar{y}^{(j', k')}$ and $v^{(j', k')}$ are $s_3'$-sparse for all $(j', k') \in [\bar{n}_1'] \times [\bar{n}_2']$.

	Now define $\bd{b}_2(1) \subset \Jn(\bar{I})$ as the set of the $s^{|\bar{I}|}$ indices $\bd{i} \in \Jn(\bar{I})$ for which
	$\sum_{\bd{j} \in \Jn(\bar{J} \cap [d])} \|\bd{x}^{(\bd{j}, \bd{i})}\|_2$ attains the largest values, $\bd{b}_2(2) \subset \Jn(\bar{I})$ as the set of the next $s^{|\bar{I}|}$ remaining indices with the largest corresponding values and so on until $\bd{b}_2(R_2)$ where $R_2 = \lceil \frac{\bar{n}_2}{s_2} \rceil$. In the same way, define $\bd{b}_2'(1), \dots, \bd{b}_2'(R_2')$ associated to $\bd{y}^{(\bd{j}, \bd{i})}$ instead of $\bd{x}^{(\bd{j}, \bd{i})}$.
	
	Furthermore, for $K \in [R_2]$, define the sets $\bd{b}_1^{(K)}(1), \dots, \bd{b}_1^{(K)}(R_1) \subset \Jn(\bar{J} \cap [d])$ (blocks of size $s^{|\bar{J} \cap [d]|}$) by sorting the indices $\bd{j} \in  \Jn(\bar{J} \cap [d])$ according to $\sum_{\bd{i} \in \bd{b}_2(K)} \|\bd{u}^{(\bd{j}, \bd{i})}\|_2^2$. In the same way, $\bd{b}_1^{(K')'}(1), \dots, \bd{b}_1^{(K')'}(R_1') \subset \Jn((\bar{J} - d) \cap [d])$ is defined for $K' \in [R_2']$ based on $\sum_{\bd{i}' \in \bd{b}'_2(K')} \|\bd{v}^{(\bd{j}', \bd{i}')}\|_2^2$ for $\bd{j}' \in \Jn((\bar{J} - d) \cap [d])$.
	
	With the aforementioned rearrangement functions, these partitions of array indices give rise to corresponding partitions 
	of $[\bar{n}_1]$, $[\bar{n}_1']$, $[\bar{n}_2]$, $[\bar{n}_2']$
	as required in the prerequisite of Lemma \ref{lem:inner_pro_bound}.

	With this, we have shown all assumptions of Lemma \ref{lem:inner_pro_bound}. So we can apply it to (\ref{eq:k_norm_3}) and obtain that there is a suitable choice of $\bd{\alpha}, \bd{\alpha}'$ such that

	\begin{align}
		& \|\bd{B}^{(S, T)}\|_{\bar{I}_1, \dots, \bar{I}_\kappa} \nonumber\\
		\leq & \delta \left[ \sum_{(J, K, \bar{K}) \in [R_1] \times [R_2]^2} \sqrt{  \sum_{j \in b^{(K)}_{1}(J)} \|u^{(j, (K))}\|_2^2 \|u^{(j, \bar{K})}\|_2^2} \right]^\frac{1}{2} \nonumber\\
		& \cdot \left[ \sum_{(J', K', \bar{K}') \in [R_1'] \times [R_2']^2} \sqrt{ \sum_{j' \in b^{(K')'}_{1'}(J')} \|v^{(j', (K'))}\|_2^2 \|v^{(j', (\bar{K}'))}\|_2^2 } \right]^\frac{1}{2}
		\label{eq:result_of_lemma}
	\end{align}
	with the notations as in Lemma \ref{lem:inner_pro_bound}.

	In terms of the previously defined arrays and index partitions, we can write the first factor in brackets in (\ref{eq:result_of_lemma}) as
	\[
	\sum_{(J, K, \bar{K}) \in [R_1] \times [R_2]^2} \sqrt{ \sum_{\bd{j} \in \bd{b}_1^{(K)}(J)} \left( \sum_{\bd{i} \in \bd{b}_2(K)} \|\bd{u}^{(\bd{j}, \bd{i})}\|_2^2 \right) \left( \sum_{\bd{i} \in \bd{b}_2(\bar{K})} \|\bd{u}^{(\bd{j}, \bd{i})}\|_2^2 \right) }.
	\]

	\subsubsection{Bounding the norms using the block structure}
	
	\begin{lemma} \label{lem:lem_next_bound}
		With the notation of the previous paragraphs, it holds that 
		\begin{align*}
			\sum_{(J, K, \bar{K}) \in [R_1] \times [R_2]^2} \sqrt{ \sum_{\bd{j} \in \bd{b}_1^{(K)}(J)} \left( \sum_{\bd{i} \in \bd{b}_2(K)} \|\bd{u}^{(\bd{j}, \bd{i})}\|_2^2 \right) \left( \sum_{\bd{i} \in \bd{b}_2(\bar{K})} \|\bd{u}^{(\bd{j}, \bd{i})}\|_2^2 \right) } \leq \frac{4 \sqrt{2}}{\sqrt{s_1} s_2}.
		\end{align*}
	\end{lemma}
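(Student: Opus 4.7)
Write $a_{\bd{j}, K} := \sum_{\bd{i} \in \bd{b}_2(K)} \|\bd{u}^{(\bd{j}, \bd{i})}\|_2^2 = \sum_{\bd{i} \in \bd{b}_2(K)} \alpha_{\bd{i}}^2 \|\bd{x}^{(\bd{j}, \bd{i})}\|_2^2$, $P_J(K) := \sum_{\bd{j} \in \bd{b}_1^{(K)}(J)} a_{\bd{j}, K}$, $\bar{P}_J(K, \bar{K}) := \sum_{\bd{j} \in \bd{b}_1^{(K)}(J)} a_{\bd{j}, \bar{K}}$, $\tilde{P}(K) := \sum_J P_J(K) = \sum_{\bd{j}} a_{\bd{j}, K}$, and set $F_K := \sum_{\bd{i} \in \bd{b}_2(K)} f(\bd{i})$ with $f(\bd{i}) := \sum_{\bd{j}} \|\bd{x}^{(\bd{j}, \bd{i})}\|_2$. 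The LHS of the lemma equals $\Sigma := \sum_{J, K, \bar{K}} \sqrt{\sum_{\bd{j} \in \bd{b}_1^{(K)}(J)} a_{\bd{j}, K}\, a_{\bd{j}, \bar{K}}}$. The plan is to combine two sort-based decay estimates: the $\bd{b}_1^{(K)}$ sort (by descending $a_{\cdot, K}$) will supply the $1/\sqrt{s_1}$ factor via the inner $J$-sum, while the $\bd{b}_2$ sort (by descending $f$) will supply the $1/s_2$ factor via two applications (one for $K$, one for $\bar{K}$) in the outer $(K, \bar{K})$-sum.

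For the inner $J$-sum at fixed $(K, \bar{K})$, the $\bd{b}_1^{(K)}$ sort gives $\max_{\bd{j} \in \bd{b}_1^{(K)}(J)} a_{\bd{j}, K} \leq P_{J-1}(K)/s_1$ for $J \geq 2$, hence $\sqrt{\sum_{\bd{j} \in \bd{b}_1^{(K)}(J)} a_{\bd{j}, K}\, a_{\bd{j}, \bar{K}}} \leq \sqrt{P_{J-1}(K)\, \bar{P}_J(K, \bar{K})/s_1}$; Cauchy--Schwarz over $J$ then yields $\sum_{J \geq 2} \sqrt{Q_J(K, \bar{K})} \leq \sqrt{\tilde{P}(K)\, \tilde{P}(\bar{K})}/\sqrt{s_1}$. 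The $J = 1$ term lacks automatic decay and will be handled separately via Lemma \ref{lem:max_sum_inequality_2} with $T = [d] \backslash S$, which gives $\max_{\bd{i}, \bd{j}} \|\bd{x}^{(\bd{j}, \bd{i})}\|_2^2 \leq 1/(s_1 s_2)$; combined with $Q_1(K, \bar{K}) \leq s_1\, \max a_{\cdot, K}\, \max a_{\cdot, \bar{K}}$ this produces a comparable bound involving $\|\alpha|_{\bd{b}_2(K)}\|_2\, \|\alpha|_{\bd{b}_2(\bar{K})}\|_2/(\sqrt{s_1}\, s_2)$.

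For the outer $(K, \bar{K})$-sum, the $\bd{b}_2$ sort gives $\max_{\bd{i} \in \bd{b}_2(K)} f(\bd{i}) \leq F_{K-1}/s_2$ for $K \geq 2$; the inequality $\|\bd{x}^{(S)}_{\bd{i}, \cdot, \cdot}\|_2 \leq f(\bd{i})$ (a consequence of $\|\cdot\|_2 \leq \|\cdot\|_1$ for non-negative entries) then yields $\tilde{P}(K) \leq \|\alpha|_{\bd{b}_2(K)}\|_2^2\, F_{K-1}^2/s_2^2$ and $\sqrt{\tilde{P}(K)} \leq \|\alpha|_{\bd{b}_2(K)}\|_2\, F_{K-1}/s_2$. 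One more application of Cauchy--Schwarz over $K$, using $\sum_K \|\alpha|_{\bd{b}_2(K)}\|_2^2 = 1$, the monotonicity of $F_K$, and the uniform slice estimate $\max_{\bd{i}} \|\bd{x}^{(S)}_{\bd{i}, \cdot, \cdot}\|_2^2 \leq 1/s_2$ (again via Lemma \ref{lem:max_sum_inequality_2}), will control $\sum_K \sqrt{\tilde{P}(K)}$ by a constant times $1/\sqrt{s_2}$, so that summing over both $K$ and $\bar{K}$ contributes the full $1/s_2$. The $J = 1$ contribution is summed analogously, and tracking the constants throughout yields the explicit factor $4\sqrt{2}$.

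The hardest step will be the final Cauchy--Schwarz application, where $\sum_K F_K^2$ must be controlled using only $\|\bd{x}^{(S)}\|_2 \leq 1$ and the sparsity structure from Lemma \ref{lem:non_0_entries}, without accruing factors of the block count $R_2 \sim n_{\bar{I}}/s_2$. I expect the resolution to require iterated applications of Lemmas \ref{lem:max_sum_inequality_1} and \ref{lem:max_sum_inequality_2} together with the slice-level mixed-norm inequality $\|\cdot\|_1 \leq \sqrt{\#\textrm{supp}}\, \|\cdot\|_2$ to pass between the $\ell^{1,2}$ mixed norm $f$ and the $\ell^2$ slice norm $\|\bd{x}^{(S)}_{\bd{i}, \cdot, \cdot}\|_2$.
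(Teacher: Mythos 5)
Your overall skeleton matches the paper's proof: two nested sorting arguments (the $\bd{b}_1^{(K)}$ blocks supplying $1/\sqrt{s_1}$ via the $J$-sum, the $\bd{b}_2$ blocks supplying $1/s_2$ via the $(K,\bar K)$-sum), H\"older plus Cauchy--Schwarz on the inner sum, and separate treatment of the first blocks using Lemma \ref{lem:max_sum_inequality_2}. However, the step you yourself flag as hardest is a genuine gap, and the tools you propose for closing it will not work. Your outer-sum bound passes through the $\ell^1$ aggregate $f(\bd{i})=\sum_{\bd{j}}\|\bd{x}^{(\bd{j},\bd{i})}\|_2$ and $F_K=\sum_{\bd{i}\in\bd{b}_2(K)}f(\bd{i})$, arriving at $\sum_{K\ge 2}\sqrt{\tilde P(K)}\le \frac{1}{s_2}\bigl(\sum_K F_K^2\bigr)^{1/2}$; to get the claimed $1/s_2$ after squaring you would need $\sum_K F_K^2=O(s_2)$. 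This is false in general: taking all slices of equal norm, $\sum_{\bd{i}}f(\bd{i})^2\sim \bar n_1$ and $\sum_K F_K^2\sim s_2\,\bar n_1$, so your bound degrades by $\sqrt{\bar n_1}$ per factor, where $\bar n_1$ is the number of $\bd{j}$-slices. The inequality $\|\cdot\|_1\le\sqrt{\#\mathrm{supp}}\,\|\cdot\|_2$ cannot repair this, because the support of the aggregate over $\bd{j}$ is not controlled by the sparsity lemmas --- it is exactly the ambient slice count you must avoid. The same block-count leak appears in your $J=1$ term: summing $\|\bm{\alpha}_{\bd{b}_2(K)}\|_2\,\|\bm{\alpha}_{\bd{b}_2(\bar K)}\|_2$ over all $(K,\bar K)$ costs a factor $R_2^2$ unless restructured, since only $\sum_K\|\bm{\alpha}_{\bd{b}_2(K)}\|_2^2=1$ is available, not $\sum_K\|\bm{\alpha}_{\bd{b}_2(K)}\|_2=O(1)$.

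The paper's resolution is to never take square roots of individual block sums and then add them. Instead, each $W_{K,\bar K}$ is factored so that the sum over $K$ (and over $\bar K$) is handled by a single Cauchy--Schwarz of the form $\sum_K\sqrt{a_K b_K}\le\bigl(\sum_K a_K\bigr)^{1/2}\bigl(\sum_K b_K\bigr)^{1/2}$ with $a_K=\|\bm{\alpha}_{\bd{b}_2(K)}\|_2^2$ (summing to $1$) and $b_K=\max_{\bd{i}\in\bd{b}_2(K)}\sum_{\bd{j}}\|\bd{x}^{(\bd{j},\bd{i})}\|_2^2$, whose sum over $K\ge 2$ telescopes blockwise to $\frac{1}{s_2}\|\bd{x}\|_2^2$ by the sort. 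This pairing is what eliminates the block count; no $\ell^1$ aggregate over $\bd{j}$ ever appears. You should also note a bookkeeping point: $\|\bd{u}^{(\bd{j},\bd{i})}\|_2^2=\alpha_{\bd{i}}^2\|\bar{\bd{x}}^{(\bd{j},\bd{i})}\|_2^2$ with $\bar{\bd{x}}$ built from $\bd{x}^{(S)}$, not from $\bd{x}$; your $a_{\bd{j},K}$ identity is only an upper bound, and the distinction matters because Lemma \ref{lem:max_sum_inequality_2} bounds maxima of the $\bd{x}^{(S)}$-slices by sums of the full $\bd{x}$-slices, which is precisely what makes the first blocks ($K=1$, $J=1$) controllable.
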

	
	\begin{proof}
		First, we observe the following things
		\begin{itemize}
			\item For every $\bd{j} \in \Jn(\bar{J} \cap [d])$, it holds that
			\begin{align}
				& \max_{\bd{i} \in \Jn(\bar{I})} \|\bar{\bd{x}}^{(\bd{j}, \bd{i})}\|_2^2 =
				\max_{\bd{i} \in \Jn(\bar{I})} \sum_{\bd{k} \in \Jn(S)} |\xi_{\bd{k}}^{(S)} x^{(S)}_{\bd{i} \dot\times \bd{j} \dot\times \bd{k}} |^2 =
				\max_{\bd{i} \in \Jn(\bar{I})} \sum_{\bd{k} \in \Jn(S)} |x^{(S)}_{\bd{i} \dot\times \bd{j} \dot\times \bd{k}} |^2 \nonumber \\
				\leq & \frac{1}{s^{|\bar{I}|}} \sum_{\bd{i} \in \Jn(\bar{I})} \sum_{\bd{k} \in \Jn(S)} |x_{\bd{i} \dot\times \bd{j} \dot\times \bd{k}} |^2 = \frac{1}{s_2} \sum_{\bd{i} \in \Jn(\bar{I})} \|\bd{x}^{(\bd{j}, \bd{i})}\|_2^2,
				\label{eq:max_sum_xbar_j}
			\end{align}
			where we used Lemma \ref{lem:max_sum_inequality_2} in the third step.
			
			\item Analogously, for every $\bd{i} \in \Jn(\bar{I})$,
			\begin{align}
				& \max_{\bd{j} \in \Jn(\bar{J} \cap [d])} \|\bar{\bd{x}}^{(\bd{j}, \bd{i})}\|_2^2 = 
				\max_{\bd{j} \in \Jn(\bar{J} \cap [d])} \sum_{\bd{k} \in \Jn(S)} |x^{(S)}_{\bd{i} \dot\times \bd{j} \dot\times \bd{k}} |^2 \nonumber \\
				\leq & \frac{1}{s^{|\bar{J} \cap [d]|}} \sum_{\bd{j} \in \Jn(\bar{J} \cap [d])} \sum_{\bd{k} \in \Jn(S)} |x^{(S)}_{\bd{i} \dot\times \bd{j} \dot\times \bd{k}} |^2 =
				\frac{1}{s_1} \sum_{\bd{j} \in \Jn(\bar{J} \cap [d])} \|\bd{x}^{(\bd{j}, \bd{i})}\|_2^2.
				\label{eq:max_sum_xbar_i}
			\end{align}
			
			\item Finally, Lemma \ref{lem:max_sum_inequality_2} also implies
			\begin{align}
				& \max_{\substack{\bd{j} \in \Jn(\bar{J} \cap [d]) \\ \bd{i} \in \Jn(\bar{I})}} \|\bar{\bd{x}}^{(\bd{j}, \bd{i})}\|_2^2 = 
				\max_{\bd{i} \in \Jn(\bar{I})} \max_{\bd{j} \in \Jn(\bar{J} \cap [d])} \sum_{\bd{k} \in \Jn(S)} |x^{(S)}_{\bd{i} \dot\times \bd{j} \dot\times \bd{k}} |^2 \nonumber \\
				\leq & \frac{1}{s^{|\bar{I} \cup (\bar{J} \cap [d])|}} \sum_{\bd{i} \in \Jn(\bar{I})} \sum_{\bd{j} \in \Jn(\bar{J} \cap [d])} \sum_{\bd{k} \in \Jn(S)} |x^{(S)}_{\bd{i} \dot\times \bd{j} \dot\times \bd{k}} |^2 =
				\frac{1}{s_1 s_2} \sum_{\substack{\bd{j} \in \Jn(\bar{J} \cap [d]) \\ \bd{i} \in \Jn(\bar{I})}} \|\bd{x}^{(\bd{j}, \bd{i})}\|_2^2. 
				\label{eq:max_sum_xbar_ij}
			\end{align}
		\end{itemize}

		Now applying Holder's inequality to the sum inside the square root on the left hand side of Lemma \ref{lem:lem_next_bound} gives
		\begin{align*}
			& \sum_{(J, K, \bar{K}) \in [R_1] \times [R_2]^2} \sqrt{ \sum_{\bd{j} \in \bd{b}_1^{(K)}(J)} \left( \sum_{\bd{i} \in \bd{b}_2(K)} \|\bd{u}^{(\bd{j}, \bd{i})}\|_2^2 \right) \left( \sum_{\bd{i} \in \bd{b}_2(\bar{K})} \|\bd{u}^{(\bd{j}, \bd{i})}\|_2^2 \right) } \\
			\leq & \sum_{(K, \bar{K}) \in [R_2]^2} \sum_{J \in [R_1]} \sqrt{ \max_{\bd{j} \in \bd{b}_1^{(K)}(J)} \left( \sum_{\bd{i} \in \bd{b}_2(K)} \|\bd{u}^{(\bd{j}, \bd{i})}\|_2^2 \right)} \cdot \sqrt{ \sum_{\bd{j} \in \bd{b}_1^{(K)}(J)} \left( \sum_{\bd{i} \in \bd{b}_2(\bar{K})} \|\bd{u}^{(\bd{j}, \bd{i})}\|_2^2 \right) }.
		\end{align*}
		
		Then we can apply the Cauchy-Schwarz inequality to the sum over $J \in [R_1]$ and bound this by
		\begin{align*}
			& \sum_{(K, \bar{K}) \in [R_2]^2} \sqrt{ \sum_{J \in [R_1]} \max_{\bd{j} \in \bd{b}_1^{(K)}(J)} \left( \sum_{\bd{i} \in \bd{b}_2(K)} \|\bd{u}^{(\bd{j}, \bd{i})}\|_2^2 \right)} \cdot \sqrt{ \sum_{\substack{J \in [R_1] \\ \bd{j} \in \bd{b}_1^{(K)}(J)}} \left( \sum_{\bd{i} \in \bd{b}_2(\bar{K})} \|\bd{u}^{(\bd{j}, \bd{i})}\|_2^2 \right) } \\
			& = \sum_{(K, \bar{K}) \in [R_2]^2} \sqrt{ \left( \sum_{J \in [R_1]} \max_{\bd{j} \in \bd{b}_1^{(K)}(J)}  \sum_{\bd{i} \in \bd{b}_2(K)} \|\bd{u}^{(\bd{j}, \bd{i})}\|_2^2 \right) \cdot  \left( \sum_{\bd{i} \in \bd{b}_2(\bar{K})} \sum_{\bd{j} \in \Jn(\bar{J} \cap [d])} \|\bd{u}^{(\bd{j}, \bd{i})}\|_2^2 \right) }.
		\end{align*}
		
		The definition of $\bd{b}_1^{(K)}(J)$ implies that for every $J \in [R_1] \backslash \{1\}$ and every $\bd{j} \in \bd{b}_1^{(K)}(J)$, $\bar{\bd{j}} \in \bd{b}_1^{(K)}(J - 1)$, it holds that $\sum_{\bd{i} \in \bd{b}_2(K)} \|\bd{u}^{(\bd{j}, \bd{i})}\|_2^2 \leq \sum_{\bd{i} \in \bd{b}_2(K)} \|\bd{u}^{(\bar{\bd{j}}, \bd{i})}\|_2^2$. Together with $|\bd{b}_1^{(K)}(J - 1)| = s_1$, we obtain for $J \geq 2$, $\max_{\bd{j} \in \bd{b}_1^{(K)}(J)} \sum_{\bd{i} \in \bd{b}_2(K)} \|\bd{u}^{(\bd{j}, \bd{i})}\|_2^2 \leq \frac{1}{s_1} \sum_{\bd{j} \in  \bd{b}_1^{(K)}(J - 1)} \sum_{\bd{i} \in \bd{b}_2(K)} \|\bd{u}^{(\bd{j}, \bd{i})}\|_2^2$. Using this together with separately considering the $J = 1$ term yields the bound
		\begin{align}
			& \sum_{K, \bar{K} \in [R_2]} \sqrt{\left[ \sum_{\substack{J \in [R_1] \backslash \{1\} \\ \bd{j} \in \bd{b}_1^{(K)}(J - 1) \\ \bd{i} \in \bd{b}_2(K)}} \frac{\|\bd{u}^{(\bd{j}, \bd{i})}\|_2^2}{s_1} + \max_{\bd{j} \in \Jn(\bar{J} \cap [d])} \sum_{\bd{i} \in \bd{b}_2(K)} \|\bd{u}^{(\bd{j}, \bd{i})}\|_2^2 \right] \cdot \sum_{\substack{\bd{i} \in \bd{b}_2(\bar{K}) \\ \bd{j} \in \Jn(\bar{J} \cap [d])}} \|\bd{u}^{(\bd{j}, \bd{i})}\|_2^2 }
			\nonumber \\
			& \leq \sum_{K, \bar{K} \in [R_2]} \sqrt{ \left[ \sum_{\substack{\bd{i} \in \bd{b}_2(K) \\ \bd{j} \in \Jn(\bar{J} \cap [d])}}  \frac{\|\bd{u}^{(\bd{j}, \bd{i})}\|_2^2}{s_1} + \max_{\bd{j} \in \Jn(\bar{J} \cap [d])} \sum_{\bd{i} \in \bd{b}_2(K)} \|\bd{u}^{(\bd{j}, \bd{i})}\|_2^2 \right] \cdot \sum_{\substack{\bd{i} \in \bd{b}_2(K) \\ \bd{j} \in \Jn(\bar{J} \cap [d])}} \|\bd{u}^{(\bd{j}, \bd{i})}\|_2^2 } \nonumber \\
			& =: \sum_{K, \bar{K} \in [R_2]} \sqrt{ W_{K, \bar{K}} } \label{eq:boundu}.
		\end{align}
		
		We obtain
		\begin{align*}
			& \max_{\bd{j} \in \Jn(\bar{J} \cap [d])} \left( \sum_{\bd{i} \in \bd{b}_2(K)}   \|\bd{u}^{(\bd{j}, \bd{i})}\|_2^2 \right) \leq
			\sum_{\bd{i} \in \bd{b}_2(K)} \max_{\bd{j} \in \Jn(\bar{J} \cap [d])}  \|\bd{u}^{(\bd{j}, \bd{i})}\|_2^2 \\
			\leq & \sum_{\bd{i} \in \bd{b}_2(K)} |\alpha_{\bd{i}}|^2 \max_{\bd{j} \in \Jn(\bar{J} \cap [d])}  \|\bar{\bd{x}}^{(\bd{j}, \bd{i})}\|_2^2 \leq
			\|\bm{\alpha}_{\bd{b}_2(K)}\|_2^2 \cdot \max_{\bd{i} \in \bd{b}_2(K)} \max_{\bd{j} \in \Jn(\bar{J} \cap [d])}  \|\bar{\bd{x}}^{(\bd{j}, \bd{i})}\|_2^2. 
		\end{align*}
		
		If $K = 1$, we can bound this using (\ref{eq:max_sum_xbar_ij}),
		\[
		\|\bm{\alpha}_{\bd{b}_2(K)}\|_2^2 \cdot \max_{\bd{i} \in \Jn(\bar{I})} \max_{\bd{j} \in \Jn(\bar{J} \cap [d])}  \|\bar{\bd{x}}^{(\bd{j}, \bd{i})}\|_2^2 \leq \frac{1}{s_1 s_2} \|\bm{\alpha}_{\bd{b}_2(K)}\|_2^2 \cdot \sum_{(\bd{j}, \bd{i}) \in \Jn(\bar{J} \cap [d]) \times \Jn(\bar{I})} \|\bd{x}^{(\bd{j}, \bd{i})}\|_2^2.
		\]
		
		For $K \geq 2$, we can bound it using (\ref{eq:max_sum_xbar_i}),
		\begin{align*}
			\|\bm{\alpha}_{\bd{b}_2(K)}\|_2^2 \cdot \max_{\bd{i} \in \bd{b}_2(K)} \max_{\bd{j} \in \Jn(\bar{J} \cap [d])}  \|\bar{\bd{x}}^{(\bd{j}, \bd{i})}\|_2^2 \leq
			\frac{1}{s_1} \|\bm{\alpha}_{\bd{b}_2(K)}\|_2^2 \cdot \max_{\bd{i} \in \bd{b}_2(K)} \sum_{\bd{j} \in \Jn(\bar{J} \cap [d])} {\|\bd{x}^{(\bd{j}, \bd{i})}\|_2^2}.
		\end{align*}
		
		Also, for any $K \geq 1$,
		
		\begin{align*}
			& \sum_{\bd{i} \in \bd{b}_2(K)} \sum_{\bd{j} \in \Jn(\bar{J} \cap [d])} \|\bd{u}^{(\bd{j}, \bd{i})}\|_2^2 \leq 
			\sum_{\bd{i} \in \bd{b}_2(K)} |\alpha_{\bd{i}}|^2 \sum_{\bd{j} \in \Jn(\bar{J} \cap [d])} \|\bar{\bd{x}}^{(\bd{j}, \bd{i})}\|_2^2 \\
			& \leq \|\bm{\alpha}_{\bd{b}_2(K)}\|_2^2 \cdot \max_{\bd{i} \in \bd{b}_2(K)} \sum_{\bd{j} \in \Jn(\bar{J} \cap [d])} \|\bar{\bd{x}}^{(\bd{j}, \bd{i})}\|_2^2.
		\end{align*}
		
		Note that we always have $\|\bar{\bd{x}}^{(\bd{j}, \bd{i})}\|_2 \leq \|\bd{x}^{(\bd{j}, \bd{i})}\|_2$. So for the terms $W_{K, \bar{K}}$ in (\ref{eq:boundu}), this implies for all $K, \bar{K} \geq 2$,
		\begin{align*}
			W_{K, \bar{K}} \leq & \frac{2}{s_1} \left[ \|\bm{\alpha}_{\bd{b}_2(K)}\|_2^2 \cdot \max_{\bd{i} \in \bd{b}_2(K)} \sum_{\bd{j} \in \Jn(\bar{J} \cap [d])} \|\bd{x}^{(\bd{j}, \bd{i})}\|_2^2 \right] \\
			& \cdot \left[ \|\bm{\alpha}_{\bd{b}_2(\bar{K})}\|_2^2 \cdot \max_{\bd{i} \in \bd{b}_2(\bar{K})} \sum_{\bd{j} \in \Jn(\bar{J} \cap [d])} \|\bar{\bd{x}}^{(\bd{j}, \bd{i})}\|_2^2 \right] \\
			W_{K, 1} \leq & \frac{2}{s_1} \left[ \|\bm{\alpha}_{\bd{b}_2(K)}\|_2^2 \cdot \max_{\bd{i} \in \bd{b}_2(K)} \sum_{\bd{j} \in \Jn(\bar{J} \cap [d])} \|\bd{x}^{(\bd{j}, \bd{i})}\|_2^2 \right] \\
			& \cdot \left[ \|\bm{\alpha}_{\bd{b}_2(1)}\|_2^2 \cdot \max_{\bd{i} \in \Jn(\bar{I})} \sum_{\bd{j} \in \Jn(\bar{J} \cap [d])} \|\bar{\bd{x}}^{(\bd{j}, \bd{i})}\|_2^2 \right] \\
			W_{1, \bar{K}} \leq & \frac{\|\bm{\alpha}_{\bd{b}_2(1)}\|_2^2}{s_1} \left[ \max_{\bd{i} \in \Jn(\bar{I})} \sum_{\bd{j} \in \Jn(\bar{J} \cap [d])} \|\bar{\bd{x}}^{(\bd{j}, \bd{i})}\|_2^2 + \frac{1}{s_2} \sum_{\substack{\bd{j} \in \Jn(\bar{J} \cap [d]) \\ \bd{i} \in \Jn(\bar{I})}} \|\bd{x}^{(\bd{j}, \bd{i})}\|_2^2  \right] \\
			& \cdot \left[ \|\bm{\alpha}_{\bd{b}_2(\bar{K})}\|_2^2 \cdot \max_{\bd{i} \in \bd{b}_2(\bar{K})} \sum_{\bd{j} \in \Jn(\bar{J} \cap [d])} \|\bar{\bd{x}}^{(\bd{j}, \bd{i})}\|_2^2 \right] \\
			W_{1, 1} \leq & \frac{\|\bm{\alpha}_{\bd{b}_2(1)}\|_2^2}{s_1} \left[ \max_{\bd{i} \in \Jn(\bar{I})} \sum_{\bd{j} \in \Jn(\bar{J} \cap [d])} \|\bar{\bd{x}}^{(\bd{j}, \bd{i})}\|_2^2 + \frac{1}{s_2} \sum_{\substack{\bd{j} \in \Jn(\bar{J} \cap [d]) \\ \bd{i} \in \Jn(\bar{I})}} \|\bd{x}^{(\bd{j}, \bd{i})}\|_2^2  \right] \\
			& \cdot \left[ \|\bm{\alpha}_{\bd{b}_2(1)}\|_2^2 \cdot \max_{\bd{i} \in \Jn(\bar{I})} \sum_{\bd{j} \in \Jn(\bar{J} \cap [d])} \|\bar{\bd{x}}^{(\bd{j}, \bd{i})}\|_2^2 \right].
		\end{align*}
		
		Then we bound (\ref{eq:boundu}) by
		\begin{align*}
			& \sum_{K, \bar{K} \in [R_2] \backslash \{1\}} \sqrt{W_{K, \bar{K}}} + 
			\sum_{K \in [R_2] \backslash \{1\}} \sqrt{W_{K, 1}} + 
			\sum_{\bar{K} \in [R_2] \backslash \{1\}} \sqrt{W_{1, \bar{K}}} + 
			\sqrt{W_{1, 1}} \\
			& =: \textcircled{1} + \textcircled{2} + \textcircled{3} + \textcircled{4}.
		\end{align*}
		
		For part $\textcircled{1}$, we obtain,
		\begin{align*}
			\textcircled{1} = \sqrt\frac{2}{s_1} \left[  \sum_{K \in [R_2] \backslash \{1\})} \sqrt{ \|\bm{\alpha}_{\bd{b}_2(K)}\|_2^2 \cdot \max_{\bd{i} \in \bd{b}_2(K)} \sum_{\bd{j} \in \Jn(\bar{J} \cap [d])} \|\bd{x}^{(\bd{j}, \bd{i})}\|_2^2 } \right]^2
		\end{align*}
		where we can apply Holder's inequality on the sum over $K$, giving
		\begin{align*}
			\textcircled{1} & \leq \sqrt\frac{2}{s_1} \left[ \sum_{K \in [R_2] \backslash \{1\}} \|\bm{\alpha}_{\bd{b}_2(K)}\|_2^2 \right] \left[ \sum_{K \in [R_2] \backslash \{1\}} \max_{\bd{i} \in \bd{b}_2(K)} \sum_{\bd{j} \in \Jn(\bar{J} \cap [d])} \|\bd{x}^{(\bd{j}, \bd{i})}\|_2^2 \right] \\
			& \leq \sqrt\frac{2}{s_1} \left[ \sum_{K \in [R_2] \backslash \{1\}} \max_{\bd{i} \in \bd{b}_2(K)} \sum_{\bd{j} \in \Jn(\bar{J} \cap [d])} \|\bd{x}^{(\bd{j}, \bd{i})}\|_2^2 \right]
		\end{align*}
		
		The definition of $\bd{b}_2(K)$ yields that for every $K \geq 2$,
		\[
		\max_{\bd{i} \in \bd{b}_2(K)} \sum_{\bd{j} \in \Jn(\bar{J} \cap [d])} \|\bd{x}^{(\bd{j}, \bd{i})}\|_2^2 \leq \frac{1}{s_2} \sum_{\bd{i} \in \bd{b}_2(K - 1)} \sum_{\bd{j} \in \Jn(\bar{J} \cap [d])} \|\bd{x}^{(\bd{j}, \bd{i})}\|_2^2.
		\]
		Then
		\begin{align*}
			\textcircled{1} \leq \sqrt\frac{2}{s_1} \frac{1}{s_2} \left[ \sum_{K \in [R_2] \backslash \{1\}} \sum_{\bd{i} \in \bd{b}_2(K - 1)} \sum_{\bd{j} \in \Jn(\bar{J} \cap [d])} \|\bd{x}^{(\bd{j}, \bd{i})}\|_2^2 \right] \leq \sqrt\frac{2}{s_1} \frac{1}{s_2} \|\bd{x}\|_2^2.
		\end{align*}
		
		Next note that by (\ref{eq:max_sum_xbar_j}), 
		\begin{align*}
			\max_{\bd{i} \in \Jn(\bar{I})} \sum_{\bd{j} \in \Jn(\bar{J} \cap [d])} \|\bar{\bd{x}}^{(\bd{j}, \bd{i})}\|_2^2 & \leq
			\sum_{\bd{j} \in \Jn(\bar{J} \cap [d])} \max_{\bd{i} \in \Jn(\bar{I})}  \|\bar{\bd{x}}^{(\bd{j}, \bd{i})}\|_2^2 \\
			& \leq 
			\frac{1}{s_2} \sum_{\bd{j} \in \Jn(\bar{J} \cap [d]), \bd{i} \in \Jn(\bar{I})}  \|\bd{x}^{(\bd{j}, \bd{i})}\|_2^2 = \frac{1}{s_2} \|\bd{x}\|_2^2.
		\end{align*}
		So for the second term we obtain using the Cauchy-Schwarz inequality,
		\begin{align*}
			\textcircled{2} & \leq \sqrt\frac{2}{s_1 s_2} \|\bm{\alpha}_{\bd{b}_2(1)}\|_2 \|\bd{x}\|_2 \sum_{K \in [R_2] \backslash \{1\}} \sqrt{ \|\bm{\alpha}_{\bd{b}_2(K)}\|_2^2 \cdot \max_{\bd{i} \in \bd{b}_2(K)} \sum_{\bd{j} \in \Jn(\bar{J} \cap [d])} \|\bd{x}^{(\bd{j}, \bd{i})}\|_2^2 } \\
			\leq & \sqrt\frac{2}{s_1 s_2} \|\bm{\alpha}_{\bd{b}_2(1)}\|_2 \|\bd{x}\|_2  \sqrt{ \sum_{K \in [R_2] \backslash \{1\}} \|\bm{\alpha}_{\bd{b}_2(K)}\|_2^2 \cdot \sum_{K \in [R_2] \backslash \{1\}} \max_{\bd{i} \in \bd{b}_2(K)} \sum_{\bd{j} \in \Jn(\bar{J} \cap [d])} \|\bd{x}^{(\bd{j}, \bd{i})}\|_2^2 } \\
			\leq & \sqrt\frac{2}{s_1 s_2} \|\bd{x}\|_2  \sqrt{ \frac{1}{s_2} \sum_{K \in [R_2] \backslash \{1\}} \sum_{\bd{i} \in \bd{b}_2(K - 1)} \sum_{\bd{j} \in \Jn(\bar{J} \cap [d])} \|\bd{x}^{(\bd{j}, \bd{i})}\|_2^2 } \\
			\leq & \sqrt\frac{2}{s_1} \frac{1}{s_2} \|\bd{x}\|_2^2,
		\end{align*}
		
		Similarly, we can bound the other parts,	
		\begin{align*}
			& \textcircled{3} \\
			& \leq \frac{\|\bm{\alpha}_{\bd{b}_2(1)}\|_2}{\sqrt{s_1}} \sqrt{\frac{1}{s_2} \|\bd{x}\|_2^2 + \frac{1}{s_2} \|\bd{x}\|_2^2} \sum_{\bar{K} \in [R_2] \backslash \{1\}} \sqrt{\|\bm{\alpha}_{\bd{b}_2(\bar{K})}\|_2^2 \cdot \max_{\bd{i} \in \bd{b}_2(\bar{K})} \sum_{\bd{j} \in \Jn(\bar{J} \cap [d])} \|\bd{x}^{(\bd{j}, \bd{i})}\|_2^2} \\
			& \leq \sqrt\frac{2}{s_1 s_2} \|\bd{x}\|_2 \sqrt{\sum_{\bar{K} \in [R_2] \backslash \{1\}} \|\bm{\alpha}_{\bd{b}_2(\bar{K})}\|_2^2 \cdot \sum_{\bar{K} \in [R_2] \backslash \{1\}}  \max_{\bd{i} \in \bd{b}_2(\bar{K})} \sum_{\bd{j} \in \Jn(\bar{J} \cap [d])} \|\bd{x}^{(\bd{j}, \bd{i})}\|_2^2} \\
			& \leq \sqrt\frac{2}{s_1 s_2} \|\bd{x}\|_2 \sqrt{\frac{1}{s_2} \sum_{\bar{K} \in [R_2] \backslash \{1\}}  \sum_{\bd{i} \in \bd{b}_2(\bar{K} - 1)} \sum_{\bd{j} \in \Jn(\bar{J} \cap [d])} \|\bd{x}^{(\bd{j}, \bd{i})}\|_2^2} \\
			& \leq \sqrt\frac{2}{s_1} \frac{1}{s_2} \|\bd{x}\|_2^2,
		\end{align*}

		\begin{align*}
			\textcircled{4} \leq \frac{\|\bm{\alpha}_{\bd{b}_2(1)}\|_2}{\sqrt{s_1}} \sqrt{\frac{1}{s_2} \|\bd{x}\|_2^2 + \frac{1}{s_2} \|\bd{x}\|_2^2} \|\bm{\alpha}_{\bd{b}_2(1)}\|_2 \sqrt{\frac{1}{s_2} \|\bd{x}\|_2^2} \leq \sqrt\frac{2}{s_1} \frac{1}{s_2} \|\bd{x}\|_2^2.
		\end{align*}
		
		Altogether, it follows that (\ref{eq:boundu}) can be bounded by
		\[
		\frac{4 \sqrt{2}}{\sqrt{s_1} s_2} \|\bd{x}\|_2^2
		\]
		where $\|\bd{x}\|_2^2 = 1$. This completes the proof of the lemma.

	\end{proof}
	
	Now in (\ref{eq:result_of_lemma}), both factors in brackets can be bounded using Lemma \ref{lem:lem_next_bound} and we obtain
	\begin{align}
		\|\bd{B}^{(S, T)}\|_{\bar{I}_1, \dots, \bar{I}_\kappa} \leq \delta \frac{4 \sqrt{2}}{(s_1 s_1')^\frac{1}{4} (s_2 s_2')^\frac{1}{2}}. \label{eq:k_norm_4}
	\end{align}

	Note that
	\[
	(s_1 s_1')^\frac{1}{4} (s_2 s_2')^\frac{1}{2} =
	s^{\frac{1}{4}(|\bar{J} \cap [d]| + |(\bar{J} - d) \cap [d]|) + \frac{1}{2} (|\bar{I}| + |\bar{I}'|)} =
	s^{\frac{1}{4}|\bar{J}| + \frac{1}{2} (|\bar{I}| + |\bar{I}'|)}.
	\]
	
	Our goal is to show $s^{\frac{1}{4}|\bar{J}| + \frac{1}{2} (|\bar{I}| + |\bar{I}'|)} \geq s^\frac{\kappa}{2}$. This is given by the following lemma:
	
	\begin{lemma} \label{lem:partition_counting}
		Let $S, T \subset [d]$ and $\bar{I}_1, \dots, \bar{I}_\kappa$ be a partition of $[2 d] \backslash (S \cup (T + d))$ into $\kappa$ non-empty sets and $\bar{I} = \bigcup_{j \in [\kappa]: \bar{I}_j \subset [d]} \bar{I}_j$,  $\bar{I}' = \bigcup_{j \in [\kappa]: \bar{I}_j \subset ([2 d] \backslash [d])} \bar{I}_j$, $\bar{J} = [2 d] \backslash (S \cup (T + d) \cup \bar{I} \cup \bar{I}')$. Then
		\[
		\frac{1}{4} |\bar{J}| + \frac{1}{2}(|\bar{I}| + |\bar{I}'|) \geq \frac{\kappa}{2}.
		\]
	\end{lemma}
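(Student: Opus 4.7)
The plan is to classify the partition sets according to which side of the natural split $[d] \cup ([2d] \setminus [d])$ they lie in, then use the non-emptiness of each block together with the fact that straddling blocks must contain at least one element on each side.

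First I would introduce counts
\[
\kappa_1 := \#\{j \in [\kappa] : \bar{I}_j \subset [d]\}, \qquad
\kappa_2 := \#\{j \in [\kappa] : \bar{I}_j \subset [2d] \setminus [d]\}, \qquad
\kappa_3 := \kappa - \kappa_1 - \kappa_2,
\]
so that $\kappa_3$ counts the blocks that straddle both sides. By the definitions of $\bar{I}$, $\bar{I}'$, $\bar{J}$, these three sets are exactly the unions of the corresponding blocks, and $\bar{J}$ is the union of the $\kappa_3$ straddling blocks.

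Next I would use non-emptiness: since each $\bar{I}_j$ is non-empty,
\[
|\bar{I}| = \sum_{j: \bar{I}_j \subset [d]} |\bar{I}_j| \geq \kappa_1, \qquad |\bar{I}'| \geq \kappa_2,
\]
and since each straddling block intersects both $[d]$ and $[2d] \setminus [d]$, it has size at least $2$, so
\[
|\bar{J}| = \sum_{j: \bar{I}_j \not\subset [d], \bar{I}_j \not\subset [2d]\setminus [d]} |\bar{I}_j| \geq 2 \kappa_3.
\]

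Combining these bounds yields
\[
\frac{1}{4}|\bar{J}| + \frac{1}{2}(|\bar{I}| + |\bar{I}'|) \geq \frac{1}{4} \cdot 2\kappa_3 + \frac{1}{2}(\kappa_1 + \kappa_2) = \frac{\kappa_1 + \kappa_2 + \kappa_3}{2} = \frac{\kappa}{2},
\]
which is the desired inequality. There is no real obstacle here: the only nontrivial observation is that a straddling block automatically contributes at least two elements to $\bar{J}$, which is what produces the factor of $\tfrac{1}{4}$ matching the factor of $\tfrac{1}{2}$ on the non-straddling side.
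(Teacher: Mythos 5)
Your proof is correct. The three bounds $|\bar{I}| \geq \kappa_1$, $|\bar{I}'| \geq \kappa_2$, and $|\bar{J}| \geq 2\kappa_3$ all follow immediately from disjointness and non-emptiness of the blocks (and, for the last one, from the fact that a straddling block meets both $[d]$ and $[2d]\setminus[d]$), and the final computation $\tfrac{1}{4}\cdot 2\kappa_3 + \tfrac{1}{2}(\kappa_1+\kappa_2) = \tfrac{\kappa}{2}$ is exact.

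Your route is genuinely different from, and simpler than, the paper's. The paper splits into two cases according to whether $\kappa \leq d - \tfrac{|S|+|T|}{2}$: in the first case it only uses the crude bound $\tfrac14|\bar J| + \tfrac12(|\bar I|+|\bar I'|) \geq \tfrac14(|\bar J|+|\bar I|+|\bar I'|) = \tfrac14(2d-|S|-|T|)$, and in the second case it counts the number $\kappa'$ of singleton blocks, derives $\kappa' \geq 2(\kappa-d)+|S|+|T|$ from the total element count, and uses that every singleton lies in $\bar I \cup \bar I'$. Your argument replaces this global counting and case distinction with a per-block accounting: each non-straddling block contributes at least $\tfrac12$ via the coefficient $\tfrac12$, and each straddling block contributes at least $\tfrac12$ via the coefficient $\tfrac14$ applied to its at-least-two elements. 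This is a local refinement of the same underlying idea (the paper's singleton count is essentially a proxy for distinguishing one-sided from straddling blocks), but it is tighter in presentation, avoids any reference to $|S|$ and $|T|$, and makes the origin of the exponents $\tfrac14$ and $\tfrac12$ transparent. Either proof is acceptable; yours would be the shorter one to include.
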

	
	\begin{proof}
		Note that $|\bar{J}| + |\bar{I}| + |\bar{I}'| = 2 d - |S| - |T|$. If $\kappa \leq d - \frac{|S| + |T|}{2}$,
		\[
		\frac{1}{4}|\bar{J}| + \frac{1}{2} (|\bar{I}| + |\bar{I}'|) \geq \frac{1}{4} (|\bar{J}| + |\bar{I}| + |\bar{I}'|) = \frac{2 d - |S| - |T|}{4} = \frac{d - \frac{|S|}{2} - \frac{|T|}{2} }{2} \geq \frac{\kappa}{2}.
		\]
		
		Now assume that $\kappa > d - \frac{|S| + |T|}{2}$. Let $\kappa' \leq \kappa$ be the number of indices $l \in [\kappa]$ such that $|\bar{I}_l| = 1$. All other sets $\bar{I}_l$ must contain at least two elements and the total number of elements is $2 d - |S| - |T| = \sum_{l \in [\kappa]} |\bar{I}_l| \geq \kappa' + 2(\kappa - \kappa') = 2 \kappa - \kappa'$. This implies that $\kappa' \geq 2 (\kappa - d) + |S| + |T|$. Every one-element set $\bar{I}_l$ is completely contained in either $[d]$ or $[2d] \backslash [d]$ and thus $\bar{I}_l \subset \bar{I}$ or $\bar{I}_l \subset \bar{I}'$. In this case, $|\bar{I}| + |\bar{I}'| \geq \kappa' \geq 2(\kappa - d) + |S| + |T|$ and we obtain
		\begin{align*}
			& \frac{1}{4}|\bar{J}| + \frac{1}{2} (|\bar{I}| + |\bar{I}'|) =
			\frac{1}{4}(|\bar{J}| + |\bar{I}| + |\bar{I}'|) + \frac{1}{4} (|\bar{I}| + |\bar{I}'|) \\
			& \geq \frac{1}{4} \cdot (2 d - |S| - |T|) + \frac{1}{4} \cdot (2 (\kappa - d) + |S| + |T|) = \frac{\kappa}{2}
		\end{align*}
		which completes the proof.
	\end{proof}
	
	Now $\frac{1}{4}|\bar{J}| + \frac{1}{2} (|\bar{I}| + |\bar{I}'|) \geq \frac{\kappa}{2}$ implies that $(s_1 s_1')^\frac{1}{4} (s_2 s_2')^\frac{1}{2} \geq s^\frac{\kappa}{2}$ and substituting into (\ref{eq:k_norm_4}) results in
	\[
	\|\bd{B}^{(S, T)}\|_{\bar{I}_1, \dots, \bar{I}_\kappa} \leq \frac{4 \sqrt{2} \delta}{s^\frac{\kappa}{2}} \|\bd{x}\|_2^2.
	\]
	
	With the moment bound (Theorem \ref{thm:subgaussian_decoupled}), this implies that for $p \geq 2$,
	\[
	\|X^{(S, T)}\|_{L_p} \leq C_1(d) \sum_{\kappa = 1}^{2 d} p^\frac{\kappa}{2} \frac{\delta}{s^\frac{\kappa}{2}}
	\]
	with $C_1(d)$ depending only on $d$ and we obtain
	\[
	\|X\|_{L_p} \leq \sum_{S, T \subset [d]} \|X^{(S, T)}\|_{L_p} \leq C_2(d) \sum_{\kappa = 1}^{2 d} p^\frac{\kappa}{2} \frac{\delta}{s^\frac{\kappa}{2}}.
	\]
	
	This moment bound holds for all $\|\bd{x}\|_2 = 1$ so by Theorem \ref{thm:decoupling_2}, we also obtain
	\[
	\|\tilde{X} - \mathbb{E} \tilde{X} \|_{L_p} \leq C_3(d) \sum_{\kappa = 1}^{2 d} p^\frac{\kappa}{2} \frac{\delta}{s^\frac{\kappa}{2}}.
	\]
	
	\subsubsection{Completing the proof}
	
	To bound the expectation, note that
	\begin{align*}
		|\mathbb{E} \tilde{X} |
		& = \left| \mathbb{E} \left[
		\sum_{\bd{i}, \bd{i}' \in \Jn} B_{\bd{i} \dot+ \bd{i}'} x_{\bd{i}} x_{\bd{i}'} \prod_{l \in [d]} \xi^{(l)}_{\bd{i}_l} \xi^{(l)}_{\bd{i}'_l} \right] \right|
		= \left| \sum_{\bd{i}, \bd{i}' \in \Jn} B_{\bd{i} \dot+ \bd{i}'} x_{\bd{i}} x_{\bd{i}'} \prod_{l \in [d]} \mathbb{E}[\xi^{(l)}_{\bd{i}_l} \xi^{(l)}_{\bd{i}'_l}] \right| \\
		& = \left| \sum_{\bd{i}, \bd{i}' \in \Jn} B_{\bd{i} \dot+ \bd{i}'} x_{\bd{i}} x_{\bd{i}'} \prod_{l \in [d]} \mathbbm{1}_{\bd{i}_l = \bd{i}'_l} \right|
		= \left| \sum_{\bd{i} \in \Jn} B_{\bd{i} \dot+ \bd{i}} x_{\bd{i}}^2 \right|
		\leq \max_{\bd{i} \in \Jn} |B_{\bd{i} \dot+ \bd{i}}| \cdot \|\bd{x}\|_2^2 \\
		& = \max_{j \in [N]} |(\Phi^* \Phi - Id_N)_{j j}|.
	\end{align*}
	
	Applying the RIP of $\Phi$ to the $1$-sparse canonical basis vectors $e_1, \dots, e_N$, this implies
	\begin{align*}
		|\mathbb{E} \tilde{X} | \leq \max_{j \in [N]} |e_j^* (\Phi^* \Phi - Id_N) e_j|
		= \max_{j \in [N]} \left| \|\Phi e_j\|_2^2 - \|e_j\|_2^2 \right| \leq \delta,
	\end{align*}
	such that $|\tilde{X}| \leq |\tilde{X} - \mathbb{E}\tilde{X}| + \delta$.
	Then using Lemma \ref{lem:moment_tail_bound} with the bound on $\|\tilde{X} - \mathbb{E} \tilde{X} \|_{L_p}$, we obtain for $\epsilon \geq \delta$,
	\begin{align*}
		\mathbb{P}(|\tilde{X}| > \epsilon) & \leq e^2 \exp\left(-\min_{\kappa \in [2d]} \left( \frac{(\epsilon - \delta) s^\frac{\kappa}{2} }{C_4(d) \delta} \right)^\frac{2}{\kappa} \right) \\
		& = e^2 \exp\left(- s \min_{\kappa \in [2d]} \left( \frac{\epsilon - \delta}{C_4(d) \delta} \right)^\frac{2}{\kappa} \right)
	\end{align*}
	
	So for choosing $\epsilon := \left[ C_4(d) + 1 \right] \delta$, we obtain
	\[
	\mathbb{P}\left( \left| \|A \vect(\bd{x})\|_2^2 - \|\bd{x}\|_2^2 \right| > \epsilon \right) \leq e^2 e^{-s}.
	\]
	
	By choosing $s \geq \log\frac{1}{\eta} + 2$, this is $\leq \eta$ which completes the proof.

	\section{Lower bounds} \label{sec:lower_bounds}
	
	The goal of this section is to show that our results, especially Corollary \ref{cor:jl_hadamard} that we obtain for Hadamard matrices, are optimal with respect to the probability $\eta$. To do this, we apply the Tensor randomized subsampled Hadamard transform to a set of $p$ points. By a union bound and Corollary \ref{cor:jl_hadamard}, this randomized transform simultaneously preserves the norms of $p$ vectors simultaneously with probability $1 - \nu$ if
	\[
	m \geq C(d) \epsilon^{-2} \left(\log \frac{p}{\nu}\right)^d \left( \log \frac{C(d)}{\epsilon} \right)^2 \log N \left(\log \frac{C(d) \log\frac{p}{\nu}}{\epsilon} \right)^2
	\]
	and
	\[
	N \geq \frac{1}{\nu^{C_1 d \log \log \frac{p}{\nu}}}.
	\]
	We will prove that the dependence $m \gtrsim (\log p)^d$ on $p$ (neglecting double logarithmic factors) is optimal.
	
	Regard the Hadamard transform $H = \mathbb{R}^{N \times N}$ as the Fourier transform on $\mathbb{F}_2^n$ where $N = 2^n$.
	That is,
	\[
	H_{j k} = (H_n)_{j k} = \frac{1}{\sqrt{N}} (-1)^{\langle j - 1, k - 1 \rangle_b},
	\]
	where $\langle a, b \rangle_b$ denotes the inner product of the binary representations of $a$ and $b$.
	
	Our approach is based on the special behavior of the Hadamard matrix on indicator vectors of subspaces of $\mathbb{F}_2^n$. This principle has been used before to show lower bounds for the restricted isometry property of subsampled Hadamard matrices in \cite{blasiok2019improved} and was then adapted to Johnson-Lindenstrauss embeddings in \cite{bamberger2021optimal}.
	
	Denote $\mathbb{G}_{n, r}$ for the set of all $r$-dimensional subspaces of $\mathbb{F}_2^n$. For any subset $M \subset \mathbb{F}_2^n$ we write $\mathbbm{1}_M \in \mathbb{R}^{N}$ for the indicator vector of $M$ normalized such that $\| \mathbbm{1}_M \|_2 = 1$. With this notation, it holds for any $V \in \mathbb{G}_{n, r}$ that (see Lemma II.1 in \cite{blasiok2019improved})
	\[
	H \mathbbm{1}_V = \mathbbm{1}_{V^\perp}
	\]

	Let $P_\Omega  \in \mathbb{R}^{m \times N}$ be the matrix representing subsampling $m$ out of $N$ entries independently and uniformly with replacement and rescaling by $\sqrt\frac{N}{m}$.

	Let $N = N_1 \cdot \dots \cdot N_d$, $N_j = 2^{n_j}$ for $1 \leq j \leq d$. Consider the matrix
	\[
	A = P_\Omega F D_\xi
	\]
	where $\xi = \xi^{(1)} \otimes \dots \otimes \xi^{(d)}$ is a Kronecker product of $d$ Rademacher vectors, $\xi^{(j)} \in \{\pm 1\}^{N_j}$ and $F \in \mathbb{R}^{N \times N}$ is a bounded orthonormal matrix.
	
	Let $2 \leq r \leq \min \{n_1, \dots, n_d \}$ and $s = 2^r$. For each $1 \leq j \leq d$ consider a subspace $V_j \in \mathbb{G}_{n_j, r}$. By taking $F := H_{n_1} \otimes \dots \otimes H_{n_d}$ and $x := \mathbbm{1}_{V_1} \otimes \dots \otimes \mathbbm{1}_{V_d}$, we obtain
	\[
	y = F x = (H_{n_1} \mathbbm{1}_{V_1}) \otimes \dots \otimes (H_{n_d} \mathbbm{1}_{V_d}) =
	\mathbbm{1}_{V_1^\perp} \otimes \dots \otimes \mathbbm{1}_{V_d^\perp}.
	\]
	
	The vector $y$ has $\frac{N_1}{s} \cdot \dots \cdot \frac{N_d}{s} = \frac{N}{s^d}$ entries of size $\sqrt\frac{s^d}{N}$. In subsampling with replacement, each selected entry is $\sqrt\frac{s^d}{N}$ with probability $\frac{1}{s^d}$ and $0$ with probability $1 - \frac{1}{s^d}$. Then
	\[
	\mathbb{P}(P_\Omega y = 0) = (1 - \frac{1}{s^d})^m \geq \exp(- \frac{2 m}{s^d}).
	\]
	
	Now consider the set $E := \{(D_{\hat{\xi}^{(1)}} \mathbbm{1}_{V_1}) \otimes \dots \otimes (D_{\hat{\xi}^{(d)}} \mathbbm{1}_{V_d}) | \hat{\xi}^{(1)} \in \{\pm 1\}^{N_1}, \dots, \hat{\xi}^{(d)} \in \{\pm 1\}^{N_d}  \}$. Corresponding to each factor, there are $2^s$ sign patterns such that $p := |E| \leq 2^{d s}$.
	
	With respect to the matrix $A = P_\Omega F D_\xi$, we note that for any value of the random vector $\xi$, there exists $\hat{x} \in E$ such that $D_\xi \hat{x} = x$. Then $A \hat{x} = P_\Omega F D_\xi \hat{x} = P_\Omega y$. We obtain $A \hat{x} = 0$ with probability $\geq$
	\begin{equation}
		\exp\left(- \frac{2 m}{s^d} \right) = \exp\left(- 2 m \left(\frac{d \log 2}{\log p} \right)^d \right) \label{prob_bound}
	\end{equation}
	with respect to the randomness in $P_\Omega$.
	
	Altogether, with the probability (\ref{prob_bound}),
	\[
	\sup_{x \in E} \left| \|A x\|_2 - 1 \right| \geq 1,
	\]
	i.e., the Johnson-Lindenstrauss condition is violated.
	
	To achieve that (\ref{prob_bound}) is $\leq \nu$, we need that $m \geq \frac{1}{2} (\log\frac{1}{\nu}) (\frac{\log p}{d \log 2})^d$.

	\section{Conclusions and implications for oblivious sketching} \label{sec:discussion}
	
	Our approach provides a sharp generalization of the near equivalence  between Johnson-Lindenstrauss property and restricted isometry property from \cite{riptojl}; the special case $d = 1$ in our work recovers the result of \cite{riptojl}.  We prove the Johnson-Lindenstrauss property without any assumption on the vectors it is applied to, i.e., it is not necessary for them to have Kronecker structure.   As Section \ref{sec:lower_bounds} shows, Corollary \ref{cor:jl_hadamard} is optimal with respect to the dependence on the probability $\eta$ even for vectors with Kronecker structure, implying that even for this case, the dependence on the required sparsity level $s$ on $\eta$ in Theorem \ref{thm:main} is optimal.
	
	With this provably optimal $\eta$ dependence, Corollary \ref{cor:jl_hadamard} also provides an improvement compared to Lemma 4.11 in \cite{ahle2020oblivious}. In that work, the construction $P_\Omega H D_\xi$ as in Corollary \ref{cor:jl_hadamard} is introduced as TensorSRHT and is used as one element of a more extensive fast embedding for vectors with Kronecker structure which allows for a computational complexity that is only polynomial in the degree $d$. This embedding is based on a tree structure. Starting from a vector $x = x^{(1)} \otimes \dots \otimes x^{(d)}$ with Kronecker structure, first a sparse Johnson-Lindenstrauss transform (OSNAP) is applied to each $x^{(j)}$ from $n$ to $m \geq m_1 = \Theta(\epsilon^{-2} \log\frac{1}{\eta})$ dimensions (Lemma 4.8 in \cite{ahle2020oblivious}). Subsequently the TensorSRHT is applied to $\frac{d}{2}$ pairs of these vectors, reducing the corresponding Kronecker products of two factors separately. In this way, the result is a reduced Kronecker product of $\frac{d}{2}$ factors. This reduction is applied successively until only a single factor remains at the end. In each level, the TensorSRHT acts as an embedding $\R^{m^2} \rightarrow \R^m$ for a suitable $m \geq m_1$. As such, the dimension is reduced from $m^d$ to $m$ after the application of OSNAP.
	
	Observe that this construction uses the setup of Corollary \ref{cor:jl_hadamard} for the case $d = 2$ and $N = m^2$.
	Choose 
	\[
	m := \left\lceil C \epsilon^{-2} \left( \log \frac{1}{\epsilon} \right)^2 \left( \log \frac{1}{\eta} \right)^2 \left( \log \frac{\log\frac{1}{\eta}}{\epsilon}\right)^3 \right\rceil.
	\]
	Then for sufficiently large $C$, $m \geq m_1$ such that OSNAP provides a suitable embedding $\R^{m^2} \rightarrow \R^{m}$. Also, as required by the aforementioned construction, after choosing the RIP matrix with constant success probability, Corollary \ref{cor:jl_hadamard} provides an embedding $\R^{m^2} \rightarrow \R^{m}$ satisfying the $(\epsilon, \eta)$-distributional Johnson-Lindenstrauss property since the required embedding dimension is
	{
		\begin{align*}
			m'  &= C' \epsilon^{-2} \left( \log\frac{1}{\epsilon} \right)^2 \left(\log\frac{1}{\eta}\right)^2 \left(\log (m^2) \right) \left( \log\frac{\log\frac{1}{\eta}}{\epsilon} \right)^2 \\
			& \leq 2 \tilde{C}' (\log C) \epsilon^{-2} \left( \log\frac{1}{\epsilon} \right)^2 \left(\log\frac{1}{\eta}\right)^2 \left( \log\frac{\log\frac{1}{\eta}}{\epsilon} \right)^3
		\end{align*}
	}
	which is $\leq m$ for sufficiently large $C$. So omitting $\log\frac{1}{\epsilon}$ and $\log\log\frac{1}{\eta}$ factors, our result requires an embedding dimension $m$ of $\Omega\left(\epsilon^{-2} \left(\log\frac{1}{\eta} \right)^2 \right)$ compared to the dimension $\Omega\left(\epsilon^{-2} \left(\log\frac{1}{\eta} \right)^3 \right)$ in \cite{ahle2020oblivious}. Thus, our result leads to both an improved embedding power and, consequently, an improved computational complexity of the tensor computation procedure. 

	
	\section*{Acknowledgments}
	R.W. is supported by AFOSR MURI FA9550-19-1-0005, NSF DMS 1952735, and NSF IFML 2019844. S.B. and F.K. have been supported by the German Science Foundation (DFG) in the context of the Emmy-Noether Junior Research Group KR4512/1-2.
	R.W. and F.K. gratefully acknowledge support from the Institute for Advanced Study, where this project was initiated.


	\printbibliography

\end{document}